\newtheorem{theorem}{Theorem}[section]
\newtheorem{proposition}[theorem]{Proposition}
\newtheorem{lemma}[theorem]{Lemma}
\newtheorem{remark}[theorem]{Remark}
\newcounter{systnumber} 
\numberwithin{systnumber}{section} 
\date{}
\begin{document}

\markboth{\small{N. Noutchegueme and  A.
Nangue}}{\small{Einstein-Maxwell-Massive Scalar Field System in 3+1
formulation}}
 \centerline{}


\begin{center}
\Large{{\bf Einstein-Maxwell-Massive Scalar Field System in 3+1
formulation on Bianchi Spacetimes type I-VIII}}
\end{center}

\centerline{}

\centerline{\bf {Norbert Noutchegueme}} \centerline{University of
Yaounde I} \centerline{Department of Mathematics, Faculty of
Science, POB: 812, Yaounde, Cameroon} \centerline{nnoutch@yahoo.fr}
\centerline{}

\centerline{\bf {Alexis Nangue}}
 \centerline{University of Yaounde
I} \centerline{Department of Mathematics, Faculty of Science, POB:
812, Yaounde, Cameroon} \centerline{alexnanga02@yahoo.fr}

\begin{abstract}
Global existence to the coupled Einstein-Maxwell-Massive Scalar
Field system which rules the dynamics of a kind of charged pure
matter in the presence of a massive scalar field is proved, in
Bianchi I-VIII spacetimes; asymptotic behaviour, geodesic
completeness, energy conditions are investigated in the case of a
cosmological constant bounded from below by a strictly negative
constant depending only on the ma-\\ssive scalar field.
\end{abstract}
\noindent
{\bf MR Subject Classification}: 83CXXX \\
{\bf PACS Number}: 04.20-9 \\\\

{\bf Keywords}: Global existence, local existence, massive scalar
field, diffe-\\rential system, charged particles, constraints,
asymptotic behaviour, energy conditions.

\section*{Introduction}\label{intro}
\indent

 In relativistic kinetic theory, global dynamics of
several kinds of charged and uncharged matter remain an active
research domain in General Relativity (GR), in which cosmology plays
one of the central roles, by coupling various matter fields to the
Einstein equations to provide mathematical explanations in response
to new astrophysical observations. In this context, spatially
homogeneous phenomena such as the one we consider in the present
paper are relevant. There are several reasons why it is of interest
to consider the Einstein equations with cosmological constant and to
couple the equations to a massive scalar field.

$\bullet$ Astrophysical observations have made evident the fact
that, even in the presence of material bodies, the gravitational
field can propagate through space at the speed of the light,
analogously to electromagnetic waves. A mathema-\\tical way to model
this phenomenon is to couple a scalar field to the Einstein
equations. Let us recall that the Nobel prize of Physics 1993 was
awarded for works on this subject. More details on this question can
be found in \cite{5}, \cite{21}. In fact, several authors realized
the interest of coupling scalar field to other fields equations; see
for instance \cite{4}, \cite{18}, \cite{20}, \cite{24}, \cite{8}.

$\bullet$ Now our motivation for considering the Einstein equations
with a cosmological constant $\Lambda$ is due to the fact that
astrophysical observations, based on luminosity via redshift plots
of some far away objets such as Supernovae-Ia, have made evident the
fact that the expansion of the universe is accelerating, as foreseen
by E.P. Hubble. A classical mathematical tool to model this
phenomenon is to include the cosmological constant $\Lambda$ in the
Einstein Equations. Several authors did it in the case $\Lambda >
0$; see for instance \cite{14}, \cite{7}, \cite{15}, \cite{22},
\cite{23}, \cite{26}. In the present paper, we prove that, in the
presence of a massive scalar field, this result can be extended, not
only to the case $\Lambda = 0$, but also to the case $\Lambda > -
\alpha ^2$, where $\alpha>0$ is a constant depending only on the
potential of the massive scalar field. This result extends and
completes those of \cite{20}.

 Also recall that the recent Nobel prize of Physics, 2011,
was awarded to three Astrophysicists for their advanced research on
this phenomenon of accelerated expansion of the universe.

In fact, we must point out that, the notion of "dark energy" was
introduced in order to provide a physical explanation to this
phenomenon, but the physical structure of this hypothetical form of
energy which is unknown in the laboratories remains an open question
in modern cosmology; so is the question of "dark matter". Also
notice that the scalar fields are considered to be a mechanism
producing accelerated models, non only in "inflation", which is a
variant of the Big-Bang theory including now a very short period of
very high acceleration, but also in the primordial universe.

In this paper, we consider the 3+1 formulation of the Einstein
equations, which allows to interpret the fields equations as the
time history of the first and second fundamental forms of the
3-hypersurfaces of constant times slices, foliating the space-time.
The background space-time is any Bianchi space-time type I to VIII,
since it is proved in \cite{17} that Bianchi IX such as the
Kantowski-Sachs space-time, develops curvature singularities in a
finite proper time, and this  constitutes a major obstacle to our
goal which is to proved global in time  existence of solutions. We
prove in this paper that if the initial value of the mean curvature
is strictly negative and if $\Lambda > -\alpha^2$, then the coupled
Einstein-Maxwell-Scalar Field System has a global in time-solution.
We investigate the asymptotic behaviour which reveals an exponential
growth of the gravitational potentials, confirming the accelerated
expansion of the universe. We prove the geodesic completeness and we
were able to show that the considered model always satisfies the
weak and the dominant energy conditions; we prove that, if $\Lambda>
\beta^2$ where $\beta> 0$ is a constant depending only on the mean
curvature of the space-time, then the considered model also
satisfies the strong energy condition.

The paper is organized as follows:

\begin{itemize}
    \item In section~\ref{para1}, we introduce the coupled system and we
    give some preli-\\minary results.
    \item In section~\ref{para2}, we study the constraints equations, the mean
    curvature and we introduce the Cauchy problem.
    \item In section ~\ref{para3}, we prove the local and the global existence
    of solutions.
    \item In section ~\ref{para4}, we study the asymptotic behaviour.
    \item In section ~\ref{para5}, we study the geodesic completeness.
    \item In section ~\ref{para6}, we study the energy conditions.
    \item Section ~\ref{para7} is the appendix to which we refer for the
    details of the proofs of some important results.
\end{itemize}
\section{Equations and preliminary results}
\label{para1}
\begin{enumerate}
    \item[$\bullet$] Unless otherwise specified, Greek indices $\alpha$,
    $\beta$, $\lambda$, ..., range from 0 to 3 and Latin indices $i$, $j$, $k$,
    ...,
    from 1 to 3. We adopt the Einstein summation convention $a_{\alpha}b^{\alpha}
    \equiv \sum\limits_{\alpha}a_{\alpha}b^{\alpha}$.\\
We consider a time-oriented space-time $(M, \widetilde{g})$ where M
is a four-dimensional manifold and $\widetilde{g}$ the metric tensor
of lorentzian signature $(-, +, +, +)$. The model adopted for our
study is any Bianchi space-time from type I to VIII, and, following
\cite{17}, \cite{19}, \cite{25}, \cite{27}, who studied the
question, we take M on the form: $M= \mathbb{R}\times G$, where G is
a three dimensional simply connected Lie group. We take
$\widetilde{g}$ on the form:
\begin{equation}\label{1.1}
\widetilde{g}=- dt^2+g_{ij}(t)e^{i}\otimes e^{j}
\end{equation}
where $\{e_{i}\}$ is a left invariant frame on G and $\{e^{i}\}$ is
the dual frame; $(g_{ij})$ is a positive definite Riemannian
3-metric depending only on $t$. Now the vector $n=\partial_{t}$
being orthogonal to $G$, we complete the frame $(e_{i})$ on G, to
obtain a frame $(n, e_{i})$ on M. We have $$\widetilde{g}(n,
e_{i})=\widetilde{g}_{0i}=0,$$ confirming the form (\ref{1.1}) of
$\widetilde{g}$.
    \item[$\bullet$]The Einstein-Maxwell-Massive scalar field system with
    cosmological constant $\Lambda$, which rules the evolution of the
    considered charged pure matter can be written, following
    \cite{9}, and denoting with a tilda $(\widetilde{\;\;})$
    quantities on M:
\begin{numcases}
\strut
 \widetilde{R}_{\alpha \beta}-
\frac{1}{2}\widetilde{R}\widetilde{g}_{\alpha \beta}+ \Lambda
\widetilde{g}_{\alpha \beta}= 8\pi( \widetilde{T}_{\alpha \beta}+\widetilde{\tau}_{\alpha \beta}
+\rho \widetilde{u}_{\alpha}\widetilde{u}_{\beta}) \label{1.2}\\
\widetilde{\nabla}_{\alpha} \widetilde{F}^{\alpha\beta}=4\pi\widetilde{J}^{\beta}\label{1.3}\\
 \widetilde{\nabla}_{\alpha} \widetilde{F}_{\beta\gamma}+ \widetilde{\nabla}_{\beta} \widetilde{F}_{\gamma\alpha}
  +\widetilde{\nabla}_{\gamma} \widetilde{F}_{\alpha\beta}=0
 \label{1.4}
 \end{numcases}
where:
\begin{itemize}
    \item[-] (\ref{1.2}) are the Einstein equations, basic equations
    in GR for the unknown metric tensor
    $\widetilde{g}=(\widetilde{g}_{\alpha\beta})$; $\widetilde{R}_{\alpha \beta}$
    is the Ricci tensor, contracted of the curvature tensor, $\widetilde{R}=
    \widetilde{g}^{\alpha\beta}
    \widetilde{R}_{\alpha\beta}$ is the scalar curvature; $\widetilde{T}_{\alpha \beta}$
    and $\widetilde{\tau}_{\alpha \beta}$ whose expressions are
    given below, are respectively, the tensor associated to a
    massive scalar field $\Phi$ which is an unknown function of the
    time $t$, and $\widetilde{\tau}_{\alpha \beta}$ the Maxwell
    tensor associated to the electromagnetic field $\widetilde{F}$.
    \item[-] (\ref{1.3}) and (\ref{1.4}) are the two sets of
    Maxwell equations, basic equations of Electromagnetism, written in covariant form for the
    electromagnetic field  $\widetilde{F}=(\widetilde{F}^{0i},\widetilde{ F}_{ij})$ which is
    a closed unknown antisymmetric 2-form, depending only on the
    time $t$. $\widetilde{F}^{0i}$ and $\widetilde{ F}_{ij}$ are
    respectively its electric and magnetic parts.
$\widetilde{T}_{\alpha \beta}$ and $\widetilde{\tau}_{\alpha \beta}$
are defined by:
\begin{numcases}
\strut \widetilde{T}_{\alpha
\beta}=\widetilde{\nabla}_{\alpha}\Phi\widetilde{\nabla}_{\beta}\Phi
-\frac{1}{2}\widetilde{g}_{\alpha\beta}\Big[\widetilde{\nabla}^{\lambda}\Phi\widetilde{\nabla}_{\lambda}\Phi
+m^2\Phi^2\Big]  \label{1.5}\\
\widetilde{\tau}_{\alpha
\beta}=-\frac{1}{4}\widetilde{g}_{\alpha\beta}\widetilde{F}^{\lambda\mu}\widetilde{F}_{\lambda\mu}+
\widetilde{F}_{\alpha\lambda}\widetilde{F}^{\;\;\;\lambda}_{\beta}\label{1.6}
 \end{numcases}
where in \eqref{1.5} as in \eqref{1.3} and \eqref{1.4},
$\widetilde{\nabla}$ stands for the covariant derivative, or the
Levi-Civita connection in $\widetilde{g}$, $m>0$ is a given constant
called the \textit{mass} of the scalar field $\Phi$; notice that
$\frac{1}{2}m^2\Phi^2$ represents the potential associated to the
scalar field $\Phi$.
\item[-] In (\ref{1.2}), $\rho
\widetilde{u}_{\alpha}\widetilde{u}_{\beta}$ is the tensor
associated to the considered charged pure matter, with $\rho>0$ an
unknown scalar function of the time $t$, standing for the pure
matter proper density, and $\widetilde{u}=(\widetilde{u}^{\beta})$ a
time-like future pointing unit vector which is an unknown function
of the time $t$, representing the material velocities.
    \item[-] In (\ref{1.3}), $\widetilde{J}=(\widetilde{J}^{\beta})$
    stands for the Maxwell current generated by the charged
    particles of pure matter and defined by:
    \begin{equation}\label{1.7}
\widetilde{J}^{\beta}=e \widetilde{u}^{\beta}
    \end{equation}
in which $e\geq 0$ is an unknown scalar function of the time $t$,
standing for the proper charged density of the charged particles.\\
Notice that the Maxwell equations \eqref{1.4} are just the covariant
notation of the relation $d \widetilde{F}=0$, since
$\widetilde{F}$ is a closed 2-form.\\
Now it is well known, see \cite{12}, that the electromagnetic field
de-\\viates the trajectories of the charged particles which are no
longer the geodesics of the space-time as in the empty case, but the
solutions of the following differential system of current flow:

\begin{equation}\label{1.8}
\widetilde{u}^{\alpha}\widetilde{\nabla}_{\alpha}\widetilde{u}^{\beta}=
4\pi\frac{e}{\rho}\widetilde{F}_{\lambda}^{\;\;\;\beta}\widetilde{u}^{\lambda}
\end{equation}

which reduces to the usual geodesics system when $\widetilde{F}=0$.
\end{itemize}

    \item[$\bullet$]To study the Einstein equations \eqref{1.2},
    we adopt the 3+1 formulation, which allows to interpret these
    equations as the evolution in time of the triplet $(\sum_{t}, g_{t},
    k_{t})$, where $\sum_{t}=\{t\}\times G$, $g_{t}=(g_{ij}(t))$
    stands for the first fundamental form induced on $\sum_{t}$ by
    $\widetilde{g}$, $k_{t}=(k_{ij}(t))$ is the second fundamental
    form defined in the present case by:
\begin{equation}\label{1.9}
    k_{ij}=-\frac{1}{2}\partial_{t}g_{ij}
\end{equation}
where we adopt the ADM/MTV convention see \cite{1},
\cite{6}.\\
By the 3+1 formulation, the Einstein equations \eqref{1.2} which
originally, in the homogeneous case we consider, are a non linear
\textit{second order} differential system in $g_{ij}$, can be
written as an equivalent \textit{first order} differential
system in $(g_{ij}, k_{ij})$ to which standard theory applies.\\
We now introduce a quantity which will play a central role, namely,
the \textit{mean curvature} of the space-time which is a scalar
function denoted by H and defined by:
\begin{equation}\label{1.10}
    H=g^{ij}k_{ij}
\end{equation}
that is the trace of $(k_{ij})$. Next, since
$\widetilde{u}=(\widetilde{u}^{\beta})$ is a unit vector which means
$\widetilde{g}(\widetilde{u},
\widetilde{u})=\widetilde{u}_{\alpha}\widetilde{u}^{\alpha}=-1$, we
deduce from the expression \eqref{1.1} of $\widetilde{g}$ that:
\begin{equation}\label{1.11}
\widetilde{u}^{0}=\sqrt{1+g_{ij}\widetilde{u}^{i}\widetilde{u}^{j}}
\end{equation}
because $\widetilde{u}$ is future pointing, and (\ref{1.11}) shows
that $g_{ij}$ and $u^{i}$ determine $\widetilde{u}^{0}$, and, since
$(g_{ij})$ is positive definite, that $\widetilde{u}^{0}\geq 1$.

    \item[$\bullet$]It is important to note that $(e_{i})$ is not
    always the natural frame $(\partial_{i})$ on G. We set
    $(\widetilde{e}_{\alpha})=(n, e_{i})=(\partial_{t}, e_{i})$:
    that is $\widetilde{e}_{0}=\partial_{t}$ and
    $\widetilde{e}_{i}=e_{i}$. The frames $(\widetilde{e}_{\alpha})$
    and $(\partial_{\alpha})$ are then link by:
\begin{equation}\label{1.12}
\widetilde{e}_{\alpha}=\widetilde{e}^{\lambda}_{\alpha}\partial_{\lambda}
\end{equation}
where
\begin{equation}\label{1.13}
\widetilde{e}^{0}_{0}=1;\;\;\widetilde{e}^{0}_{i}=\widetilde{e}^{i}_{0}=0;\;\;\widetilde{e}^{i}_{j}=e^{i}_{j}.
\end{equation}
Recall that the structure constants $C^{k}_{ij}$ of the Lie algebra
$\mathfrak{g}$ of the Lie group G are defined by:
\begin{equation}\label{1.14}
    [e_{i}, e_{j}]=C^{k}_{ij}e_{k}
\end{equation}
where $[\; ,\; ]$ denotes the Lie brackets of $\mathfrak{g}$. Since
the Lie brackets are antisymmetric, $C^{k}_{ij}$ is antisymmetric
with respect to $i$, $j$ that is:
\begin{equation}\label{1.15}
C^{k}_{ij}=-C^{k}_{ji}.
\end{equation}
Now the Ricci rotation coefficients
$\widetilde{\gamma}^{\lambda}_{\alpha\beta}$ associated to the
Levi-Civita connection $\widetilde{\nabla}$ are defined by:
\begin{equation}\label{1.16}
\widetilde{\nabla}_{\widetilde{e}_{\alpha}}\widetilde{e}_{\beta}=
\widetilde{\gamma}^{\lambda}_{\alpha\beta}\widetilde{e}_{\lambda}.
\end{equation}

We set
\begin{equation}\label{1.17}
 [\widetilde{e}_{\alpha}, \widetilde{e}_{\beta}]=\widetilde{C}^{\lambda}_{\alpha\beta}\widetilde{e}_{\lambda}
\end{equation}
where:
\begin{equation}\label{1.18}
\widetilde{C}^{0}_{\alpha\beta}=\widetilde{C}^{\lambda}_{0\beta}=
\widetilde{C}^{\lambda}_{\alpha0}=0;
\;\;\widetilde{C}^{k}_{ij}=C^{k}_{ij}.
\end{equation}
The general expression of
$\widetilde{\gamma}^{\lambda}_{\alpha\beta}$ in term of
$\widetilde{C}^{\lambda}_{\alpha\beta}$ and
$\widetilde{g}_{\alpha\beta}$ can be find in \cite{2} or in \cite{3}
p.301. It will be enough to extract and mention here only those of
the Ricci rotation coefficients used in the present paper, namely:
\begin{numcases}
\strut
 \widetilde{\gamma}^{\lambda}_{00}=\widetilde{\gamma}^{0}_{i0}=\widetilde{\gamma}^{0}_{0i}=0;\; \;
 \widetilde{\gamma}^{0}_{ij}=-k_{ij} ;\; \;
 \widetilde{\gamma}^{j}_{i0}=\widetilde{\gamma}^{j}_{0i}=-k_{i}^{\;j}\nonumber\\
  \widetilde{\gamma}^{l}_{ij}:=\gamma^{l}_{ij}=\frac{1}{2}g^{lk}\Big[-C^{m}_{jk}g_{im}+
  C^{m}_{ki}g_{jm}+C^{m}_{ij}g_{km}\Big].  \label{1.19}
 \end{numcases}
Notice that the $\widetilde{\gamma}^{\lambda}_{\alpha\beta}$ are not
to be confused with the usual Christoffel symbols
$\Gamma_{\alpha\beta}^{\lambda}$ associated to the natural frame
$(\partial_{\lambda})$. For instance, (\ref{1.19}) gives, using
$C^{k}_{ij}=-C^{k}_{ji}$:
\begin{equation}\label{1.20}
   \gamma^{i}_{ij}=C_{ij}^{i} ; \;\; \gamma^{l}_{ij}-\gamma^{l}_{ji}=C_{ij}^{l}
\end{equation}
which shows that, at the contrary of the
$\Gamma_{\alpha\beta}^{\lambda}$, the
$\widetilde{\gamma}^{\lambda}_{\alpha\beta}$ are not symmetric with
respect to $\alpha$ and $\beta$. We have the following formulae,
analogous to the natural frame case:
\begin{equation}\label{1.21}
\left\{
  \begin{array}{ll}
\widetilde{\nabla}_{\alpha} \widetilde{T}^{\beta}=
\widetilde{e}_{\alpha}(\widetilde{T}^{\beta})
  +\widetilde{\gamma}^{\beta}_{\alpha\lambda}\widetilde{T}^{\lambda} \\
\widetilde{\nabla}_{\alpha}
\widetilde{T}_{\beta}=\widetilde{e}_{\alpha}(\widetilde{T}_{\beta})
  -\widetilde{\gamma}^{\lambda}_{\alpha\beta}\widetilde{T}_{\lambda}.
  \end{array}
\right.
\end{equation}
Also mention the useful formula we prove in Appendix $A_{1}$:
\begin{equation}\label{1.22}
\frac{d}{dt}(\gamma_{ij}^{l})=\nabla^{l}k_{ij}-\nabla_{j}k_{i}^{l}-\nabla_{i}k_{j}^{l}
\end{equation}
where $\nabla$ is the Levi-Civita connection on $(G, g)$.

    \item[$\bullet$]A direct calculation using (\ref{1.21})
    shows that the components of the tensor $\widetilde{T}_{\alpha\beta}$
    defined by (\ref{1.5}) are given, setting
    $\dot{\Phi}=\frac{d\Phi}{dt}$, by:
\begin{equation}\label{1.23}
 \widetilde{T}_{00}=\frac{1}{2}\dot{\Phi}^{2} +
 \frac{1}{2}m^2\Phi^2; \;\; \widetilde{T}_{0i}=0;\;\;
 \widetilde{T}_{ij}=\frac{1}{2}g_{ij}(\dot{\Phi}^{2} -m^2\Phi^2).
\end{equation}
Concerning the Maxwell tensor $\widetilde{\tau}_{\alpha\beta}$
defined by (\ref{1.6}), we first obtain by a direct calculation:

\begin{equation}\label{1.24}
\left\{
  \begin{array}{ll}
\widetilde{F}^{\lambda\mu}\widetilde{F}_{\lambda\mu}=-2g_{ij}\widetilde{F}^{0i}\widetilde{F}^{0j}+g^{ik}g^{jl}\widetilde{F}_{ij}\widetilde{F}_{kl}
;\\
  \widetilde{F}_{i\lambda}\widetilde{F}_{j}^{\;\;\lambda}=-g_{ik}g_{jl}\widetilde{F}^{0k}\widetilde{F}^{0l}+g^{kl}\widetilde{F}_{ik}\widetilde{F}_{jl};\\
  \widetilde{F}_{0\lambda}\widetilde{F}_{j}^{\;\;\lambda}
  =-\widetilde{F}_{jk}\widetilde{F}^{0k}; \\
  \widetilde{F}_{0\lambda}\widetilde{F}_{0}^{\;\;\lambda}
  =g_{ij}\widetilde{F}^{0i}\widetilde{F}^{0j};
  \end{array}
\right.
\end{equation}
from where we deduce, using (\ref{1.6}), that:
\begin{numcases}
\strut
\widetilde{\tau}_{00}=\frac{1}{2}g_{ij}\widetilde{F}^{0i}\widetilde{F}^{0j}+\frac{1}{4}g^{ik}g^{jl}\widetilde{F}_{kl}\widetilde{F}_{ij} \nonumber \\
 \widetilde{\tau}_{0j}= -\widetilde{F}^{0k}\widetilde{F}_{jk}  \nonumber\\
 \widetilde{\tau}_{ij}=(\frac{1}{2}g_{ij}g_{kl}-g_{ik}g_{jl})\widetilde{F}^{0k}\widetilde{F}^{0l}-\frac{1}{4}g_{ij}g^{km}g^{nl}\widetilde{F}_{kl}\widetilde{F}_{mn}
   +g^{kl}\widetilde{F}_{ik}\widetilde{F}_{jl}.\label{1.25}
\end{numcases}
    \item[$\bullet$]Next, to derive the equations for the scalar
    field $\Phi$ and the matter density $\rho$, we use the
    conservation laws:
\begin{equation}\label{1.26}
    \widetilde{\nabla}_{\alpha}\widetilde{T}^{\alpha
\beta}+\widetilde{\nabla}_{\alpha}\widetilde{\tau}^{\alpha
\beta}+\widetilde{\nabla}_{\alpha}(\rho
\widetilde{u}^{\alpha}\widetilde{u}^{\beta})=0.
\end{equation}
A direct calculation using (\ref{1.5}), (\ref{1.6}), (\ref{1.21})
and the Maxwell equation (\ref{1.4}) gives:
\begin{equation}\label{1.27}
\widetilde{\nabla}_{\alpha}\widetilde{T}^{\alpha\beta}=
\widetilde{\nabla}^{\beta}\Phi(\widetilde{\nabla}_{\alpha}\widetilde{\nabla}^{\alpha}\Phi-m^2\Phi);
\;\;\widetilde{\nabla}_{\alpha}\widetilde{\tau}^{\alpha\beta}=
\widetilde{F}^{\beta}_{\;\;\lambda}\widetilde{\nabla}_{\alpha}\widetilde{F}^{\alpha\lambda}
\end{equation}
where $\widetilde{\nabla}_{\alpha}\widetilde{\nabla}^{\alpha}$ often
denoted $\Box_{\widetilde{g}}$ is the d'Alembertian or the wave
operator.\\
Now (\ref{1.26}) and (\ref{1.27}) give, using the Maxwell equation
(\ref{1.3}):
\begin{equation}\label{1.28}
\widetilde{\nabla}^{\beta}\Phi(\widetilde{\nabla}_{\alpha}\widetilde{\nabla}^{\alpha}\Phi-m^2\Phi)+
4\pi e
\widetilde{F}^{\beta}_{\;\;\lambda}\widetilde{u}^{\lambda}+\widetilde{u}^{\beta}\widetilde{\nabla}_{\alpha}(\rho
\widetilde{u}^{\alpha})+(\rho
\widetilde{u}^{\alpha})\widetilde{\nabla}_{\alpha}\widetilde{u}^{\beta}=0
\end{equation}
(\ref{1.28}), reduces using (\ref{1.8}) to:
\begin{equation}\label{1.29}
\widetilde{\nabla}^{\beta}\Phi(\widetilde{\nabla}_{\alpha}\widetilde{\nabla}^{\alpha}\Phi-m^2\Phi)
+\widetilde{u}^{\beta}\widetilde{\nabla}_{\alpha}(\rho
\widetilde{u}^{\alpha})=0.
\end{equation}
But it is easily seen that $\widetilde{\nabla}^{i}\Phi=0$; so
(\ref{1.29}) gives for $\beta=i$,
$\widetilde{u}^{i}\widetilde{\nabla}_{\alpha}(\rho
\widetilde{u}^{\alpha})=0 $ which is satisfied for every $i=1,\,
2,\, 3$ if:
\begin{equation}\label{1.30}
\widetilde{\nabla}_{\alpha}(\rho \widetilde{u}^{\alpha})=0.
\end{equation}
Now (\ref{1.29}) gives, since $\widetilde{\nabla}^{i}\Phi=0$ and
using (\ref{1.30}):
\begin{equation}\label{1.31}
\widetilde{\nabla}^{0}\Phi(\widetilde{\nabla}_{\alpha}\widetilde{\nabla}^{\alpha}\Phi-m^2\Phi)=0.
\end{equation}
It then appears that the conservation laws (\ref{1.26}) will be
satisfied if equations (\ref{1.30}) and (\ref{1.31}) in $\rho$ and
$\Phi$ are.
    \item[$\bullet$]Now a direct calculation shows that equation
    (\ref{1.30}) in $\rho$ writes: $$\dot{\rho}=-\Big(\frac{\dot{\widetilde{u}}^{0}}{\widetilde{u}^{0}}
    -k^{i}_{i}+C_{ij}^{i}\frac{\widetilde{u}^{j}}{\widetilde{u^{0}}}\Big)\rho,$$
    which solves at once over $[0, t]$, $t>0$, to give:
\begin{equation}\label{1.32}
 \rho=\frac{\rho(0)
 \widetilde{u}^{0}(0)}{\widetilde{u}^{0}}\exp\Big[\int_{0}^{t}(k^{i}_{i}
 -C_{ij}^{i}\frac{\widetilde{u}^{j}}{\widetilde{u^{0}}})(s)ds\Big].
\end{equation}
(\ref{1.32}) shows that $(\rho(0)>0)\Longrightarrow (\rho>0) $. In
what follows we set:
\begin{equation}\label{1.32bis}
\rho(0)>0.
\end{equation}
Next, it is easily seen, using (\ref{1.21}), that
$$\widetilde{\nabla}_{\alpha}\widetilde{\nabla}^{\alpha}\Phi=-\ddot{\Phi}+H\dot{\Phi},$$
so that equation (\ref{1.31}) in $\Phi$ can be written, using
$\widetilde{\nabla}^{0}\Phi=\widetilde{g}^{00}\widetilde{\nabla}_{0}\Phi=-\dot{\Phi}$:
\begin{equation}\label{1.33}
\dot{\Phi}\ddot{\Phi}-H(\dot{\Phi})^2+m^2\Phi\dot{\Phi}=0.
\end{equation}
To study this non linear second order equation in $\Phi$, we set:
\begin{equation}\label{1.34}
U=\frac{1}{2}(\dot{\Phi})^2.
\end{equation}
We choose to look for a non-decreasing scalar field $\Phi$, which
means $\dot{\Phi}\geq0$; (\ref{1.34}) then gives:
\begin{equation}\label{1.35}
    \dot{\Phi}=\sqrt{2U}.
\end{equation}
    \item[$\bullet$] Next, to derive equation in $e$, we use the
    Maxwell equation (\ref{1.3}), which gives, using the identity
    $\widetilde{\nabla}_{\alpha}\widetilde{\nabla}_{\beta}\widetilde{F}^{\alpha\beta}=0$
     and the expression (\ref{1.7}) of $\widetilde{J}^{\beta}$:
     \begin{equation}\label{1.36}
\widetilde{\nabla}_{\alpha}(e \widetilde{u}^{\alpha})=0
     \end{equation}
(\ref{1.36}), yields, similarly to (\ref{1.30}):
\begin{equation}\label{1.36bis}
 e=\frac{e(0)
 \widetilde{u}^{0}(0)}{\widetilde{u}^{0}}\exp\Big[\int_{0}^{t}(k^{i}_{i}
 -C_{ij}^{i}\frac{\widetilde{u}^{j}}{\widetilde{u^{0}}})(s)ds\Big]
\end{equation}
by (\ref{1.36bis}): $(e(0)\geq 0)\Longrightarrow e\geq 0$. In what
follows we set:
\begin{equation}\label{1.36ter}
    e(0)\geq 0.
\end{equation}

    \item[$\bullet$]We now write in details the Maxwell equations
    (\ref{1.3}), (\ref{1.4}) in $\widetilde{F}$; (\ref{1.3})
    gives for $\beta=i$ and $\beta=0$, and using the expression
    (\ref{1.7}) of $\widetilde{J}^{\beta}$
\begin{numcases}
\strut
 \widetilde{\nabla}_{\alpha}\widetilde{F}^{\alpha i}=4 \pi
 e\widetilde{u}^{i}\label{1.37}\\
\widetilde{\nabla}_{\alpha}\widetilde{F}^{\alpha 0}=4 \pi
 e\widetilde{u}^{0}.\label{1.38}
\end{numcases}
\begin{itemize}
    \item[-] Applying formulae (\ref{1.21}) yields:
    \begin{equation}\label{1.39}
\widetilde{\nabla}_{\alpha}\widetilde{F}^{\alpha
i}=\widetilde{e}_{\alpha}\widetilde{F}^{\alpha
i}+\widetilde{\gamma}^{\alpha}_{\alpha\lambda}\widetilde{F}^{\lambda
i}+\widetilde{\gamma}^{i}_{\alpha\lambda}\widetilde{F}^{\alpha\lambda}.
    \end{equation}
But by a direct calculation using (\ref{1.19}), (\ref{1.20}) the
properties\\ $C^{k}_{ij}=-C^{k}_{ji}$,
$\widetilde{F}^{ij}=-\widetilde{F}^{ji}$ and since
$$\widetilde{e}_{\alpha}\widetilde{F}^{\alpha
i}=\partial_{0}\widetilde{F}^{0 i}=\dot{\widetilde{F}}^{0 i}$$ we
deduce from (\ref{1.37}) and (\ref{1.39}), the equation in
$\widetilde{F}^{0i}$:
\begin{equation}\label{1.40}
\dot{\widetilde{F}}^{0i}=H\widetilde{F}^{0i}
-C^{j}_{jk}\widetilde{F}^{ki}-\frac{1}{2}C^{i}_{jk}\widetilde{F}^{jk}+4
\pi e\widetilde{u}^{i}.
\end{equation}
Now concerning (\ref{1.38}), we have, using (\ref{1.19}) and
(\ref{1.20}):\\
$\widetilde{\nabla}_{\alpha}\widetilde{F}^{\alpha0}=C^{i}_{ik}\widetilde{F}^{k0}$,
so that (\ref{1.38}) gives the constraints equation:
\begin{equation}\label{1.41}
C^{i}_{ik}\widetilde{F}^{0k}+4 \pi e\widetilde{u}^{0}=0.
\end{equation}
\item[-] Next, observe that the Maxwell equations (\ref{1.4})
    split into the two sets:
    \begin{equation}\label{1.42}
\widetilde{\nabla}_0\widetilde{F}_{ij}+\widetilde{\nabla}_i\widetilde{F}_{j0}+\widetilde{\nabla}_j\widetilde{F}_{0i}=0;
\;\;
\widetilde{\nabla}_i\widetilde{F}_{jk}+\widetilde{\nabla}_j\widetilde{F}_{ki}+\widetilde{\nabla}_k\widetilde{F}_{ij}=0
    \end{equation}
the first equation (\ref{1.42}) gives the equation in
$\widetilde{F}_{ij}$:
\begin{equation}\label{1.43}
    \dot{\widetilde{F}}_{ij}=-C^{k}_{ij}\widetilde{F}_{0k}
\end{equation}
and the second equation (\ref{1.42}) gives the constraint equation:
\begin{equation}\label{1.44}
 C^{l}_{ij}\widetilde{F}_{kl}+C^{l}_{ki}\widetilde{F}_{jl}+C^{l}_{jk}\widetilde{F}_{il}:=C^{l}_{[ij}F_{k]l}=0
\end{equation}
\end{itemize}

    \item[$\bullet$]Since all the indices are now fixed, we set from
    now, to simplify notations:
\begin{equation}\label{1.45}
\left\{
  \begin{array}{ll}
   \widetilde{F}^{0i}=E^{i};\;\;\widetilde{F}_{ij}=F_{ij};\;\;\widetilde{T}_{00}=T_{00};\;\;\widetilde{T}_{0i}=T_{0i};\;\;\widetilde{T}_{ij}=T_{ij} & \hbox{} \\
\widetilde{\tau}_{00}=\tau_{00};\;\;\widetilde{\tau}_{0i}=\tau_{0i};\;\;\widetilde{\tau}_{ij}=\tau_{ij};\;\;\widetilde{u}^{0}=u^{0};\;\;\widetilde{u}^{i}=u^{i}&
\hbox{}
  \end{array}
\right.
\end{equation}
    \item[$\bullet$]Finally, since the system in $(u^{i})$ is given
    by (\ref{1.8}) for $\beta=i$, it remains to explicit
    the Einstein equations in $(g_{ij}, k_{ij})$; the uncharged case
    $F=0$ is given by classical equations. The equations in the
    charged case $F\neq 0$ were set up by \cite{16}, and we
    easily adapt to the present case. We are then led to the following
   \textit{evolution system} in $(g_{ij}, k_{ij}, E^{i}, F_{ij}, u^{i}, \Phi, U, \rho, e )$:
\begin{numcases}
\strut
 \dot{g}_{ij}= -2k_{ij} \label{1.46} \\
   \dot{k}_{ij}=R_{ij}+Hk_{ij}-2k^{l}_{j}k_{il}-8\pi(T_{ij}+
\tau_{ij}+\rho  u_{i} u_{j})\nonumber \\
 +4\pi \Big[-T_{00}-\rho u^{2}_{0}+g^{lm}(T_{lm}+\rho u_{l}u_{m})\Big]g_{ij} -\Lambda g_{ij}\label{1.47}  \\
\dot{E}^{i} =HE^{i}-\Big(C^{j}_{jk}g^{kl}g^{im}+\frac{1}{2}C^{i}_{jk}g^{jl}g^{km}\Big)F_{lm}+4\pi e u^{i} \label{1.48}  \\
\dot{F}_{ij}= C^{k}_{ij}g_{kl}E^{l} \label{1.49} \\
\dot{u}^{i} =2k_{j}^{i}u^{j}-\gamma^{i}_{jk}\frac{u^{j}u^{k}}{u^{0}}-4\pi\frac{e}{\rho}E^{i}+4\pi\frac{e}{\rho}\frac{g^{il}F_{jl}u^{j}}{u^{0}} \label{1.50}  \\
\dot{\Phi}   =  \sqrt{2U} \label{1.51} \\
 \dot{U }  = 2H U - m^2\Phi \sqrt{2U}\label{1.52}  \\
 \dot{\rho }  =-\Big[ k_{ij}\frac{u^{i}u^{j}}{(u^{0})^2}+C^{i}_{ij}\frac{u^{j}}{u^{0}}-k^{i}_{i}\Big]\rho +4\pi e
 g_{ij}\frac{E^{j}u^{i}}{(u^{0})^2}\label{1.53}\\
\dot{e }  =-\Big[
k_{ij}\frac{u^{i}u^{j}}{(u^{0})^2}+C^{i}_{ij}\frac{u^{j}}{u^{0}}-k^{i}_{i}\Big]e
 +4\pi g_{ij}\frac{E^{j}u^{i}}{(u^{0})^2}\frac{e^2}{\rho}
\label{1.54}
\end{numcases}
where:
\begin{enumerate}
    \item[-]In (\ref{1.47}) $R_{ij}$ is the Ricci tensor
    associated to $g_{ij}$ and given see \cite{2} by:
    \begin{equation}\label{1.55}
R_{ij}=\gamma^{l}_{lm}\gamma^{m}_{ji}-\gamma^{m}_{jl}\gamma^{l}_{mi}-C^{l}_{mj}\gamma^{m}_{li};
    \end{equation}
    \item[-]In (\ref{1.50}) $\gamma^{i}_{jk}$ is given by
    (\ref{1.19});
    \item[-]The equation (\ref{1.52}) in U is given by
    (\ref{1.33}) using the change of variable (\ref{1.34})
    which provides in the same time equation (\ref{1.51}) in
    $\Phi$ following the choice $\dot{\Phi}\geq 0$;
    \item[-]Equation (\ref{1.50}) in $u^{i}$ is given by
    (\ref{1.8}) for $\beta=i$;
    \item[-]Equations (\ref{1.53}) and (\ref{1.54}) in $\rho$
    and $e$ are given respectively by (\ref{1.30}) and
    (\ref{1.36}) in which $\dot{u}^{0}$ is provided by
    (\ref{1.8}) for $\beta=0$, which gives:
\begin{equation*}
\dot{u}^{0}=k_{ij}\frac{u^{i}u^{j}}{u^{0}}-4 \pi
    \frac{e}{\rho}g_{ij}\frac{u^{i}E^{j}}{u^{0}}.
\end{equation*}
    Finally, (\ref{1.46}) is provided by (\ref{1.9}) and (\ref{1.47})
    is a direct adaptation of (\ref{1.14}) in \cite{16} to the present
    case.
\end{enumerate}

\item[$\bullet$]Next we have the following set of constraints
equations:
\begin{numcases}
\strut
 R-k_{ij}k^{ij}+H^{2}=16\pi(T_{00}+\tau_{00}+\rho u_{0}^{2})+2\Lambda \label{1.56} \\
      \nabla^{i}k_{ij}=-8\pi(T_{0j}+\tau_{0j}+\rho u_{0}u_{j}) \label{1.57}\\
    C^{l}_{[ij}F_{k]l}:=C_{ij}^{l}F_{kl}+C_{ki}^{l}F_{jl}+C_{jk}^{l}F_{il}=0 \label{1.58}\\
C_{ik}^{i}E^{k}+4\pi eu^{0} = 0\label{1.59}
\end{numcases}
where (\ref{1.58}) and (\ref{1.59}) are given by  (\ref{1.44}) and
(\ref{1.41}) whereas (\ref{1.56}) (called the Hamiltonian
constraint) in which $R=g^{ij}R_{ij}$ and (\ref{1.57}) are easily
deduced from (1.20) and (1.21) in \cite{16}.
\end{enumerate}
\section{Study of constraints and mean curvature: the Cauchy problem}
\label{para2} We first set up the evolution of the different
quantities involved. We prove:
\begin{lemma}\label{llem1}
\hspace*{2cm}\\
If the evolution system is satisfied, then we have:
\begin{numcases}
\strut \frac{d H}{dt} =R+H^2+4\pi g^{ij}(T_{ij}+\rho
u_{i}u_{j})-12\pi(T_{00}+\rho u_0^2)-8\pi \tau_{00}-3\Lambda.\label{2.1}\\
\frac{dH^{2}}{dt} =2H\Big[R+H^2-3\Lambda+4\pi g^{ij}(T_{ij}+\rho u_{i} u_{j})-12\pi(T_{00}+\rho u_0^{2})-8\pi \tau_{00}\Big]. \nonumber\\
\label{2.2}\\
\frac{d}{dt}(k^{ij}k_{ij})=2k^{ij}R_{ij}
+2H(k^{ij}k_{ij}-\Lambda)-16\pi k^{ij}(T_{ij}+\tau_{ij}+\rho u_{i}u_{j})  \nonumber\\
 \hspace*{2cm}            +8\pi H[-T_{00}-\rho u_0^2+g^{lm}(T_{lm}+\rho u_lu_m)].  \label{2.3}\\
\frac{d}{dt}(T_{00}+\tau_{00}+\rho u_{0}^2)=H(T_{00}+\tau_{00}+\rho
u^{2}_{0})+k^{ij}(T_{ij}+\tau_{ij}+\rho u_{i}u_{j})\nonumber\\
\hspace*{4cm} -\nabla_{l}(T^{0l}+\tau^{0l}+\rho u^{0}u^{l}). \label{2.4}\\
\frac{d}{dt}(T^{0j}+\tau^{0j}+\rho u^{0}u^{j})=
  H(T^{0j}+\tau^{0j}+\rho u_{0}u_{j})+2k^{j}_i(T^{0i}+\tau^{0i}+\rho u^{i}u^{0})\nonumber\\
   \hspace*{4cm}     -\nabla_{i}(T^{ij}+\tau^{ij}+\rho
 u^{i}u^{j}).\label{2.5}\\
 \frac{d }{dt}R=2k^{ij}R_{ij}-2\nabla_{l}\nabla_{i}k^{il}.\label{2.6}
\end{numcases}
\end{lemma}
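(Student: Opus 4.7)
The overall plan is to derive each of \eqref{2.1}--\eqref{2.6} by direct differentiation of the appropriate algebraic expression, combined with the evolution system \eqref{1.46}--\eqref{1.54}, the conservation law \eqref{1.26}, the variation formula \eqref{1.22}, and the frame covariant--derivative rule \eqref{1.21}. The single auxiliary identity used throughout is $\dot{g}^{ij}=2k^{ij}$, obtained by differentiating $g^{ij}g_{jk}=\delta^{i}_{k}$ and inserting \eqref{1.46}.

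For \eqref{2.1} I would write $\dot{H}=\dot{g}^{ij}k_{ij}+g^{ij}\dot{k}_{ij}$ and substitute $\dot{k}_{ij}$ from \eqref{1.47}. The quadratic--in--$k$ pieces rearrange into $R+H^{2}$, the trace $g^{ij}g_{ij}=3$ produces the coefficients $12\pi$ and $3\Lambda$, and the essential simplification is the identity $g^{ij}\tau_{ij}=\tau_{00}$, which follows from the 4-dimensional tracelessness $\widetilde{g}^{\alpha\beta}\widetilde{\tau}_{\alpha\beta}=0$ of the Maxwell tensor \eqref{1.6} combined with $\widetilde{g}^{00}=-1$. Identity \eqref{2.2} is then immediate from $\frac{d}{dt}H^{2}=2H\dot{H}$ and \eqref{2.1}. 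For \eqref{2.3} I would expand $k^{ij}k_{ij}=g^{ia}g^{jb}k_{ab}k_{ij}$; the cubic--in--$k$ contribution from $\dot{g}^{ij}$ cancels exactly against the contribution from the $-2k^{l}_{j}k_{il}$ term of \eqref{1.47}, leaving only the displayed $R$--, $H$--, matter-- and $\Lambda$--terms.

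For \eqref{2.4} and \eqref{2.5} I would apply \eqref{1.26} to the combined tensor $\widetilde{S}^{\alpha\beta}:=\widetilde{T}^{\alpha\beta}+\widetilde{\tau}^{\alpha\beta}+\rho\widetilde{u}^{\alpha}\widetilde{u}^{\beta}$ in the cases $\beta=0$ and $\beta=j$ respectively. Expanding $\widetilde{\nabla}_{\alpha}\widetilde{S}^{\alpha\beta}=0$ via \eqref{1.21}, the spatial frame derivatives $\widetilde{e}_{l}\widetilde{S}^{l\beta}$ vanish by Bianchi homogeneity. The nonvanishing rotation coefficients from \eqref{1.19}, namely $\widetilde{\gamma}^{l}_{l0}=-H$, $\widetilde{\gamma}^{j}_{0l}=-k^{j}_{l}$, $\widetilde{\gamma}^{0}_{ij}=-k_{ij}$, together with the spatial $\widetilde{\gamma}^{l}_{ij}=\gamma^{l}_{ij}$, reorganize into the $H(\cdot)$, $k^{ij}(\cdot)$ and $2k^{j}_{i}(\cdot)$ terms, while the contributions containing $\gamma^{l}_{ij}$ repackage, again through \eqref{1.21}, as the 3-dimensional covariant divergences $\nabla_{l}(T^{0l}+\tau^{0l}+\rho u^{0}u^{l})$ and $\nabla_{i}(T^{ij}+\tau^{ij}+\rho u^{i}u^{j})$ on the right-hand sides. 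One finally identifies $\widetilde{S}^{00}$ with $T_{00}+\tau_{00}+\rho u_{0}^{2}$, using $\widetilde{g}^{00}=-1$, which gives both $\widetilde{S}^{00}=\widetilde{S}_{00}$ and $(\widetilde{u}^{0})^{2}=u_{0}^{2}$.

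For \eqref{2.6}, differentiating $R=g^{ij}R_{ij}$ yields $2k^{ij}R_{ij}+g^{ij}\dot{R}_{ij}$, so only $g^{ij}\dot{R}_{ij}$ is at issue. I would invoke the Palatini formula $\dot{R}_{ij}=\nabla_{l}\dot{\gamma}^{l}_{ij}-\nabla_{j}\dot{\gamma}^{l}_{li}$ (valid in any frame because $\dot{\gamma}^{l}_{ij}$ is the difference of two connections and hence a tensor) and substitute \eqref{1.22}; a short trace computation gives $g^{ij}\dot{\gamma}^{l}_{ij}=\nabla^{l}H-2\nabla^{i}k^{l}_{i}$ and $\dot{\gamma}^{l}_{li}=-\nabla_{i}H$, so that $g^{ij}\dot{R}_{ij}=2\Delta H-2\nabla_{l}\nabla_{i}k^{il}$. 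Since $H=H(t)$ depends only on time and the left--invariant vector fields $\widetilde{e}_{l}$ annihilate time--only functions, one has $\Delta H=0$ in the Bianchi setting, leaving exactly the stated $-2\nabla_{l}\nabla_{i}k^{il}$. The step I expect to be most delicate is the bookkeeping for \eqref{2.4}--\eqref{2.5}: tracking every rotation-coefficient contribution from \eqref{1.21} and \eqref{1.19}, and verifying that after the homogeneity simplifications what survives is \emph{precisely} the 3-dimensional covariant divergence, is where sign and index errors are easiest to commit.
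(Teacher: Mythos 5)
Your proposal is correct and follows essentially the same route as the paper's Appendix $A_{2}$: differentiate each quantity, insert the evolution equations together with $\dot{g}^{ij}=2k^{ij}$, use $g^{ij}\tau_{ij}=\tau_{00}$ from the tracelessness of the Maxwell tensor for \eqref{2.1}--\eqref{2.2}, read \eqref{2.4}--\eqref{2.5} off the conservation law expanded with \eqref{1.21} and \eqref{1.19}, and obtain \eqref{2.6} from the variation of $R_{ij}$ via \eqref{1.22}. The only cosmetic difference is in \eqref{2.6}, where you keep both Palatini terms and kill the extra piece through $\Delta H=0$, while the paper states $\dot{R}_{ij}=\nabla_{l}\dot{\gamma}^{l}_{ij}$ directly (consistent, since $\gamma^{l}_{li}=C^{l}_{li}$ is constant).
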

\begin{proof}
\hspace*{2cm}\\
See Appendix $A_{2}$.
\end{proof}
\begin{lemma}\label{llem2}
\hspace*{2cm}\\
Set:
\begin{numcases}
\strut
A=R-k_{ij}k^{ij}+H^{2}-16\pi(T_{00}+\tau_{00}+\rho u_{0}^{2})-2\Lambda; \nonumber \\
   A_{j}=\nabla^{i}k_{ij}+8\pi(T_{0j}+\tau_{0j}+\rho u_{0}u_{j});  \nonumber\\
   A_{ijk}=C_{ij}^{l}F_{kl}+C_{ki}^{l}F_{jl}+C_{jk}^{l}F_{il};  \nonumber\\
   B=C_{ik}^{i}E^{k}+4 \pi eu^{0}\nonumber
\end{numcases}
then
\begin{numcases}
\strut
\frac{d A}{dt}=2H A+2g^{jm}\gamma_{mj}^{k} A_{k}\label{2.7}  \\
\frac{d A_j}{dt} = HA_j \label{2.8} \\
\frac{d A_{ijk}}{dt}=0 \label{2.9} \\
\frac{d B}{dt}=H B.\label{2.10}
\end{numcases}
\end{lemma}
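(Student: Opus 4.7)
The plan is to verify each identity by direct computation, differentiating the defining expression of $A$, $A_{j}$, $A_{ijk}$, $B$ in time and substituting the evolution equations (\ref{1.46})--(\ref{1.54}) together with the derivatives already assembled in Lemma~\ref{llem1}. Since these four quantities are precisely the constraint functionals of (\ref{1.56})--(\ref{1.59}), each identity will express a linear homogeneous evolution of the constraints themselves, so their vanishing at $t=0$ will automatically propagate in time.

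I would begin with the two transparent cases (\ref{2.9}) and (\ref{2.10}). For (\ref{2.9}), inserting $\dot{F}_{ij}=C^{k}_{ij}g_{kl}E^{l}$ from (\ref{1.49}) into the time derivative of $A_{ijk}$ and factoring out $g_{mp}E^{p}$ leaves exactly the Jacobi identity for the structure constants of $\mathfrak{g}$, which vanishes. For (\ref{2.10}), differentiating $B=C^{i}_{ik}E^{k}+4\pi eu^{0}$ and substituting (\ref{1.48}) for $\dot{E}^{k}$, (\ref{1.54}) for $\dot{e}$, and $\dot{u}^{0}=k_{ij}u^{i}u^{j}/u^{0}-4\pi(e/\rho)g_{ij}u^{i}E^{j}/u^{0}$ obtained from (\ref{1.8}) at $\beta=0$, one sees that the $4\pi eu^{i}$ currents cancel, the quadratic charge terms $e^{2}/\rho$ cancel between $\dot{e}u^{0}$ and $e\dot{u}^{0}$, and the remaining $F$-dependent contributions eliminate by the Jacobi identity combined with $F_{ij}=-F_{ji}$, leaving only $HB$.

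For (\ref{2.7}) I would expand $\dot{A}$ by combining (\ref{2.2}), (\ref{2.3}), (\ref{2.4}) and (\ref{2.6}). The $2k^{ij}R_{ij}$ contributions from $\dot{R}$ and from $\frac{d}{dt}(k^{ij}k_{ij})$ cancel, as do the $16\pi k^{ij}(T_{ij}+\tau_{ij}+\rho u_{i}u_{j})$ terms; collecting the surviving algebra yields $2HA$ plus one divergence residue
\begin{equation*}
-2\nabla_{l}\bigl[\nabla_{i}k^{il}-8\pi(T^{0l}+\tau^{0l}+\rho u^{0}u^{l})\bigr].
\end{equation*}
The index-lowering identities $T^{0l}=-g^{lj}T_{0j}$, $\tau^{0l}=-g^{lj}\tau_{0j}$ and $\rho u^{0}u^{l}=-g^{lj}\rho u_{0}u_{j}$, all consequences of the block form of $\widetilde{g}$, collapse the bracket to $g^{lp}A_{p}$. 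Spatial homogeneity then makes the frame derivatives of the scalar components $A_{p}$ vanish, so $\nabla_{l}(g^{lp}A_{p})=-g^{lp}\gamma^{k}_{lp}A_{k}$, which produces exactly the announced $+2g^{jm}\gamma^{k}_{mj}A_{k}$.

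The main obstacle will be (\ref{2.8}). Differentiating $A_{j}=\nabla^{i}k_{ij}+8\pi(T_{0j}+\tau_{0j}+\rho u_{0}u_{j})$ requires $\frac{d}{dt}\nabla^{i}k_{ij}$, which in turn forces the use of $\dot{k}_{ij}$ from (\ref{1.47}), the inverse-metric evolution $\dot{g}^{il}=2k^{il}$, and the identity $\dot{\gamma}^{l}_{ij}=\nabla^{l}k_{ij}-\nabla_{j}k^{l}_{i}-\nabla_{i}k^{l}_{j}$ supplied by (\ref{1.22}). On the matter side, (\ref{2.5}) provides the evolution of $T^{0j}+\tau^{0j}+\rho u^{0}u^{j}$, which must be lowered via $\dot{g}_{jl}=-2k_{jl}$ to produce the evolution of $T_{0j}+\tau_{0j}+\rho u_{0}u_{j}$. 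The curvature contributions will then have to be reorganised using the once-contracted Bianchi identity $\nabla^{i}R_{ij}=\frac{1}{2}\nabla_{j}R$, after which all quadratic-in-$k$ and matter-source terms must collapse to exactly $HA_{j}$. This is the most delicate piece of the constraint-propagation algebra and, consistently with the style adopted for Lemma~\ref{llem1}, is best relegated to the appendix.
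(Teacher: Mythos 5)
Your treatment of (\ref{2.7}), (\ref{2.9}) and (\ref{2.8}) coincides with the paper's: (\ref{2.7}) by combining (\ref{2.2}), (\ref{2.3}), (\ref{2.4}) and (\ref{2.6}) (your cancellation bookkeeping and the identification of the divergence residue with $-2\nabla_l(g^{lp}A_p)=2g^{lp}\gamma^{k}_{lp}A_k$ check out), (\ref{2.9}) from (\ref{1.49}) and the Jacobi identity, and (\ref{2.8}) deferred to an appendix computation whose ingredient list --- $\dot g^{il}=2k^{il}$, formula (\ref{1.22}) for $\dot\gamma^{l}_{ij}$, the lowering of (\ref{2.5}) via $\dot g_{jl}=-2k_{jl}$, and the contracted Bianchi identity killing $\nabla_iR^{ij}$ because $R$ depends only on $t$ --- is exactly what the paper's Appendix $A_3$ uses; note, though, that for this part you have supplied a plan rather than the computation itself, so the delicate collapse to $HA_j$ still has to be carried out. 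The genuine divergence is (\ref{2.10}). The paper argues structurally: it introduces the constraint current $C^{\beta}=\widetilde{\nabla}_{\alpha}\widetilde{F}^{\alpha\beta}-4\pi e\widetilde{u}^{\beta}$, observes that the evolution equation (\ref{1.48}) is precisely $C^{i}=0$, and then gets $\widetilde{\nabla}_{\beta}C^{\beta}=0$ for free from the antisymmetry identity $\widetilde{\nabla}_{\alpha}\widetilde{\nabla}_{\beta}\widetilde{F}^{\alpha\beta}=0$ together with charge conservation (\ref{1.36}); since $C^{0}=-B$, this reads $\dot B=HB$ with almost no index work. You instead differentiate $B$ head-on and track the cancellations among (\ref{1.48}), (\ref{1.54}) and $\dot u^{0}$. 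This does work: the $4\pi e u^{i}$ and $e^{2}/\rho$ terms cancel as you say, and the leftover $F$-contribution $C^{i}_{ik}\bigl(C^{j}_{jn}g^{nl}g^{km}+\tfrac12 C^{k}_{jn}g^{jl}g^{nm}\bigr)F_{lm}$ vanishes --- the first piece because $C^{i}_{ik}C^{j}_{jn}$ is symmetric in $(k,n)$ against the antisymmetric $F^{nk}$, the second by the contracted Jacobi identity $C^{i}_{ik}C^{k}_{jn}=0$. Your route is more computational but self-contained; the paper's buys brevity at the cost of invoking the four-dimensional divergence identity.
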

\begin{proof}
\hspace*{2cm}\\
\begin{enumerate}
    \item[1.)](\ref{2.7}) is a consequence of (\ref{2.2}),
    (\ref{2.3}), (\ref{2.4}) and (\ref{2.6});
    \item[2.)](\ref{2.9}) is a consequence of the evolution
    equation (\ref{1.49}) in $F_{ij}$ and the Jacobi polynomial
    relation $C^{l}_{[ij}C_{k]}^{m}=0$;
    \item[3.)]To obtain (\ref{2.10}), first set:
    \begin{equation}\label{2.11}
C^{\beta}=\widetilde{\nabla}_{\alpha}\widetilde{F}^{\alpha\beta}-4
\pi e \widetilde{u}^{\beta}
    \end{equation}
for $\beta=i$, given the evolution equation (\ref{1.37}) in
$\widetilde{F}^{0i}=E^{i}$ which is also written in the form
(\ref{1.40}) or (\ref{1.48}), we have
\begin{equation}\label{2.12}
    C^{i}=0.
\end{equation}
Next, due to the identity
$\widetilde{\nabla}_{\alpha}\widetilde{\nabla}_{\beta}\widetilde{F}^{\alpha\beta}=0$
and (\ref{1.36}) i.e
$\widetilde{\nabla}_{\beta}(e\widetilde{u}^{\beta})=0$, we have
\begin{equation}\label{2.13}
\widetilde{\nabla}_{\beta}C^{\beta}=0;
\end{equation}
then by a direct calculation using  (\ref{1.19}), (\ref{1.21}) and
(\ref{2.12}), (\ref{2.13}) gives
\begin{equation}\label{2.14}
    \partial_{t}C^{0}-HC^{0}=0
\end{equation}
But (\ref{2.11}) gives:
\begin{equation*}
C^{0}=\widetilde{\nabla}_{\alpha}\widetilde{F}^{\alpha0}-4 \pi e
\widetilde{u}^{0}=-C_{ik}^{i}\widetilde{F}^{0k}-4 \pi e
\widetilde{u}^{0}=-C_{ik}^{i}E^{k}-4 \pi e \widetilde{u}^{0}=-B.
\end{equation*}
 So
(\ref{2.10}) follows from (\ref{2.14}).
    \item[4.)]For the proof of (\ref{2.8}) see Appendix $A_{3}$.
\end{enumerate}
\end{proof}
\begin{proposition}\label{prop1}
\hspace*{2cm}\\
The constraints (\ref{1.56}), (\ref{1.57}), (\ref{1.58}) and
(\ref{1.59}) are satisfied in the whole domain of existence of the
solutions of the evolution system (\ref{1.46}), (\ref{1.47}),
(\ref{1.48}), (\ref{1.49}), (\ref{1.50}), (\ref{1.51}),
(\ref{1.52}), (\ref{1.53}), (\ref{1.54}), if and only if, they are
satisfied for $t=0$.
\end{proposition}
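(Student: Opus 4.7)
The plan is to reduce Proposition \ref{prop1} to an application of uniqueness for linear homogeneous ODEs, using Lemma \ref{llem2} as the main tool. First I observe that the four constraints \eqref{1.56}, \eqref{1.57}, \eqref{1.58}, \eqref{1.59} are precisely the statements $A=0$, $A_j=0$, $A_{ijk}=0$, $B=0$, where $A$, $A_j$, $A_{ijk}$, $B$ are the quantities introduced in Lemma \ref{llem2}. Hence the proposition is equivalent to: the vector-valued function $(A, A_j, A_{ijk}, B)$ vanishes on the whole interval of existence of the evolution system if and only if it vanishes at $t=0$.

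The ``only if'' direction is trivial. For the ``if'' direction, I would exploit the cascade structure already contained in \eqref{2.7}--\eqref{2.10}. Note that the system is linear and homogeneous in $(A, A_j, A_{ijk}, B)$, with coefficients ($H$, $g^{jm}\gamma_{mj}^k$) built from the solution of the evolution system, hence continuous on the whole domain of existence. I would then proceed in three steps. First, \eqref{2.9} gives $A_{ijk}(t)=A_{ijk}(0)=0$ immediately. Second, \eqref{2.8} and \eqref{2.10} are scalar linear homogeneous ODEs in $A_j$ and $B$ respectively, so
\begin{equation*}
A_j(t)=A_j(0)\exp\!\left(\int_0^t H(s)\,ds\right),\qquad B(t)=B(0)\exp\!\left(\int_0^t H(s)\,ds\right),
\end{equation*}
and the vanishing of $A_j(0)$ and $B(0)$ forces $A_j\equiv 0$ and $B\equiv 0$ throughout. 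Third, substituting $A_k\equiv 0$ into \eqref{2.7} reduces it to $\tfrac{dA}{dt}=2HA$, so $A(t)=A(0)\exp(2\int_0^t H(s)\,ds)=0$ as well.

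I do not anticipate any real obstacle here, since the substantive work has been absorbed into Lemma \ref{llem2}: once the four constraint quantities are shown to satisfy a closed, linear, homogeneous first-order system with the triangular dependence displayed in \eqref{2.7}--\eqref{2.10}, the proposition is just the uniqueness of the zero solution with zero initial data. The only point that deserves a line of justification is that the coefficients $H$ and $g^{jm}\gamma_{mj}^{k}$ are continuous (indeed smooth) on the interval of existence, which is clear since they are polynomial expressions in the components $(g_{ij}, g^{ij}, k_{ij})$ of the evolution solution.
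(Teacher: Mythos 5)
Your proposal is correct and follows essentially the same route as the paper: both integrate the linear homogeneous system \eqref{2.7}--\eqref{2.10} of Lemma~\ref{llem2}, obtaining $A_{ijk}=A_{ijk}(0)$, $A_j=A_j(0)\exp(\int_0^t H\,ds)$, $B=B(0)\exp(\int_0^t H\,ds)$, and then substitute $A_k\equiv 0$ into \eqref{2.7} to reduce it to $\dot A=2HA$ and conclude $A\equiv 0$. The only difference is cosmetic: you make the triangular structure and the continuity of the coefficients explicit, which the paper leaves implicit.
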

\begin{proof}
\hspace*{2cm}\\
Integration of (\ref{2.7}), (\ref{2.8}), (\ref{2.10}) over $[0, t]$,
$t>0$ gives:
\begin{equation*}
A_{j}=A_{j}(0)\exp(\int_{0}^{t}H ds);\;\; A_{ijk}=A_{ijk}(0);\;\;
B=B(0)\exp(\int_{0}^{t}H ds);
\end{equation*}
hence $(A_{j}(0)=A_{ijk}(0)=B(0)=0
)\Longleftrightarrow(A_{j}=A_{ijk}=B=0 ) $.\\
Now setting $A_{k}=0$ in (\ref{2.7}) gives $\dot{A}=2HA$ which
integrates over $[0, t]$, $t>0$ to give $A=A(0)\exp(\int_{0}^{t}2H
ds)$. Hence $(A(0)=0)\Longleftrightarrow(A=0)$ and
proposition~\ref{prop1} follows.
\end{proof}
We suppose from now on that the constraints (\ref{1.56}),
(\ref{1.57}), (\ref{1.58}) and (\ref{1.59}) are satisfied for $t=0$.
As consequence, we can use the constraints which can now be
considered as \textit{properties} of the solution of the evolution
system.

 We now prove an important theorem on the mean
curvature of the solutions of the evolution system.
\begin{theorem}\label{ttheo1}
\hspace*{2cm}\\
Let $\Phi(0)>0$, $\Lambda> -4\pi m^2(\Phi(0))^2$ be given, and
suppose $H(0)=(g^{ij}k_{ij})(0)<0$ then H is uniformly bounded and
we have:
\begin{equation}\label{2.15}
    H(0)\leq H\leq -\sqrt{3\Lambda+12\pi m^2(\Phi(0))^2}.
\end{equation}
\end{theorem}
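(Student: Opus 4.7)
The plan is to derive both inequalities from the Hamiltonian constraint (\ref{1.56}) combined with the evolution equation (\ref{2.1}) for $H$, exploiting two key facts: the monotonicity $\Phi(t) \geq \Phi(0) > 0$, which follows from (\ref{1.51}) since $\dot{\Phi} = \sqrt{2U} \geq 0$, and the standard fact that for left-invariant Riemannian metrics on the simply connected Lie groups of Bianchi type I--VIII, the spatial scalar curvature satisfies $R \leq 0$ (precisely the geometric property that fails for Bianchi IX, and which justifies its exclusion). Set $B := \sqrt{3\Lambda + 12\pi m^2 \Phi(0)^2}$; the hypothesis $\Lambda > -4\pi m^2 \Phi(0)^2$ guarantees $B^2 > 0$.

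For the upper bound $H \leq -B$, I would decompose $k_{ij} = \hat{k}_{ij} + \tfrac{H}{3} g_{ij}$ into its traceless and trace parts, so that $k_{ij} k^{ij} = \hat{k}_{ij} \hat{k}^{ij} + \tfrac{H^2}{3}$, and rewrite (\ref{1.56}) as $\tfrac{2H^2}{3} = -R + \hat{k}_{ij} \hat{k}^{ij} + 16\pi(T_{00} + \tau_{00} + \rho u_0^2) + 2\Lambda$. Using $R \leq 0$, the non-negativity of $\hat{k}_{ij} \hat{k}^{ij}$, of $\tau_{00}$ (from (\ref{1.25}) together with positive-definiteness of $g$) and of $\rho u_0^2$, plus the bound $T_{00} \geq \tfrac{1}{2} m^2 \Phi^2 \geq \tfrac{1}{2} m^2 \Phi(0)^2$ from (\ref{1.23}), one obtains $H^2 \geq 12\pi m^2 \Phi(0)^2 + 3\Lambda = B^2$. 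Since $H$ is continuous on the domain of existence, $H(0) < 0$, and $|H| \geq B > 0$ throughout, $H$ cannot change sign and therefore $H \leq -B$.

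For the lower bound $H \geq H(0)$, I would eliminate $R$ from (\ref{2.1}) via (\ref{1.56}) written as $R = k_{ij} k^{ij} - H^2 + 16\pi(T_{00} + \tau_{00} + \rho u_0^2) + 2\Lambda$, obtaining $\dot{H} = k_{ij} k^{ij} + 4\pi T_{00} + 4\pi g^{ij} T_{ij} + 8\pi \tau_{00} + 4\pi \rho u_0^2 + 4\pi \rho g^{ij} u_i u_j - \Lambda$. Plugging in (\ref{1.23}) collapses $4\pi T_{00} + 4\pi g^{ij} T_{ij}$ to $8\pi \dot{\Phi}^2 - 4\pi m^2 \Phi^2$, and using $u_0 = -u^0$ together with $g_{ij} u^i u^j = (u^0)^2 - 1$ from (\ref{1.11}) turns the matter contribution into $4\pi \rho (2(u^0)^2 - 1)$, whence $\dot{H} = k_{ij} k^{ij} + 8\pi \dot{\Phi}^2 - 4\pi m^2 \Phi^2 + 8\pi \tau_{00} + 4\pi \rho (2(u^0)^2 - 1) - \Lambda$. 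By Cauchy--Schwarz $k_{ij} k^{ij} \geq H^2/3$, and the sharper bound $H^2 \geq 12\pi m^2 \Phi^2 + 3\Lambda$ from the previous step (applied with the current value of $\Phi$) yields $k_{ij} k^{ij} \geq 4\pi m^2 \Phi^2 + \Lambda$. Substituting this and $(u^0)^2 \geq 1$ cancels the problematic $-4\pi m^2 \Phi^2 - \Lambda$ terms and gives $\dot{H} \geq 8\pi \dot{\Phi}^2 + 8\pi \tau_{00} + 4\pi \rho \geq 0$, so $H$ is non-decreasing and $H(t) \geq H(0)$.

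The main obstacle is the reliance on $R \leq 0$ for Bianchi I--VIII: without it, the Hamiltonian constraint would not deliver the a priori lower bound $H^2 \geq B^2$ that drives both halves of the argument. Everything else is a direct algebraic manipulation of the constraint (\ref{1.56}), the evolution equation (\ref{2.1}), and the explicit formulas (\ref{1.11}), (\ref{1.23}), (\ref{1.25}), combined with the sign information already recorded in the preliminaries.
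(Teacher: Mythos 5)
Your proposal is correct and follows essentially the same route as the paper: both use the traceless decomposition of $k_{ij}$ (your Cauchy--Schwarz bound $k_{ij}k^{ij}\geq H^2/3$ is equivalent to the paper's $k_{ij}k^{ij}=\tfrac{H^2}{3}+\sigma_{ij}\sigma^{ij}$ with $\sigma_{ij}\sigma^{ij}\geq 0$), the Hamiltonian constraint together with $R\leq 0$ to obtain $H^2\geq 3\Lambda+12\pi m^2\Phi^2$, the evolution equation for $H$ to conclude $\dot H\geq \tfrac{H^2}{3}-\Lambda-4\pi m^2\Phi^2\geq 0$, and the continuity-plus-sign argument from $H(0)<0$. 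The only difference is the order in which the two halves of (\ref{2.15}) are established, which does not affect the logic.
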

\begin{proof}
\hspace*{2cm}\\
Denote by $\sigma_{ij}$ the traceless tensor associated to $k_{ij}$,
i.e
\begin{equation}\label{2.16}
k_{ij}=\frac{H}{3}g_{ij}+\sigma_{ij}.
\end{equation}
A direct calculation gives:
\begin{equation}\label{2.17}
    k^{ij}k_{ij}=\frac{1}{3}H^2+\sigma_{ij}\sigma^{ij}.
\end{equation}
We now use the evolution equation (\ref{2.1}) in H, in which we use
the Hamiltonian constraint (\ref{1.56}) to express the quantity
$R+H^2$,  (\ref{1.23}) and (\ref{1.34}) to express $T_{00}$,
$T_{ij}$ in terms of $\Phi$ and U, and finally (\ref{2.17}) to
express $k_{ij}k^{ij}$, to obtain:
\begin{equation}\label{2.18}
\frac{dH}{dt}=\frac{H^2}{3}-\Lambda-4\pi
m^2\Phi^2+\sigma_{ij}\sigma^{ij}+16\pi U +4\pi g^{ij} u_{i}
u_{j}+4\pi\rho u_{0}^{2}+8\pi\tau_{00}.
\end{equation}
But since $ \sigma_{ij}\sigma^{ij}\geq 0$, $g^{ij} u_{i} u_{j}\geq
0$, $\tau_{00}\geq 0$ (see (\ref{1.25})), (\ref{2.18}) gives:
\begin{equation}\label{2.19}
\frac{dH}{dt}\geq \frac{H^{2}}{3}-\Lambda-4\pi m^2 \Phi^2
\end{equation}
Consider once more the Hamiltonian constraint (\ref{1.56}) which
gives, using (\ref{2.17}) to express $k_{ij}k^{ij}$ and (\ref{1.23})
to express $T_{00}$:
\begin{equation}\label{2.20}
 \frac{2}{3}H^2-2\Lambda-8\pi m^2\Phi^{2}=16\pi U+16\pi(\tau_{00}+\rho
 u^2_0)+\sigma_{ij}\sigma^{ij}-R.
\end{equation}
But it proved in \cite{10}, \cite{26}, that for the models under
consideration, we always have: $R\leq 0$; (\ref{2.20}) then gives:
\begin{equation}\label{2.21}
\frac{H^2}{3}-\Lambda-4\pi m^2\Phi^{2}\geq 0;
\end{equation}
(\ref{2.19}) then implies:
\begin{equation}\label{2.22}
    \frac{dH}{dt}\geq 0
\end{equation}
so that H is non-decreasing. We also deduce from (\ref{2.21}) since
by (\ref{1.51}) we have $\dot{\Phi}\geq 0$, then
$\Phi\geq\Phi(0)>0$, that:
\begin{equation}\label{2.23}
    H^{2}\geq 3\Lambda +12 \pi m^2 (\Phi(0))^2.
\end{equation}
But by hypothesis, the r.h.s of (\ref{2.23}) is strictly positive.
So, since H is continuous, (\ref{2.23}) implies:
\begin{equation}\label{2.24}
H\leq -\sqrt{3\Lambda+12\pi m^2(\Phi(0))^2}\;\; or \;\; H\geq
\sqrt{3\Lambda+12\pi m^2(\Phi(0))^2}.
\end{equation}
Also by hypothesis, $H(0)<0$, then only the first inequality in
(\ref{2.24}) holds; moreover, (\ref{2.22}) implies $H\geq H(0)$ and
(\ref{2.15}) follows.
\end{proof}
\indent We now introduce the Cauchy or initial value problem, taking
into account (\ref{1.32bis}), (\ref{1.34}), (\ref{1.36ter}); let the
following quantities called \textit{initial data} be given:
\begin{equation*}
    \left\{
      \begin{array}{ll}
g^{0}=(g^{0}_{ij})&\hbox{\hspace*{-1.2cm}a positive definite $3\times3$ constant matrix;} \\
k^{0}=(k^{0}_{ij})&\hbox{\hspace*{-1.2cm}a symmetric  $3\times3$ constant matrix;} \\
F^{0}=(F_{ij}^{0})&\hbox{\hspace*{-1cm}an antisymmetric 3$\times$3 constant matrix;} \\
 u^{0}=(u^{(0),\;i}); &\hbox{\hspace*{-0.5cm}$E^{0}=(E^{0,\;i})$,\;\; constant vectors;} \\
 \Phi^{0}>0;  U^{0}>0; &\hbox{$\rho^{0}>0$;\;\; $e^{0}\geq 0$,\;\; real
   numbers};
      \end{array}
    \right.
\end{equation*}
we look for $g=(g_{ij})$, $k=(k_{ij})$, $E=(E^{i})$, $F=(F_{ij})$,
$u=(u^{i})$, $\Phi$, $U$, $\rho$, $e$ solutions of the evolution
system such that:
\begin{numcases}
\strut
g(0)=g^{0};\;   k(0)=k^{0};\;  E(0)=E^{0};\;  F(0)=F^{0};\; u(0)=u^{(0)} \nonumber\\
\Phi(0) =\Phi^{0};\; U(0)=U^{0};\; \rho(0)=\rho^{0};\; e(0)=e^{0}.
\label{2.255}
\end{numcases}
By Proposition~\ref{prop1}, the constraint equations (\ref{1.56}),
(\ref{1.57}), (\ref{1.58}) ,(\ref{1.59}) are satisfied if and only
if the initial data satisfy these constraints we call
\textit{initial constraints}.
In what follows, we consider that it is the case.\\
We end this section by the useful notion of \textit{relative norm}.
Define the norm of a $n\times n$ matrix A by:
\begin{equation*}
\|A\|=\sup\left\{
  \begin{array}{ll}
\frac{\|Ax\|}{\|x\|},\; x\in \mathbb{R}^{n}, \; x\neq 0
  \end{array}
\right\}.
\end{equation*}
If $A_1$ and $A_2$ are two symmetric matrices with $A_1$ positive
definite, define the norm of $A_2$ with respect to $A_1$ by:
\begin{equation*}
 \|A_{2}\|_{A_1}=\sup\left\{
  \begin{array}{ll}
\frac{\|A_{2}x\|}{\|A_{1}x\|},\; x\in \mathbb{R}^{n}, \; x\neq 0
  \end{array}
\right\}.
\end{equation*}
We have the following results proved in \cite{17}:
\begin{lemma}\label{lem3}
\begin{eqnarray}
  \|A_2\| &\leq &\|A_2\|_{A_1} \|A_1\| \label{2.25}. \\
   \|A_2\|_{A_1} &\leq& [tr(A^{-1}_1A_2A^{-1}_1A_2)]^{\frac{1}{2}}.\label{2.26}
\end{eqnarray}
\end{lemma}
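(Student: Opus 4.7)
The plan is to prove both inequalities by direct manipulation, exploiting only the symmetry of $A_1, A_2$, the positive-definiteness of $A_1$, and the standard bound that the largest eigenvalue of a symmetric positive semi-definite matrix is dominated by its trace.

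For \eqref{2.25} the argument is essentially one line. Since $A_1$ is positive definite, $A_1 x\neq 0$ whenever $x\neq 0$, so one may insert $\|A_1 x\|/\|A_1 x\|$ and split
\[
\|A_2 x\|=\frac{\|A_2 x\|}{\|A_1 x\|}\,\|A_1 x\|\;\leq\;\|A_2\|_{A_1}\,\|A_1\|\,\|x\|,
\]
the first factor being bounded by $\|A_2\|_{A_1}$ by definition and the second by $\|A_1\|\|x\|$ by the definition of the ordinary operator norm. Dividing by $\|x\|$ and taking the supremum over $x\neq 0$ yields $\|A_2\|\leq\|A_2\|_{A_1}\|A_1\|$.

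For \eqref{2.26} the strategy is to re-express $\|A_2\|_{A_1}$ as a genuine operator norm via the change of variable $y=A_1 x$, and then bound it in terms of a trace. The substitution gives
\[
\|A_2\|_{A_1}=\sup_{y\neq 0}\frac{\|A_2 A_1^{-1}y\|}{\|y\|},
\]
so that $\|A_2\|_{A_1}^{2}$ coincides with the largest eigenvalue of a symmetric positive semi-definite matrix obtained from $A_2 A_1^{-1}$. Using the symmetry $A_1^{T}=A_1$, $A_2^{T}=A_2$ and the cyclic invariance of the trace, I would next rewrite this as the largest eigenvalue of $(A_1^{-1}A_2)^{2}$, which is similar (via conjugation by $A_1^{1/2}$) to the square of the symmetric matrix $A_1^{-1/2}A_2 A_1^{-1/2}$. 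Because the eigenvalues of that square are all non-negative, the largest is bounded by their sum, i.e.\ by the trace:
\[
\|A_2\|_{A_1}^{2}\;\leq\;\operatorname{tr}\bigl((A_1^{-1}A_2)^{2}\bigr)\;=\;\operatorname{tr}\bigl(A_1^{-1}A_2 A_1^{-1}A_2\bigr).
\]
Taking square roots gives \eqref{2.26}.

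The only delicate step is the similarity argument identifying the eigenvalues of $A_1^{-1}A_2$ with those of the symmetric matrix $A_1^{-1/2}A_2 A_1^{-1/2}$: this is what guarantees that $(A_1^{-1}A_2)^{2}$ has non-negative spectrum and hence that its largest eigenvalue is controlled by its trace. Once that is in place, the rest of the proof reduces to the Rayleigh quotient characterisation and manipulations with trace cyclicity, which are routine.
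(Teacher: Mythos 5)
First, note that the paper does not actually prove Lemma~\ref{lem3}: it is quoted from Rendall \cite{17}, so your proposal is measured against what that reference establishes rather than against an argument in the text. Your proof of \eqref{2.25} is correct and is the standard one: insert $\|A_1x\|/\|A_1x\|$, bound each factor by its definition, and take suprema.

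Your proof of \eqref{2.26}, however, has a genuine gap, and it sits exactly at the step you flag as delicate. After the substitution $y=A_1x$ you correctly get $\|A_2\|_{A_1}=\sup_{y\neq0}\|A_2A_1^{-1}y\|/\|y\|$, whose square is the largest eigenvalue of $(A_2A_1^{-1})^{T}(A_2A_1^{-1})=A_1^{-1}A_2^{2}A_1^{-1}$, i.e.\ the square of the largest \emph{singular} value of $A_2A_1^{-1}$. You then assert this equals the largest eigenvalue of $(A_1^{-1}A_2)^{2}=A_1^{-1}A_2A_1^{-1}A_2$, i.e.\ the largest squared \emph{eigenvalue} of $A_1^{-1}A_2$. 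These are different objects: $A_1^{-1}A_2$ is not normal (it is symmetric only with respect to the $A_1$-inner product), so its top singular value can strictly exceed its top absolute eigenvalue, and cyclic invariance of the trace identifies the traces of $A_1^{-1}A_2^{2}A_1^{-1}$ and $(A_1^{-1}A_2)^{2}$, not their top eigenvalues. Worse, with the Euclidean-norm definition of $\|A_2\|_{A_1}$ you are using (which is what the paper literally writes), inequality \eqref{2.26} is false: take
\[
A_1=\begin{pmatrix}1&0\\0&\varepsilon\end{pmatrix},\qquad A_2=\begin{pmatrix}0&1\\1&0\end{pmatrix},\qquad x=\begin{pmatrix}0\\1\end{pmatrix};
\]
then $\|A_2x\|/\|A_1x\|=1/\varepsilon$ while $\operatorname{tr}(A_1^{-1}A_2A_1^{-1}A_2)=2/\varepsilon$, and $1/\varepsilon>\sqrt{2/\varepsilon}$ as soon as $\varepsilon<1/2$. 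So no refinement of the similarity argument can close the gap under that definition. The statement that is actually true, and the one proved in \cite{17}, uses the quadratic-form ratio $\|A_2\|_{A_1}=\sup_{x\neq0}|x^{T}A_2x|/(x^{T}A_1x)$; with that reading the substitution $x=A_1^{-1/2}y$ gives exactly $\max_i|\lambda_i(A_1^{-1/2}A_2A_1^{-1/2})|$, after which the rest of your argument (non-negative spectrum of the square, top eigenvalue bounded by the trace, trace cyclicity) goes through verbatim, and \eqref{2.25} still follows by the same two-factor splitting. The fix is therefore to correct the definition you start from, not the trace manipulation.
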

\noindent
Now by setting $A_{1}=(g^{ij})$, $A_{2}=(a_{ij})$ a direct
calculation gives:
\begin{equation}\label{2.27}
tr(A^{-1}_1A_2A^{-1}_1A_2)=a^{ij}a_{ij}.
\end{equation}
We then deduce at once from (\ref{2.25}), (\ref{2.26}), (\ref{2.27})
that in these case:
\begin{equation}\label{2.28}
\|A_2\| \leq \|A_1\| (a^{ij}a_{ij})^{\frac{1}{2}}.
\end{equation}
Next, let $A=(a_{ij})$ be a $n\times n$ matrix; set $|A|=\sup
\{|a_{ij}|, \; i,j=1,...,n\}$. Then we have the following result,
from \cite{16}:
\begin{lemma}\label{lem4}
\hspace*{2cm}\\
Let $(u^{\alpha})=(u^{0}, u^{i})$ where $u^{0}$ and $u^{i}$ are link
by (\ref{1.11}). Then there exists a constant $C>0$ such that:
\begin{equation}\label{2.29}
   \Big|\frac{u^{i}}{u^{0}}\Big|\leq C |g|^{\frac{3}{2}}\;;\;\;
   |F^{0i}|\leq (g_{lm}F^{0l}F^{0m})^{\frac{1}{2}}|g|^{\frac{3}{2}}.
\end{equation}
\end{lemma}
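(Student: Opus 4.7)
My plan is to derive both inequalities from a single Rayleigh-quotient argument applied to the positive definite matrix $g=(g_{ij})$, reducing each bound to the estimation of $\lambda_{\min}(g)^{-1/2}$ by a power of $|g|$. In both cases the left-hand side is a single Cartesian component of a vector, while the right-hand side contains, explicitly or implicitly (via $(u^{0})^{2}-1$), the $g$-length of that same vector, so the Rayleigh inequality $g_{kl}v^{k}v^{l}\geq \lambda_{\min}(g)\sum_{m}(v^{m})^{2}$ will produce the factor $\lambda_{\min}(g)^{-1/2}$ in each estimate.

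I would treat the two bounds in parallel. From \eqref{1.11}, $g_{ij}u^{i}u^{j}=(u^{0})^{2}-1\leq (u^{0})^{2}$, while Rayleigh applied to the vector $(u^{k})$ gives $g_{ij}u^{i}u^{j}\geq \lambda_{\min}(g)(u^{i})^{2}$; combining these yields $|u^{i}|/u^{0}\leq \lambda_{\min}(g)^{-1/2}$. Running the same Rayleigh inequality with $F^{0k}$ in place of $u^{k}$ gives $\lambda_{\min}(g)(F^{0i})^{2}\leq g_{lm}F^{0l}F^{0m}$, so that $|F^{0i}|\leq (g_{lm}F^{0l}F^{0m})^{1/2}\lambda_{\min}(g)^{-1/2}$. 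Both inequalities are thereby reduced to bounding $\lambda_{\min}(g)^{-1/2}$ by a constant times $|g|^{3/2}$.

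For the reduction I would use $\lambda_{\min}(g)^{-1}=\|g^{-1}\|_{\mathrm{op}}$ together with the cofactor identity $g^{ij}=\mathrm{cof}(g)_{ji}/\det g$. In dimension three each cofactor is a $2\times 2$ minor of $g$, hence bounded in modulus by $2|g|^{2}$; this gives $|g^{-1}|\leq 2|g|^{2}/\det g$ and, via the elementary comparison $\|g^{-1}\|_{\mathrm{op}}\leq 3|g^{-1}|$, finally $\lambda_{\min}(g)\geq \det g/(6|g|^{2})$. Consequently $\lambda_{\min}(g)^{-1/2}\leq \sqrt{6}\,|g|(\det g)^{-1/2}$.

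The main obstacle is the last step, namely absorbing $(\det g)^{-1/2}$ into the advertised factor $|g|^{1/2}$. I expect this to follow from structural information about the solution on its domain of existence --- either a uniform lower bound on $\det g$ arising from the initial data together with the evolution equation \eqref{1.46} for $g_{ij}$, or an inequality of the form $\det g\geq c/|g|$ characteristic of the Bianchi I--VIII models considered here. The resulting universal constant is then absorbed into $C$, and both inequalities follow, the second one with $C=1$ in front of $|g|^{3/2}$ because the proof also delivers the explicit factor $(g_{lm}F^{0l}F^{0m})^{1/2}$ on the right.
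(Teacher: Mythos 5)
Your Rayleigh-quotient reduction is sound, and it is essentially the only natural route here (the paper itself gives no proof of Lemma~\ref{lem4}, importing it from \cite{16}): from \eqref{1.11} one gets $g_{ij}u^{i}u^{j}=(u^{0})^{2}-1\leq (u^{0})^{2}$, the inequality $g_{ij}v^{i}v^{j}\geq \lambda_{\min}(g)\sum_{m}(v^{m})^{2}$ yields $|u^{i}|/u^{0}\leq \lambda_{\min}(g)^{-1/2}$ and $|F^{0i}|\leq (g_{lm}F^{0l}F^{0m})^{1/2}\lambda_{\min}(g)^{-1/2}$, and your cofactor estimate $\lambda_{\min}(g)\geq \det g/(6|g|^{2})$ is correct. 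The genuine gap is exactly the step you flag, and it cannot be closed: as a statement about an arbitrary positive definite $3\times 3$ matrix, $\lambda_{\min}(g)^{-1/2}\leq C|g|^{3/2}$ is false, and so is the inequality \eqref{2.29} itself. Take $g=\epsilon I$ and $\sum_{m}(u^{m})^{2}=\epsilon^{-2}$: then $u^{0}=\sqrt{1+\epsilon^{-1}}$, so $\max_{i}|u^{i}|/u^{0}$ is of order $\epsilon^{-1/2}\to +\infty$ while $C|g|^{3/2}=C\epsilon^{3/2}\to 0$. Correspondingly, neither of your proposed rescues is available: there is no a priori uniform lower bound on $\det g$ on the domain of existence --- controlling $(\det g)^{-1}$ is itself one of the tasks of the existence proofs, see \eqref{3.7}--\eqref{3.8} and \eqref{3.23} --- and $\det g\geq c/|g|$ is not a feature of Bianchi I--VIII models, where $g_{ij}(t)$ is just a curve of positive definite matrices.

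What your computation actually establishes is the unconditional bound $|u^{i}/u^{0}|\leq \sqrt{6}\,|g|\,(\det g)^{-1/2}$ and $|F^{0i}|\leq \sqrt{6}\,(g_{lm}F^{0l}F^{0m})^{1/2}|g|\,(\det g)^{-1/2}$ (equivalently, $\lambda_{\min}(g)^{-1/2}\leq \sqrt{3}\,|g^{-1}|^{1/2}$), and this is the form in which the lemma is provable. The printed exponent $3/2$ is recovered only under the additional hypothesis $(\det g)^{-1}\leq C\,|g|$, with that $C$ entering the constant of the lemma. That hypothesis does hold wherever the paper invokes \eqref{2.29}, since there $|g|$ stays bounded away from zero and $(\det g)^{-1}$ is separately shown to be bounded, but it is not part of the statement. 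So you should either restate the conclusion with the factor $|g|(\det g)^{-1/2}$ or add the boundedness of $(\det g)^{-1}$ to the hypotheses; modulo that correction, your argument is complete.
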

\section{Local and global Existence of solutions}
\label{para3} We use an iterative scheme.
\subsection{Construction of the iterated sequence}
We adopt the notations introduced in paragraph~\ref{para2}. We
construct the sequence $v_{n}=(g_{n}, k_{n}, E_{n}, F_{n},  u_{n},
\Phi_{n}, U_{n}, \rho_{n}, e_{n})$,  $n\in\mathbb{N}$ as follows:
\begin{enumerate}
    \item[$\bullet$] Set $g_{0}=g^{0}$; $k_{0}=k^{0}$;
 $E_{0}=E^{0}$; $F_{0}=F^{0}$; $u_{0}=u^{(0)}$;
$\Phi_{0}=\Phi^{0}$; $U_{0}=U^{0}$; $\rho_{0}=\rho^{0}$;
$e_{0}=e^{0}$.
    \item[$\bullet$]If $g_n$, $k_n$, $E_n$, $F_n$, $u_n$, $\Phi_n$, $U_{n}$, $\rho_{n}$, $e_n$,
    are known: define $\tilde{T}_{n,\alpha\beta}$,
$\tilde{\tau}_{n,\alpha\beta}$ by substituting $\widetilde{g}$,
$\widetilde{F}^{0i}$, $\widetilde{F}_{ij}$,  $\Phi$ in the
expressions (\ref{1.5}), (\ref{1.6}) of $\tilde{T}_{\alpha\beta}$,
$\tilde{\tau}_{\alpha\beta}$ by $g_{n}$, $E_{n}$, $F_{n}$,
$\Phi_{n}$.
    \item[$\bullet$]Define $v_{n+1}=(g_{n+1}, k_{n+1}, E_{n+1}, F_{n+1},  u_{n+1},
\Phi_{n+1}, U_{n+1}, \rho_{n+1}, e_{n+1})$ as solution of the
\textit{linear} ordinary differential equations (o.d.e) obtained by
substituting $g$, $k$, $E$, $F$,  $u$, $\Phi$, $U$, $\rho$, $e$,
$T_{\alpha\beta}$,
 $\tau_{\alpha\beta}$ in the r.h.s of the evolution system
(\ref{1.46}) to (\ref{1.54}), by $g_{n}$, $k_{n}$, $E_{n}$, $F_{n}$,
$u_{n}$, $\phi_{n}$, $U_{n}$,
 $\rho_{n}$, $e_{n}$,  $T_{n,\alpha\beta}$,
 $\tau_{n,\alpha\beta}$.
\end{enumerate}
 It is very important to notice that, for
 every $n$ the initial data for the linear o.d.e's are the \textit{same
 initial data} $g^{0}$, $k^{0}$, $E^{0}$, $F^{0}$,
$u^{(0)}$, $\Phi^{0}$, $U^{0}$, $\rho^{0}$ and $e^{0}$. We obtain
this way a sequence $v_{n}=$($g_{n}$, $k_{n}$, $E_{n}$, $F_{n}$,
$u_{n}$, $\Phi_{n}$, $U_{n}$, $\rho_{n}$, $e_{n}$) defined in a
maximal interval $[0, T_{n}[$, $T_{n}> 0$.
\subsection{Boundedness of the iterated sequence}
\begin{proposition}\label{prop2}
\hspace*{2cm}\\
There exists $T>0$ \textbf{independent of $n$}, such that the
iterated sequence \\$v_{n}=$($g_{n}$, $k_{n}$, $E_{n}$, $F_{n}$,
$u_{n}$, $\Phi_{n}$, $U_{n}$, $\rho_{n}$, $e_{n}$) is defined and
uniformly bounded over $[0, T[$.
\end{proposition}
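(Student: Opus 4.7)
The strategy is a standard invariant-ball Picard argument. I exhibit a closed subset $\mathcal{K}$ of the space of continuous vector-valued functions on $[0,T]$ which contains the constant function $t\mapsto v^{0}$ in its interior and is preserved by the iteration map $v_{n}\mapsto v_{n+1}$ for a choice of $T>0$ depending only on $\mathcal{K}$ and on the fixed parameters $m,\Lambda,(C^{k}_{ij})$, hence independent of $n$. Induction starting from $v_{0}\equiv v^{0}$ then yields $v_{n}\in\mathcal{K}$ on $[0,T]$ for every $n$, so that $T_{n}\ge T$ and the uniform bound follows.

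To define $\mathcal{K}$, pick constants $m_{g},M_{g},M_{k},\ldots,M_{e},m_{\rho}>0$ with $m_{g} g^{0}\le g^{0}\le M_{g} g^{0}$ (spectral order), $|k^{0}|\le M_{k}$, etc., and $\rho^{0}\ge m_{\rho}$. Let $\mathcal{K}$ be the set of continuous tuples $(g,k,\ldots,e)$ on $[0,T]$ satisfying $\tfrac{1}{2} m_{g} g^{0}\le g\le 2M_{g} g^{0}$, $|k|\le 2M_{k}$, $\ldots$, $|e|\le 2M_{e}$, $\rho\ge\tfrac{1}{2} m_{\rho}$. On $\mathcal{K}$ every component of the right-hand side of the evolution system \eqref{1.46}--\eqref{1.54} is uniformly bounded by some constant $M$: indeed $g^{ij}$ is controlled by the lower spectral bound on $g$; $\gamma^{l}_{ij}$ and $R_{ij}$ are polynomial in $g_{ij},g^{ij}$ and the structure constants $C^{k}_{ij}$ via \eqref{1.19}, \eqref{1.55}; $u^{0}=\sqrt{1+g_{ij}u^{i}u^{j}}\ge 1$ is bounded above; $1/\rho\le 2/m_{\rho}$; and the tensors $T_{\alpha\beta},\tau_{\alpha\beta}$ are polynomial in the bounded entries by \eqref{1.23}, \eqref{1.25}. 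Because the iteration reduces here to $\dot v_{n+1}=F(v_{n})$ with $F$ evaluated entrywise at $v_{n}\in\mathcal{K}$, direct integration gives $|v_{n+1}(t)-v^{0}|\le Mt$, and choosing $T\le T_{0}$ with $MT_{0}$ smaller than every margin $\tfrac{1}{2} m_{g}$, $M_{k}$, $\ldots$, $\tfrac{1}{2} m_{\rho}$ defining $\mathcal{K}$ ensures $v_{n+1}\in\mathcal{K}$.

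The principal obstacle is preserving the \emph{lower} bounds that guarantee positive-definiteness of $g_{n+1}$ and strict positivity of $\rho_{n+1}$, since both appear in denominators throughout the right-hand side (via $g^{ij}$ in almost every equation and via $1/\rho$ in \eqref{1.50} and \eqref{1.54}). This is handled by the same sup-norm estimate $|v_{n+1}(t)-v^{0}|\le Mt$, which on $[0,T]$ keeps the smallest eigenvalue of $g_{n+1}$ above $\tfrac{1}{2}m_{g}$ times that of $g^{0}$, and keeps $\rho_{n+1}\ge \tfrac{1}{2}m_{\rho}$. A secondary difficulty is the collection of quadratic and higher nonlinearities in \eqref{1.47}, \eqref{1.50}, \eqref{1.53}, \eqref{1.54} such as $k^{l}_{j}k_{il}$, $\gamma^{i}_{jk}u^{j}u^{k}/u^{0}$ and $E^{j}u^{i}/(u^{0})^{2}$, but these are tamed by Lemma~\ref{lem4}, whose estimates \eqref{2.29} bound $|u^{i}/u^{0}|$ and $|F^{0i}|$ by a power of $|g|$ and therefore remain controlled throughout $\mathcal{K}$.

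Finally, once the uniform inclusion $v_{n}\in\mathcal{K}$ on $[0,T]$ is established, maximality of the interval of definition $[0,T_{n}[$ forces $T_{n}\ge T$, which is the full content of the proposition.
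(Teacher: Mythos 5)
Your proposal is correct and is essentially the paper's own argument: a bootstrap that keeps every iterate in a fixed neighbourhood of the initial data on a common interval $[0,T]$ by integrating the uniformly bounded right-hand side $\dot v_{n+1}=F(v_n)$, with the same special attention to the two delicate lower bounds (non-degeneracy of $g_{n+1}$ and strict positivity of $\rho_{n+1}$, the latter handled identically via $\dot\rho_{n+1}=G_n$, $|G_n|\le C$, hence $\rho_{n+1}\ge\rho^{0}-Ct\ge\rho^{0}/2$). The only real difference is a sub-step: the paper controls $(\det g_N)^{-1}$ through the exact exponential formula coming from $\tfrac{d}{dt}\ln(\det g_N)=2g_N^{ij}k_{N-1,ij}$, whereas you keep the smallest eigenvalue of $g_{n+1}$ away from zero directly from the entrywise estimate $|g_{n+1}-g^{0}|\le Mt$; both are valid.
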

\begin{proof}
\hspace*{2cm}\\
Let $N\in \mathbb{N}$, $N>1$, be an integer. Suppose that we have,
for $n\leq N-1$, the inequalities:
\begin{numcases}
\strut
\nonumber|g_{n}-g_{0}|\leq  A_1 \; ; \; (detg_{n})^{-1}\leq  A_2 \; ; \;|k_{n}-k_{0}|\leq A_3\; ; \;  |E_{n}-E_{0}|\leq A_4 \\
\nonumber|F_{n}-F_{0}| \leq A_5  \; ;
\;|u_{n}-u_{0}|\leq A_6\; ; \;|\phi_{n}-\phi_{0}|\leq A_7  \\
| U_{n}-U_{0}| \leq A_8 \; ; \; |\rho_{n}-\rho_{0}| \leq A_9 \; ;
\;|e_{(n)}-e_{(0)}| \leq A_{10}\label{3.1}
\end{numcases}
where $ A_i > 0$, $i=1$, 2, 3, 4, 5, 6, 7, 8, 9, 10 are given
constants.
\\\indent
We are going to prove that one can choose the constants $A_{i}$ such
that (\ref{3.1}) still holds for $n=N$ on $[0, T[$, $T>0$,
sufficiently small. Notice that the expression of
$(g^{ij}_{n})=(g_{n, ij})^{-1}$ contains $(det g_{n})^{-1}$.
\begin{enumerate}
    \item[$\bullet$]Integrating over $[0, t]$, $t>0$, the linear
    o.d.e satisfied by: $g_{N}$, $k_{N}$, $E_{N}$, $F_{N}$,
$u_{N}$, $\Phi_{N}$, $U_{N}$, $\rho_{N}$ yields:
\begin{numcases}
\strut \nonumber|g_{N}-g_{0}| \leq B_1t \; ; \; |k_{N}-k_{0}|\leq
B_3t\;
; \;|E_{N}-E_{0}| \leq B_4t\;; \; |F_{N}-F_{0}|\leq B_5t \\
|\phi_{N}-\phi_{0}| \leq B_7t \; ;\;|U_{N}-U_{0}|\leq B_8t \;
;\;|\rho_{N}-\rho_{0}|\leq B_9t \label{3.2}
\end{numcases}
 where $B_i > 0$, $i=1$,  3, 4, 5,  7, 8, 9 are constants depending
 only on the $A_{i}$.\\
 We now study the cases of $(det g_{N})^{-1}$, $u_{N}$ and $e_{N}$.
    \item[$\bullet$]
    The iterated equation satisfied by $g_{N}$ writes, using
    (\ref{1.46}):
\begin{equation}\label{3.3}
    \dot{g}_{N,ij}=-2k_{N-1,ij}.
\end{equation}
Recall the formula:
\begin{equation}\label{3.4}
    \frac{d}{dt}[ln(detg_{N}]=g^{ij}_{N}\partial_{t} g_{N,ij};
\end{equation}
on the other hand we have:
\begin{equation}\label{3.5}
 \frac{d}{dt}[ln(detg_{N})]=(detg_{N})^{-1}\frac{d}{dt}(detg_{N})=-(detg_{N})\frac{d}{dt}(detg_{N})^{-1}.
\end{equation}
(\ref{3.5}) and (\ref{3.4}) then give, using (\ref{3.3}):
\begin{equation*}
    \frac{d}{dt}(detg_{N})^{-1}=(2g^{ij}_{N} k_{N-1,ij})(detg_{N})^{-1}
\end{equation*}
an o.d.e in $(detg_{N})^{-1}$ which integrate at once over $[0, t]$,
$t>0$ to give:
\begin{equation}\label{3.6}
    (detg_{N})^{-1}=(detg^0)^{-1}\exp \Big(\int^t_02(g^{ij}_{N} k_{N-1,ij})(s)ds\Big)
\end{equation}
Now (\ref{3.3}) which is analogous to (\ref{1.46}) shows that
$g_{N}$ and $k_{N-1}$ are the first and second fundamental forms of
a space-like hypersurface; so $g^{ij}_{N}
k_{N-1,ij}=k^{i}_{N-1,i}=tr(k_{N-1})$. We then deduce from
(\ref{3.6}) using (\ref{3.2}), that:
\begin{equation}\label{3.7}
(detg_{N})^{-1}\leq (detg^0)^{-1} \exp(C_{1}t)
\end{equation}
where $C_{1}> 0$ is a constant depending only on $A_{i}$ and
$|g^{0}|$, $|k^{0}|$. Hence, using (\ref{3.7}), it appears that if
we take in (\ref{3.1}): $A_{2}>(detg^0)^{-1} $ i.e
$(detg^0)A_{2}>1$, then given the continuity of $t\longmapsto
\exp(C_{1}t)$, we will have for $t$ sufficiently small
$(detg^0)A_{2}>\exp(C_{1}t)>1$. Then, there exits $t_{1}>0$ such
that, for $0<t<t_{1}$ we have, using (\ref{3.7}):
\begin{equation}\label{3.8}
(detg_{N})^{-1}\leq A_{2}.
\end{equation}
\item[$\bullet$]
Next, in order to have for $u_{N}$ and $e_{N}$ inequalities
analogous to (\ref{3.2}), we need to bound $\dfrac{1}{\rho_{n}}$
which appears in the iterated equations in $u_{n+1}$, $e_{n+1}$,
built from equations (\ref{1.50}) and (\ref{1.54}) in $u$ and $e$.
Now following the definition of the iterated sequence $(v_{n})$ and
given equation (\ref{1.53}) in $\rho$, $\rho_{n+1}$ satisfies, for
$0\leq n \leq N-1$
\begin{equation}\label{3.9}
\dot{\rho}_{n+1}=G_{n}
\end{equation}
where
\begin{equation}\label{3.10}
G_{n}=-\Big[ k_{n,
ij}\frac{u_{n}^{i}u_{n}^{j}}{(u_{n}^{0})^2}+C^{i}_{ij}\frac{u_{n}^{j}}{u_{n}^{0}}-k^{i}_{n,
i}\Big]\rho_{n} + 4 \pi e_{n} g_{n,
ij}\frac{u_{n}^{i}E_{n}^{j}}{(u_{n}^{0})^2}.
\end{equation}
\end{enumerate}
We have in the expression (\ref{3.10}) of $G_n$, using (\ref{2.29}):
\begin{equation*}
    \Big|\frac{u^{i}_{n}}{u^{0}_{n}}\Big| \leq C
    |g_{n}|^{\frac{3}{2}},\;\; and \;\;
    \frac{|u^{i}_{n}|}{(u^{0}_{n})^2}=\frac{|u^{i}_{n}|}{u^{0}_{n}}\frac{1}{u^{0}_{n}}
    \leq\frac{|u^{i}_{n}|}{u^{0}_{n}} \;\; since
    \;\;u^{0}_{n}\geq 1
\end{equation*}
Then, by  (\ref{3.1}), $G_n$ is bounded, i.e, $\exists\; C > 0$,
$|G_n|< C$; hence, $|\dot{\rho}_{n+1}|\leq C$ and this implies:
\begin{equation}\label{3.11}
\dot{\rho}_{n+1}\geq -C.
\end{equation}
Integrating (\ref{3.11}) over $[0, t]$, $t>0$ yields $\rho_{n+1}\geq
\rho^{0}-Ct$ but since $\rho^{0}> 0$, there exits $t_{2}>0$ such
that for $t\in[0, t_{2}] $ we have $Ct<\frac{\rho^{0}}{2}$; then
$$(0\leq t\leq t_{2})\Longrightarrow(\rho_{n+1}\geq
\frac{\rho^{0}}{2})$$ thus,
$\frac{1}{\rho_{n+1}}\leq\frac{2}{\rho^{0}}$ and the sequence
$$\Big(\frac{1}{\rho_{n}}\Big)$$ is bounded. Hence $u_{N}$ and $e_{N}$
also satisfy:
\begin{equation}\label{3.12}
    |u_{N}-u_{0}|\leq B_{6}t   ; \;\;       |e_{N}-e_{0}|\leq B_{10}t
\end{equation}
where as in (\ref{3.12}) $B_{6}$, $B_{10}$ are constant depending
only on the $A_{i}$. We then conclude that if $T>0$ is such that:
\begin{equation*}
    T< \inf(t_{1}, t_{2}),\;\; B_{i}T<A_{i}, \;\;i=1,2,...,10,
\end{equation*}
then, by (\ref{3.2}) and (\ref{3.12}) $v_{N}$ also satisfies
(\ref{3.1}) on $[0, T[$. Hence, the iterated sequence $(v_{n})$ is
uniformly bounded over $[0, T[$.
\end{proof}
\subsection{Local existence of solutions}
\begin{theorem}\label{ttheo2}
\hspace*{2cm}\\
The initial value problem for the Einstein-Maxwell-Scalar Field
system has a unique local solution.
\end{theorem}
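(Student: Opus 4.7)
My plan is to upgrade the uniform boundedness of Proposition~\ref{prop2} into a contraction property and then close a standard Picard-type argument. The three stages will be: (i) show that $(v_n)$ is Cauchy in $C^{0}([0,T'])$ for some $0<T'\leq T$; (ii) pass to the limit in the integrated form of the iterated linear ODEs to obtain a solution of the nonlinear system (\ref{1.46})--(\ref{1.54}); (iii) establish uniqueness via a Gronwall estimate.

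For step (i), I would set $w_n=v_{n+1}-v_n$ and subtract the linear ODEs satisfied by $v_{n+1}$ and $v_n$. Every right-hand side in (\ref{1.46})--(\ref{1.54}) is rational in the components of $v_n$, with singularities only at $\det g_n=0$, $\rho_n=0$ and $u_n^{0}=0$. Proposition~\ref{prop2} provides exactly the bounds $(\det g_n)^{-1}\leq A_2$ and $\rho_n\geq \rho^{0}/2$, while (\ref{1.11}) gives $u_n^{0}\geq 1$, so on the bounded non-degenerate region trapping the whole iteration the right-hand sides are $C^{1}$ in their arguments with Lipschitz constant $K$ depending only on the $A_i$ and the initial data. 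Since $w_n(0)=0$ (all iterates share the same initial data), integrating on $[0,t]$ yields
\begin{equation*}
\sup_{[0,t]}|w_n|\;\leq\;K\,t\,\sup_{[0,t]}|w_{n-1}|,
\end{equation*}
and choosing $T'\in(0,T)$ with $KT'<1$ gives geometric decay, hence the Cauchy property.

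For step (ii), uniform convergence $v_n\to v$ in $C^{0}([0,T'])$ allows passage to the limit in the Volterra form $v_n(t)=v^{0}+\int_0^{t}\Psi(v_{n-1})(s)\,ds$, where $\Psi$ denotes the continuous right-hand side of the full nonlinear system; the limit $v$ is then continuous, satisfies the initial data, and, after differentiating in $t$, satisfies the entire evolution system, so in particular $v\in C^{1}$. For step (iii), two solutions $v,\bar v$ with the same initial data obey $|v(t)-\bar v(t)|\leq K\int_0^{t}|v(s)-\bar v(s)|\,ds$ by the same Lipschitz estimate, and Gronwall's lemma forces $v\equiv \bar v$ on $[0,T']$.

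The main, and really the only nontrivial, obstacle is ensuring that the Lipschitz constant $K$ is genuinely uniform across the iteration; this reduces entirely to the control of the singular factors $(\det g_n)^{-1}$, $1/\rho_n$ and $1/u_n^{0}$, which is precisely why Proposition~\ref{prop2} handled $(\det g_n)^{-1}$, $u_N$ and $e_N$ separately from the other variables. Once that is in hand, everything else is a quantitative version of standard Picard iteration.
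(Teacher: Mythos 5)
Your overall architecture (Picard iteration, uniform Lipschitz bound on the trapped region, Gronwall for uniqueness) is the same as the paper's, but there is one genuine gap in your step (i): the claim that every right-hand side of (\ref{1.46})--(\ref{1.54}) is \emph{rational} in the components of $v_n$, with singularities only at $\det g_n=0$, $\rho_n=0$ and $u_n^{0}=0$, is false. Equations (\ref{1.51}) and (\ref{1.52}) contain the term $\sqrt{2U}$, which is not rational and, more importantly, is not Lipschitz at $U=0$: when you subtract consecutive iterates you are forced to write
\begin{equation*}
\sqrt{2U_{n+1}}-\sqrt{2U_{n}}=\frac{2(U_{n+1}-U_{n})}{\sqrt{2U_{n+1}}+\sqrt{2U_{n}}},
\end{equation*}
and your uniform Lipschitz constant $K$ blows up unless you first show that $\bigl(1/\sqrt{2U_{n}}\bigr)$ is uniformly bounded, i.e.\ that the iterates $U_n$ stay uniformly away from zero. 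Proposition~\ref{prop2} does not give you this; it only gives an upper bound $|U_n-U_0|\leq A_8$. This is precisely why the paper's proof opens with a separate argument: using the iterated equation $\dot{U}_{n+1}=2H_nU_n-m^2\Phi_n\sqrt{2U_n}$, the boundedness of its right-hand side, and the hypothesis $U^{0}>0$ on the initial data, one integrates to get $U_{n+1}\geq U^{0}-Ct\geq U^{0}/2$ for $t$ small, hence $1/\sqrt{2U_{n+1}}\leq 1/\sqrt{U^{0}}$. The repair is exactly parallel to the argument you already run for $1/\rho_n$, so the gap is fixable, but as written your "only nontrivial obstacle" list omits the one singular factor that is not already handled by Proposition~\ref{prop2}.

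The remaining differences from the paper are cosmetic. You close the iteration by shrinking $T'$ so that $KT'<1$ (geometric decay), whereas the paper keeps the full interval and iterates the Volterra inequality $\alpha_n(t)\leq C_2\int_0^t\alpha_{n-1}(s)\,ds$ to get factorial decay $\alpha_n\leq \|\alpha_2\|_\infty (C_2t)^{n-2}/(n-2)!$; both yield uniform convergence, the paper's version merely avoiding a further restriction of the time interval. Your passage to the limit in the Volterra form and your Gronwall uniqueness argument coincide with the paper's (the paper additionally notes, via its estimate (\ref{3.17}), that the derivatives $\dot v_n$ converge uniformly, which is what you obtain implicitly by differentiating the limiting integral equation).
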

\begin{proof}
\hspace*{2cm}\\
We prove that the iterated sequence $(v_{n})$ converges uniformly on
each bounded interval $[0, \delta]\subset[0, T]$, $\delta>0$. For
this purpose, we study the difference \\$v_{n+1}-v_{n}$. But given
the evolution equations (\ref{1.51}) and (\ref{1.52}) in $\Phi$ and
U, we will deal with the difference:
\begin{equation*}
    \sqrt{2U_{n+1}}-\sqrt{U_{n}}=\frac{2(U_{n+1}-U_{n})}{\sqrt{2U_{n+1}}+\sqrt{U_{n}}}.
\end{equation*}
We then need to show first of all that the sequence
$$\Big(\frac{1}{\sqrt{2 U_{n}}}\Big)$$ is uniformly bounded.
\begin{enumerate}
    \item[$\bullet$] By (\ref{1.52}), the iterated equation
    providing $U_{n+1}$ writes:
    \begin{equation}\label{3.14'}
       \dot{U}_{n+1}=2H_{n}U_{n}-m^2\Phi_{n}\sqrt{2U_{n}}.
    \end{equation}
But by Proposition~\ref{prop2}, there exits a constant $C>0$ such
that we have over $[0, T[$:
\begin{equation*}
|2H_{n}U_{n}-m^2\Phi_{n}\sqrt{2U_{n}}|\leq C
\end{equation*}
(\ref{3.14'}) then gives:
\begin{equation*}
    \frac{d U_{n+1}}{dt}\geq -C
\end{equation*}
and integrating over $[0, t]$, $0\leq t<T$ yields:
\begin{equation*}
U_{n+1}\geq U^{0}-Ct.
\end{equation*}
Recall that $U^{0}>0$. Then taking $t$ sufficiently small such that
$Ct\leq \frac{U^{0}}{2}$ we have $U_{n+1}\geq \frac{U_{0}}{2}$. Then
$$ \frac{1}{\sqrt{2U_{n+1}}} \leq \frac{1}{\sqrt{U^{0}}}$$ which shows that
$\Big(\frac{1}{\sqrt{2 U_{n}}}\Big)$ is uniformly bounded over $[0,
T[$, $T>0$ small enough.

    \item[$\bullet$]Since $(\frac{1}{\rho_{n}})$ is also bounded,
    taking the difference between two consecutive iterated
    equations, we deduce from the evolution system, using\\
    $v_{n}(0)=v^{0}$, $\forall n$, that there exits a constant
    $C_{2}>0$ such that:
\begin{eqnarray}
\nonumber & &\hspace*{-0.5cm}|g_{n+1}-g_{n}|+|k_{n+1}-k_{n}|+|E_{n+1}-E_{n}|+|F_{n+1}-F_{n}|+|u_{n+1}-u_{n}|+ \\
\nonumber&&|\Phi_{n+1}-\Phi_{n}|+|U_{n+1}-U_{n}|+|\rho_{n+1}-\rho_{n}|
\nonumber   +|e_{n+1}-e_{n}|\leq \\ &&
\nonumber    C_2\int_0^{t}\Big(|g_{n}-g_{n-1}|+|k_{n}-k_{n-1}|+|E_{n}-E_{n-1}|+|F_{n}-F_{n-1}|+  \\
\nonumber   & &|u_{n}-u_{n-1}|+|\Phi_{n}-\Phi_{n-1}|+|U_{n}-U_{n-1}|+|\rho_{n}-\rho_{n-1}|+  \\
            & &|e_{n}-e_{n-1}| \Big)(s) ds.  \label{3.13}
\end{eqnarray}
So, if we set:
\begin{eqnarray}
\nonumber
\alpha_{n}&=&|g_{n+1}-g_{n}|+|k_{n+1}-k_{n}|+|F_{n+1}-F_{n}|+|E_{n+1}-E_{n}|+
|u_{n+1}-u_{n}|+\\
            &&|\Phi_{n+1}-\Phi_{n}|+|U_{n+1}-U_{n}|+|\rho_{n+1}-\rho_{n}|+|e_{n+1}-e_{n}| \label{3.14}
\end{eqnarray}
then (\ref{3.13}) shows that we have:
\begin{equation}\label{3.15}
    \alpha_{n}(t)\leq C_{2}\int_{0}^{t} \alpha_{n-1}(s)ds.
\end{equation}

(\ref{3.15}) gives, by an immediate induction on $n$:
\begin{equation}\label{3.16}
\alpha_{n}(t)\leq \|\alpha_{2}\|_{\infty}\frac{(C_{2
t})^{n-2}}{(n-2)!}.
\end{equation}
But since the series $\sum\limits_{n=0}^{+\infty}\frac{C^{n}}{n!}$
is convergent we have necessarily $\alpha_{n}(t) \longrightarrow 0$
as $n \longrightarrow +\infty$. Definition (\ref{3.14}) of
$\alpha_{n}$ then shows that each of the sequences $(g_{n})$,
$(k_{n})$, $(E_{n})$, $(F_{n})$,  $(u_{n})$, $(\Phi_{n})$,
$(U_{n})$, $(\rho_{n})$, $(e_{n})$ converges uniformly on each
bounded interval $[0, \delta]$, $0<\delta<T$ and that their
respective limits denoted: $g$, $k$,  $E$, $F$, $u$, $\Phi$, $U$,
$\rho$ and $e$ are continuous function of $t$. From the iterated
equations, it appears immediately that there exits a constant
$C_{3}>0$ such that:
\begin{equation}\label{3.17}
    \Big|\frac{d v_{n+1}}{dt} -\frac{d v_{n}}{dt}\Big|\leq C_{3}|v_{n+1}-v_{n}|
\end{equation}
where we set, for $v=(v^{i})$, $|v|=\sum|v^{i}|$. The convergence of
$(v_{n})$ then implies, given (\ref{3.17}), the convergence of
$$\Big(\frac{d v_{n}}{dt}\Big)$$ and hence, that each of the sequences $(\dot{g}_{n})$,
$(\dot{k}_{n})$, $(\dot{E}_{n})$, $(\dot{F}_{n})$,  $(\dot{u}_{n})$,
$(\dot{\Phi}_{n})$, $(\dot{U}_{n})$, $(\dot{\rho}_{n})$,
$(\dot{e}_{n})$ converges uniformly on each interval $[0, \delta]$,
$0<\delta<T$. Consequently the limit functions $g$, $k$,  $E$, $F$,
$u$, $\Phi$, $U$, $\rho$ and $e$ are of class $\mathcal{C}^{1}$ and
$v:=(g, k,  E, F, u, \Phi, U, \rho, e)$ is a local solution of the
coupled Einstein-Maxwell-Massive Scalar Field system.
\\\indent Now
to prove the \textit{uniqueness} of the solution, suppose $v_{1}$
and $v_{2}$ are two solutions of the Cauchy problem , with the
\textit{same initial data}. Then, defining $\alpha(t)$ the same way
as $\alpha_{n}$ (see (\ref{3.14})) for the difference
$|v_{1}-v_{2}|$, leads, using the evolution system to:
\begin{equation*}
    \alpha(t)\leq \int_{0}^{t}\alpha(s)ds
\end{equation*}
which gives by Gronwall Lemma $\alpha=0$; hence $v_{1}=v_{2}$ and
uniqueness follows.
\end{enumerate}
\end{proof}
\subsection{Global existence of solutions}
We prove:
\begin{theorem}\label{ttheo33}
\hspace*{2cm}\\
Let $\Lambda > -4 \pi m^{2}(\Phi^{0})^2 $ be given and suppose
$H^{0}:=(g^{0})^{ij}k^{0}_{ij}<0$. Then the initial value problem
for the Einstein-Maxwell-Massive Scalar Field system has a unique
global solution defined all over the interval $[0, +\infty[$.
\end{theorem}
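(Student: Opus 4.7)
The plan is to extend the local solution from Theorem \ref{ttheo2} to $[0,+\infty[$ by the standard continuation argument: assume the maximal interval of existence is $[0,T^\ast[$ with $T^\ast<+\infty$ and derive a contradiction by proving that every component of $v=(g,k,E,F,u,\Phi,U,\rho,e)$ admits a uniform bound on $[0,T^\ast[$. Once all unknowns are uniformly bounded and $(\det g)^{-1}$ stays bounded as well, one can apply Theorem \ref{ttheo2} at a time $t_0$ close to $T^\ast$ to extend the solution past $T^\ast$, contradicting maximality. Since the initial mean curvature is strictly negative and $\Lambda>-4\pi m^{2}(\Phi^{0})^{2}$, Theorem \ref{ttheo1} applies and gives the crucial a priori estimate $H(0)\le H\le -\sqrt{3\Lambda+12\pi m^{2}(\Phi^{0})^{2}}$, so $|H|$ is bounded and $H$ stays strictly negative throughout $[0,T^\ast[$.

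First, I would bound the scalar-field quantities by a conservation-type argument. Define $\mathcal{E}(t)=U+\tfrac{1}{2}m^{2}\Phi^{2}$. Using the evolution equations \eqref{1.51} and \eqref{1.52}, a direct computation yields $\dot{\mathcal{E}}=2HU-m^{2}\Phi\sqrt{2U}+m^{2}\Phi\sqrt{2U}=2HU\le 0$, so $\mathcal{E}\le \mathcal{E}(0)$ on the whole interval of existence. Combined with $\Phi\ge\Phi^{0}>0$ coming from $\dot{\Phi}\ge 0$, this gives uniform two-sided bounds on $\Phi$ and $U$.

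Next, I would exploit the Hamiltonian constraint \eqref{1.56} together with the fact $R\le 0$ (valid on the Bianchi I--VIII models, recalled just after (2.20) via \cite{10}, \cite{26}) to bound $k$ and the matter/EM energies. Rewriting \eqref{1.56} as
\begin{equation*}
k_{ij}k^{ij}+16\pi(T_{00}+\tau_{00}+\rho u_{0}^{2})=R+H^{2}-2\Lambda\le H^{2}-2\Lambda,
\end{equation*}
and noting $H^{2}-2\Lambda\ge \Lambda+12\pi m^{2}(\Phi^{0})^{2}>0$ (this is where $\Lambda>-4\pi m^{2}(\Phi^{0})^{2}$ is used again, via Theorem \ref{ttheo1}), each non-negative summand on the left is uniformly bounded. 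In particular $k^{ij}k_{ij}$, $\tau_{00}$ and $\rho u_{0}^{2}$ are bounded, and by \eqref{1.25} the electric/magnetic contractions $g_{ij}E^{i}E^{j}$ and $g^{ik}g^{jl}F_{ij}F_{kl}$ are bounded. From \eqref{1.46} and Lemma \ref{lem3}, and \eqref{2.28}, $|k|\le \|g\|(k^{ij}k_{ij})^{1/2}$, so $|\dot g|\le C\|g\|$, which by Gronwall gives polynomial control of $|g|$ on any bounded interval; simultaneously $\tfrac{d}{dt}\log(\det g)=-2H$ with $|H|$ bounded keeps $(\det g)^{-1}$ bounded on $[0,T^\ast[$. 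With $|g|$, $(\det g)^{-1}$, $H$, $k^{ij}k_{ij}$, $\Phi$, $U$ and the EM energies under control, the evolution equations \eqref{1.48}--\eqref{1.50}, \eqref{1.53}, \eqref{1.54} become linear ODEs in $E$, $F$, $u$, $\rho$, $e$ with coefficients uniformly bounded on $[0,T^\ast[$ (using Lemma \ref{lem4} to control $u^{i}/u^{0}$, and the lower bound $\rho\ge \rho^{0}\exp(\cdots)>0$ from \eqref{1.32} to keep $1/\rho$ bounded), so Gronwall's lemma yields uniform bounds for all remaining unknowns.

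The main obstacle is the entanglement between the metric bound, the relative-norm bound on $k$, and the non-degeneracy of $g$: the Hamiltonian-constraint argument gives $k^{ij}k_{ij}$ bounded only after one knows that $H^{2}-2\Lambda>0$ (which is precisely where the hypothesis $\Lambda>-4\pi m^{2}(\Phi^{0})^{2}$ enters through Theorem \ref{ttheo1}), and the passage from this to a pointwise bound on $|k|$ requires a simultaneous bound on $\|g\|$ via Lemma \ref{lem3}. Closing this loop without a blow-up of $(\det g)^{-1}$ is the delicate point; it is handled by the strict negativity of $H$, which forces $\det g$ to be increasing, hence bounded away from zero on forward time intervals. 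Once this circle is closed the extension argument is routine.
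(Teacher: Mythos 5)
Your overall strategy is the same as the paper's: use Theorem \ref{ttheo1} for the a priori bound on $H$, then chain the Hamiltonian constraint, the inequality $R\le 0$ and Gronwall-type estimates to bound every unknown on a bounded interval $[0,T^{*}[$, and conclude by continuation. Two of your steps are in fact cleaner than the paper's. First, you bound $\Phi$ and $U$ by the monotone energy $U+\tfrac{1}{2}m^{2}\Phi^{2}$, whose derivative is $2HU\le 0$, where the paper extracts the same bounds from \eqref{2.21} and \eqref{2.20}. Second, you read the pointwise bound on $k_{ij}k^{ij}$ directly off the Hamiltonian constraint \eqref{1.56} rewritten as $k_{ij}k^{ij}+16\pi(T_{00}+\tau_{00}+\rho u_{0}^{2})=R+H^{2}-2\Lambda\le H^{2}-2\Lambda$; the paper instead first derives only the integral estimate \eqref{3.19}, uses it with the Schwarz inequality in \eqref{3.22} to bound $\|g\|$, and only afterwards obtains the pointwise bound on $k_{ij}k^{ij}$. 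Your shortcut removes that detour, since $\|\dot g\|=2\|k\|\le 2\|g\|(k_{ij}k^{ij})^{1/2}\le C\|g\|$ then closes by Gronwall directly. (One small slip: what you need from $H^{2}-2\Lambda$ is an upper bound, namely $(H(0))^{2}-2\Lambda$, not the strict positivity you invoke; boundedness of $H$ from Theorem \ref{ttheo1} supplies it.)

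The one place where your argument is too quick is the pair $(u^{0},1/\rho)$. Formula \eqref{1.32} gives $\rho=\rho(0)u^{0}(0)(u^{0})^{-1}\exp[\int(\cdots)]$, so a positive lower bound on $\rho$ requires an upper bound on $u^{0}$; but the evolution equation \eqref{3.27} for $u^{0}$ contains the term $4\pi(e/\rho)g_{ij}u^{i}E^{j}/u^{0}$, which involves $1/\rho$ --- exactly the quantity you are trying to control. Your sentence treating the $(u,\rho,e)$ equations as linear ODEs with already-bounded coefficients silently assumes this part of the conclusion. The paper breaks the circle by substituting the explicit expression \eqref{3.26} for $1/\rho$ into \eqref{3.27}, obtaining $\dot u^{0}\le G(t)u^{0}$ with $G$ bounded in terms of quantities already controlled, whence $u^{0}$, then $1/\rho$, then $u^{i}=(u^{i}/u^{0})u^{0}$ are bounded. (Equivalently one may observe, as in \eqref{5.12}, that $e/\rho=e^{0}/\rho^{0}$ is constant, which removes $1/\rho$ from the $u$-equation \eqref{1.50} altogether.) With that repair your proof is complete.
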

\begin{proof}
\hspace*{2cm}\\
Following the standard theory on the first order differential
systems, it will be enough if we could prove, given the evolution
system (\ref{1.46}) to (\ref{1.54}) that, if each of the functions:
$|g|$, $|k|$,  $|E|$, $|F|$,   $|\Phi|$, $|u^{i}|$, $|\rho|$, $|e|$,
$|R_{ij}|$, $|R|$, $(det g)^{-1}$, $|\frac{1}{\rho}|$, is uniformly
bounded over every bounded interval
$[0, T^{*}[$, where $T^{*}< +\infty$.\\
Notice that the hypothesis of Theorem~\ref{ttheo1} are satisfied; so
(\ref{2.15}) applies, i.e H is bounded.
\begin{enumerate}
    \item[$\bullet$]We deduce at once from (\ref{2.21})
    and using (\ref{2.15}) that $\Phi$ is bounded.
    \item[$\bullet$] Since the l.h.s of (\ref{2.20}) is bounded
    and since
     $-R>0$, $\tau_{00}\geq0$, $\sigma^{ij}\sigma_{ij}\geq
    0$, $U\geq0$, $\rho \geq 0$; we deduce that U and $\rho u^{2}_{0}=\rho (u_{0})^2$ are
    bounded;  but $\rho (u_{0})^2\geq \rho \geq 0$, since
    $u^{0}\geq1$; then we deduce that $\rho$ is bounded.
    \item [$\bullet$](\ref{2.18}) gives, using (\ref{2.17})
    to express $\sigma^{ij}\sigma_{ij}$,  and since: $U\geq 0$, $ g^{ij}u_{i}u_{j}\geq
    0$, $\rho u^{2}_{0}\geq0$, $\tau_{00}\geq0$:
\begin{equation}\label{3.18}
    \frac{d H}{dt}\geq k_{ij}k^{ij}-\Lambda-4 \pi m^2 \Phi^2.
\end{equation}
Then integrating (\ref{3.18}) over $[0, t]$, $0<t \leq T^{*}$, we
have, since H and $\Phi$ are bounded:
\begin{equation}\label{3.19}
    \int_{0}^{T^{*}}k_{ij}k^{ij}(s)ds<+\infty
\end{equation}
Next we have, integrating (\ref{1.46}) over $[0, t]$, $t\in [0,
T^{*}]$:
\begin{equation}\label{3.20}
    |g(t)|\leq |g^{0}|+2\int_{0}^{t}|k(s)|ds
\end{equation}
but setting $A_{1}=(g^{ij})$, $A_{2}=(k_{ij})$, (\ref{2.28}) gives:
\begin{equation}\label{3.21}
    \|k\|\leq\|g\|(k_{ij}k^{ij})^{\frac{1}{2}}
\end{equation}
and we deduce from (\ref{3.20}) and (\ref{3.21}) that:
\begin{equation*}
    \|g(t)\|\leq
    \|g^{0}\|+2\int_{0}^{t}\|g(s)\|(k_{ij}k^{ij})^{\frac{1}{2}}(s)ds.
\end{equation*}
Hence, by Gronwall Lemma, there exists a constant $C>0$ such that:
\begin{equation}\label{3.22}
    \|g(t)\|\leq C
    \|g^{0}\|\exp(C\int_{0}^{t}(k_{ij}k^{ij})^{\frac{1}{2}}(s))dt
\end{equation}
but we deduce from (\ref{3.22}) applying Schwarz inequality, using
(\ref{3.19}) and since $T^{*}< +\infty$, that $\|g\|$ and hence
$|g|$ is bounded.

    \item [$\bullet$]By (\ref{1.55}), $R_{ij}$ expresses in
    terms of $\gamma^{k}_{ij}$ given itself by
    (\ref{1.19}), which involves $g^{ij}$; so we need to
    control $(det g)^{-1}$. We use once more the formula: $$\frac{d}{dt}[\ln(det g)]=g^{ij}\frac{d
    g^{ij}}{dt}.$$
    Then, using the evolution equation (\ref{1.46}), we obtain:
    \begin{equation}\label{3.23}
\frac{d}{dt}[\ln(det g)]=-2H.
    \end{equation}
Since H is bounded, we obtain, by integrating (\ref{3.23}):
\begin{equation*}
    -C\leq \ln (det g)\leq C
\end{equation*}
where $C>0$ is a constant. Hence:
\begin{equation*}
 e^{-C}\leq det g\leq e^{C}
\end{equation*}
which shows that, both $detg$ and $(det g)^{-1}$ are bounded. Then,
by (\ref{1.55}), $R_{ij}$ is bounded and $R=g^{ij}R_{ij}$ is
bounded.

    \item [$\bullet$]
    Now consider the Hamiltonian constraint (\ref{1.56})
    which gives, since $\tau_{00}\geq0$, $\rho u^{2}_{0}\geq0$, $T_{00}\geq 0$ (see(\ref{1.23}))
\begin{equation*}
    R+H^2-2\Lambda \geq k_{ij}k^{ij}.
\end{equation*}
Then, since R and $H^2$ are bounded, $k_{ij}k^{ij}$ which is
positive is bounded. Then, since $\|g\|$ is bounded, by (\ref{3.21})
, $\|k\|$ and hence $|k|$, is bounded.

    \item [$\bullet$]Deduce from (\ref{2.20}) whose l.h.s is
    bounded and using  $U\geq0$, $\tau_{00}\geq0$, $\rho u^{2}_{0}\geq0$,  $\sigma^{ij}\sigma_{ij}\geq
    0$,   $-R>0$, that $\tau_{00}$ is bounded. But by
    (\ref{1.25}):

    \begin{equation}\label{3.24}
        \tau_{00}=\frac{1}{2}g_{ij}E^{i}E^{j}+\frac{1}{4}F^{ij}F_{ij}
    \end{equation}
which implies:
\begin{equation}\label{3.25}
0\leq \frac{1}{2}g_{ij}E^{i}E^{j}\leq \tau_{00} \;\; ; \;\; 0\leq
\frac{1}{4}F^{ij}F_{ij}\leq \tau_{00}
\end{equation}
and $g_{ij}E^{i}E^{j}$ is bounded. Hence, using (\ref{2.29}) and
since $|g|$ is bounded, $E^{i}$ is bounded.
    \item [$\bullet$]The constraint (\ref{1.59}) shows,
    since $u^{0}\geq 1$ and $E^{i}$ is bounded, that $e$ is bounded.
    \item [$\bullet$]Integrating the equation (\ref{1.49}) in
    $F_{ij}$ over $[0, t]$, $t\leq T^{*}<+\infty $ shows, since
    $|g|$ and $|E|$ are bounded, that $|F|$ is bounded.
    \item [$\bullet$] It remains the cases of $\frac{1}{\rho}$ and
    $u^{i}$.\\
    Expression (\ref{1.32}) of $\rho$ gives, using the
    notations (\ref{1.45}):
\begin{equation}\label{3.26}
\Big(\frac{1}{\rho}\Big)(t)=\frac{u^0}{\rho^{0}u^0(0)}\exp\Big[\int_{0}^{t}(-H+C^{i}_{ij}\frac{u^{j}}{u^{0}})(s)ds\Big].
\end{equation}
Now as we already indicated, equation (\ref{1.8}) gives for
$\beta=0$
\begin{equation}\label{3.27}
\dot{u}^0=k_{ij}\frac{u^{i}u^{j}}{u^0}-\Big(4\pi
eg_{ij}\frac{u^{i}E^{j}}{u^0}\Big)\frac{1}{\rho}.
\end{equation}
We deduce from (\ref{3.26}) and (\ref{3.27}) that:
\begin{equation}\label{3.28}
\dot{u}^0\leq G(t)u^{0}
\end{equation}
where:
\begin{equation}\label{3.29}
 G(t)=\frac{|k_{ij}u^{i}u^{j}|}{(u^{0})^2}+\frac{|4\pi
eg_{ij}E^{j}|}{\rho^{0}u^{0}(0)}.\frac{|u^{i}|}{u^{0}}
\exp\Big[\int_{0}^{t}(-H+C^{i}_{ij}\frac{u^{j}}{u^{0}})(s)ds\Big].
\end{equation}
Integrating (\ref{3.28}) over $[0. t]$, $t\leq T^{*}< +\infty$
gives:
\begin{equation}\label{3.30}
    u^{0}(t)\leq u^{0}(0)\exp\Big[\int_{0}^{t}G(s)ds\Big]
\end{equation}
then, using (\ref{2.29}) to bound $\dfrac{u^{i}}{u^{0}}$ and since
$k$, $e$, $g$, $E$, $H$ are bounded, by (\ref{3.30}), there exists a
constant $C(T^{*})>0$ such that $u^{0}\leq C(T^{*})u^{0}(0)$. Hence
by (\ref{3.26}), $\dfrac{1}{\rho}$ is bounded and writing
$u^{i}=\frac{u^{i}}{u^{0}}u^{0}$ shows that $u^{i}$ is bounded. This
completes the proof of theorem~\ref{ttheo33}
\end{enumerate}
\end{proof}
\section{Asymptotic behaviour}
\label{para4}
 We consider the global solution over $[0, +\infty[$
and we investigate the asymptotic behaviour of the different
elements at late times. We introduce the following quantity which
plays a key role:
\begin{equation}\label{4.1}
    Q=H^2-24 \pi T_{00}-3\Lambda
\end{equation}
At late times, we have the following asymptotic behaviour:
\begin{theorem}\label{ttheo4.1}
\begin{eqnarray}
   Q&=&\mathcal{O}(e^{-2\gamma t}) \label{4.2} \\
   \sigma_{ij}\sigma^{ij}&=& \mathcal{O}(e^{-2\gamma t}) \label{4.3}\\
   |R|&=& \mathcal{O}(e^{-2\gamma t}) \label{4.4}\\
   \tau_{00} &=&\mathcal{O}(e^{-2\gamma t})  \label{4.5}\\
   \rho &=&\mathcal{O}(e^{-2\gamma t})  \label{4.6}\\
   (\dot{\Phi})^{2}&=& \mathcal{O}(e^{-2\gamma t}) \label{4.7}\\
   F^{ij}F_{ij}&=&\mathcal{O}(e^{-2\gamma t})  \label{4.8}\\
  E^{i}E_{i} &=&\mathcal{O}(e^{-2\gamma t}) \label{4.9}\\
   \|\sigma(t)\|&\leq& C e^{-\gamma t} \|g(t)\| \label{4.10}\\
   \Phi^2 &\longrightarrow &  L>0  \label{4.11}\\
   T_{00} &\longrightarrow & \frac{m^2}{2} L  \label{4.12}\\
    H &=& -(3C_{0})^{\frac{1}{2}}+ \mathcal{O}(e^{-2 \delta t}) \label{4.13}\\
   |e^{-2\delta t}g_{ij}| &\leq& C   \label{4.14}\\
    |e^{2\delta t}g^{ij}|  &\leq& C \label{4.15}\\
   g_{ij}(t) &=&  e^{2\delta t}(G_{ij}+\mathcal{O}(e^{-\gamma t}))  \label{4.16}\\
  g^{ij}(t) &= & e^{-2 \delta t}(G^{ij}+\mathcal{O}(e^{-\gamma t}))  \label{4.17}\\
   |k_{ij}| &\leq & C e^{2 \delta t}  \label{4.18}\\
   |E^{i}| &\leq & C e^{\nu t} \label{4.19}\\
   |F_{ij}| &\leq & C e^{(2\delta+\nu)t} \label{4.20}\\
   |F^{ij}| &\leq &Ce^{(\delta-\gamma)t} \label{4.21}\\
   |e| &\leq & Ce^{\nu t}  \label{4.22}
\end{eqnarray}
where
\begin{eqnarray}
\nonumber&&\gamma=\Big[\frac{\Lambda+4 \pi m^2
(\Phi^{0})^2}{3}\Big]^{\frac{1}{2}}; \;\;C_0=\Lambda+4\pi m^2
L;\;\;\delta=[\frac{1}{3}(\Lambda+4\pi m^2 L)]^{\frac{1}{2}} ;  \\
         && \nu = 3\delta - \gamma;  \label{4.23}
\end{eqnarray}
$G_{ij}$ a symmetric positive definite constant matrix;
$(G^{ij})=(G_{ij})^{-1}$.
\end{theorem}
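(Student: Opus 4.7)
My plan is to introduce $Q$ as a nonnegative scalar satisfying a Gronwall-type linear differential inequality, read off every other asymptotic from a pointwise decomposition of $Q$, and finally integrate the metric and electromagnetic evolution equations using the decay of $\sigma$, $\tau_{00}$, $\rho$ and $e$. The crucial first identity, obtained by substituting (\ref{2.17}) and (\ref{1.23}) into the Hamiltonian constraint (\ref{1.56}), is
\begin{equation*}
Q=\tfrac{3}{2}\sigma_{ij}\sigma^{ij}+24\pi\tau_{00}+24\pi\rho u_{0}^{2}-\tfrac{3}{2}R,
\end{equation*}
manifestly nonnegative term by term on Bianchi I--VIII (recall $-R\geq 0$). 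Differentiating $Q=H^{2}-24\pi T_{00}-3\Lambda$ using $\dot T_{00}=H\dot\Phi^{2}$ (immediate from (\ref{1.51})--(\ref{1.52})) and (\ref{2.1}), the same nonnegative combination reappears and one reaches
\begin{equation*}
\dot Q=2H\Bigl[\tfrac{Q}{3}+\sigma_{ij}\sigma^{ij}+8\pi\tau_{00}+4\pi\rho\bigl(2(u^{0})^{2}-1\bigr)\Bigr];
\end{equation*}
the bracket dominates $Q/3$, and since $H\leq -3\gamma$ by Theorem~\ref{ttheo1} this yields $\dot Q\leq -2\gamma Q$, hence (\ref{4.2}) by Gronwall. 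The decomposition above then produces (\ref{4.3})--(\ref{4.6}) termwise (using $(u^{0})^{2}\geq 1$ to absorb $\rho$), and Lemma~\ref{lem3} applied with $A_{1}=(g^{ij})$, $A_{2}=(\sigma_{ij})$ delivers (\ref{4.10}).

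The bracket in the $Q$-equation is itself $\mathcal{O}(Q)$, hence $\dot H=\mathcal{O}(e^{-2\gamma t})$; since $H$ is nondecreasing and bounded (Theorem~\ref{ttheo1}), $H\to H_{\infty}$ with $H-H_{\infty}=\mathcal{O}(e^{-2\gamma t})$. From $\dot\Phi\geq 0$ and the upper bound on $\Phi$ furnished by (\ref{2.21}), $\Phi^{2}\to L>0$, which is (\ref{4.11}). Taking limits in the identity $12\pi\dot\Phi^{2}=H^{2}-3\Lambda-12\pi m^{2}\Phi^{2}-Q$ and using (\ref{2.21}) forces $\dot\Phi\to 0$ and $H_{\infty}=-\sqrt{3C_{0}}=-3\delta$, which yields (\ref{4.12}). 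For the sharper exponent $\delta$ required in (\ref{4.13}) and for (\ref{4.7}) I bootstrap: knowing $|H|\to 3\delta$, one re-runs the Gronwall argument with $3\delta$ as the effective lower bound on $|H|$, and (\ref{4.7}) then follows from $12\pi\dot\Phi^{2}=(H^{2}-H_{\infty}^{2})+12\pi m^{2}(L-\Phi^{2})-Q$ together with the decay of each summand.

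Finally, $\dot g_{ij}=-\tfrac{2H}{3}g_{ij}-2\sigma_{ij}$ with $H=-3\delta+\mathcal{O}(e^{-2\delta t})$ and $\sigma_{ij}=\mathcal{O}(e^{(2\delta-\gamma)t})$ (from (\ref{4.10}) combined with the bootstrapped growth $\|g\|\lesssim e^{2\delta t}$) shows that $e^{-2\delta t}g_{ij}$ obeys an ODE with exponentially integrable right-hand side, hence converges to a symmetric positive definite $G_{ij}$; this yields (\ref{4.14}), (\ref{4.16}), (\ref{4.18}), and passing to matrix inverses gives (\ref{4.15}), (\ref{4.17}). For the electromagnetic sector, (\ref{3.25}) with (\ref{4.5}) bounds $g_{ij}E^{i}E^{j}$ and $F^{ij}F_{ij}$; combining with the metric asymptotics delivers (\ref{4.19})--(\ref{4.21}), and the constraint (\ref{1.59}) together with $u^{0}\geq 1$ gives (\ref{4.22}). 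The main obstacle will be precisely this rate improvement from $\gamma$ to $\delta$: the initial Gronwall pass yields only exponent $2\gamma$, and one must feed $\Phi\to\sqrt{L}$ back into the Hamiltonian constraint to sharpen $|H|\geq 3\gamma$ into $|H|\geq 3\delta-o(1)$ before re-running the argument. A secondary delicate point is showing that the limit matrix $G_{ij}$ is genuinely nondegenerate and positive definite, which requires $\int^{\infty}e^{-2\delta s}\sigma_{ij}(s)\,ds$ to converge absolutely --- a self-consistency requirement on (\ref{4.10}) and the growth of $\|g\|$.
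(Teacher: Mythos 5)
Your opening and closing are exactly the paper's: the termwise--nonnegative decomposition $Q=\tfrac32\sigma_{ij}\sigma^{ij}+24\pi(\tau_{00}+\rho u_0^2)-\tfrac32 R$ read off from (\ref{2.20}), the inequality $\dot Q\leq\tfrac23HQ\leq-2\gamma Q$ via $H\leq-3\gamma$, the termwise deduction of (\ref{4.3})--(\ref{4.6}) and (\ref{4.10}), and the treatment of $h_{ij}=e^{-2\delta t}g_{ij}$ and of the electromagnetic bounds all coincide with the paper's proof. The genuine gap is in the middle, where you extract the asymptotics of $H$ and $\dot\Phi$ from $Q$. Observe that $U=\tfrac12\dot\Phi^2$ cancels out of the decomposition of $Q$ (it occurs on both sides of (\ref{2.20})), so the decay of $Q$ gives no information on $U$; and since $\dot Q=2H(\dot H-24\pi U)$, one has $\dot H=24\pi U$ plus the bracketed quantity, not the bracketed quantity alone. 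Hence ``$\dot H=\mathcal{O}(e^{-2\gamma t})$'' does not follow from the bracket being $\mathcal{O}(Q)$, and your route to (\ref{4.7}) through $12\pi\dot\Phi^2=(H^2-H_\infty^2)+12\pi m^2(L-\Phi^2)-Q$ is circular: it needs $L-\Phi^2=\mathcal{O}(e^{-2\gamma t})$, which itself presupposes a decay rate for $\dot\Phi$, and even granting $\dot\Phi^2=\mathcal{O}(e^{-2\gamma t})$ one only gets $L-\Phi^2=\int_t^{\infty}2\Phi\dot\Phi\,ds=\mathcal{O}(e^{-\gamma t})$, half the exponent required. The same loss defeats the proposed bootstrap for (\ref{4.13}): sharpening the Gronwall constant to $3\delta$ improves the decay of $Q$, but converting that into $H+\sqrt{3C_0}=\mathcal{O}(e^{-2\delta t})$ through $H^2-3C_0=Q+24\pi U+12\pi m^2(\Phi^2-L)$ again stalls at the $(L-\Phi^2)$ term.

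The paper obtains (\ref{4.13}) by a different device that your outline is missing: from (\ref{2.19}) and the monotone bound $\Phi^2\leq L$ one has the one-sided inequality $\dot H\geq\tfrac{H^2}{3}-C_0=\tfrac13(-H+\sqrt{3C_0})(-H-\sqrt{3C_0})\geq2\delta\bigl(-H-\sqrt{3C_0}\bigr)$, using $H\leq-\sqrt{3C_0}=-3\delta$ (which follows from $H$ increasing and $H\to-\sqrt{3C_0}$); multiplying by $e^{2\delta t}$ and integrating gives $|H+\sqrt{3C_0}|\leq Ce^{-2\delta t}$ with no reference whatever to the decay of $Q$, $U$ or $L-\Phi^2$. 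As for (\ref{4.7}) itself, the paper's one-line justification from (\ref{2.20}) is equally unconvincing for the reason above; the quickest correct argument is direct from the evolution equation (\ref{1.52}): $\dot U=2HU-m^2\Phi\sqrt{2U}\leq2HU\leq-6\gamma U$ since $\Phi\geq\Phi^0>0$ and $H\leq-3\gamma$, whence $U\leq U^0e^{-6\gamma t}$. With these two repairs the remainder of your outline --- convergence of $h_{ij}$, positive definiteness of $G_{ij}$ (which, as you rightly flag, the paper also leaves essentially unargued), and the bounds (\ref{4.18})--(\ref{4.22}) --- goes through as in the paper.
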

\begin{proof}
\hspace*{2cm}\\
 Notice that by (\ref{1.23}) and (\ref{1.34}),
we have: $m^2\Phi^2=2T_{00}-2U$. Expression (\ref{4.1}) of Q then
shows, using (\ref{2.20}) that $Q\geq 0$. Let us point out first of
all that, the quantity Q defined by (\ref{4.1}) plays a key role in
GR, and in the presence of the massive scalar field, it stands for
the quantities S in \cite{13}, Z in \cite{20}, $\widetilde{S}$ in
\cite{11}, $\overline{S}$ in \cite{7} and reduces to $H^2\pm 3
\Lambda$ in \cite{26} which deals with the case of zero scalar
field.
\begin{enumerate}
    \item [$\bullet$]We have, using the expression (\ref{1.23}) of
    $T_{00}$ and (\ref{1.34}):
\begin{equation}\label{4.24}
T_{00}= U +\frac{1}{2}m^2\Phi^2
\end{equation}
then, the evolution equations (\ref{1.51}) and (\ref{1.52}) in
$\Phi$ and U give:
\begin{equation}\label{4.25}
    \dot{T}_{00}=2HU.
\end{equation}
Expression (\ref{4.1}) of Q then gives, using (\ref{4.25}):
$$\dot{Q}=2H(\dot{H}-24\pi U)$$ then, using equation (\ref{2.1}) in H,
in which we use the Hamiltonian constraint (\ref{1.56}) to express
$R+H^2$ and (\ref{2.17}) to express $k_{ij}k^{ij}$ and since by
(\ref{1.23}) and (\ref{1.34}): $4\pi g^{ij} T_{ij}=24 \pi U-
12T_{00}$,
 we obtain
 \begin{equation}\label{4.26}
\dot{Q} =\frac{2}{3}H\Big[H^2-24\pi
T_{00}-3\Lambda+3(8\pi\tau_{00}+4\pi\rho u^2_0+4\pi
g^{ij}u_{i}u_{j}+\sigma_{ij}\sigma^{ij})\Big].
 \end{equation}
 But since $8\pi\tau_{00}+4\pi\rho u^2_0+4\pi
g^{ij}u_{i}u_{j}+\sigma_{ij}\sigma^{ij}\geq0$, $H<0$, and given the
definition (\ref{4.1}) of Q, (\ref{4.26}) gives:
\begin{equation}\label{4.27}
\dot{Q}\leq \frac{2}{3}HQ.
\end{equation}
Integrating (\ref{4.27}) over $[0, t]$, $t>0$ yields: $$0\leq Q \leq
Q_{0}\exp [\int_{0}^{t}\frac{2}{3}Hds];$$ and (\ref{2.15}) gives
$$0\leq Q \leq Q_{0}\exp\Big(-2t\sqrt{\frac{1}{3}(\Lambda+4 \pi m^2
(\Phi^{0})^2)}\Big)$$ and (\ref{4.2}) follows.
    \item [$\bullet$]Using (\ref{2.20}), the results (\ref{4.3}),
    (\ref{4.4}), (\ref{4.5}), (\ref{4.6}) and (\ref{4.7}) (since
    $(\dot{\Phi})^2=2U$) are direct consequences of (\ref{4.2}).
    \item [$\bullet$](\ref{4.8}) and (\ref{4.9}) are consequences of
    (\ref{4.5}), using (\ref{3.25}).
    \item [$\bullet$]Setting $A_{1}=(g^{ij})$ and
    $A_{2}=(\sigma_{ij})$, (\ref{2.28}) gives:
    \begin{equation}\label{4.28}
        \|\sigma(t)\|\leq \|g(t)\|(\sigma_{ij}\sigma^{ij})^{\frac{1}{2}}
    \end{equation}
(\ref{4.10}) then follows from (\ref{4.3}).
    \item [$\bullet$]The evolution equations (\ref{1.51}) and
    (\ref{1.52}) in $\Phi$ and U give:
    \begin{equation}\label{4.29}
m^2\Phi\dot{\Phi}+\dot{U}=\frac{d}{dt}\Big[\frac{m^2}{2}\Phi^2+U\Big]=2HU.
    \end{equation}
But since $H<0$ and $U>0$, (\ref{4.29}) implies:
$\frac{d}{dt}[\frac{m^2}{2}\Phi^2+U]\leq 0$; we then deduce, using
$U>0$ that:
\begin{equation}\label{4.30}
\frac{m^2}{2}\Phi^2\leq \frac{m^2}{2}\Phi^2+U\leq
\frac{m^2}{2}(\Phi^{0})^2+U^{0}.
\end{equation}
Hence $\Phi^{2}$ is bounded. But the evolution equation (\ref{1.51})
in $\Phi$ shows that $\dot{\Phi}>0$; then $\Phi\geq \Phi^{0}>0$ and
since $\frac{d}{dt}(\Phi^{2})=2\Phi \dot{\Phi}>0$, $\Phi^{2}$ is an
increasing function. $\Phi^{2}$ being positive, increasing and
bounded has a strictly positive limit, i.e, there exits $L>0$ such
that
\begin{equation}\label{4.31}
\Phi^{2}\longrightarrow L
\end{equation}
with:
\begin{equation}\label{4.32}
\Phi^{2}\leq L
\end{equation}
and we have (\ref{4.11}).
    \item [$\bullet$](\ref{4.12}) follows from
    (\ref{4.24}), (\ref{4.7}) and (\ref{4.11}).
    \item [$\bullet$]To prove (\ref{4.13}) which is one for the main
    results, since by (\ref{4.2}) $Q \longrightarrow 0$ its
    expression (\ref{4.1}) shows, using (\ref{4.12}) that:
    \begin{equation}\label{4.33}
H^2-3\Lambda \longrightarrow 12 \pi m^2 L.
    \end{equation}
Hence:
$$H^2- (3\Lambda+12\pi m^2 L)=\Big[H-(3\Lambda+12\pi m^2 L)^{\frac{1}{2}}\Big]
 \Big[H+(3\Lambda+12\pi m^2 L)^{\frac{1}{2}}\Big]\longrightarrow 0. $$
But by (\ref{2.15}), $H<0$; so:
$$H-(3\Lambda+12\pi m^2L)^{\frac{1}{2}}<-(3\Lambda+12\pi m^2 L)^{\frac{1}{2}}< 0. $$
We then deduce that:
\begin{equation}\label{4.34}
       H \longrightarrow -(3C_0)^{\frac{1}{2}}
\end{equation}
where:
\begin{equation}\label{4.35}
C_0=\Lambda + 4 \pi m^2L>0.
\end{equation}
But since by (\ref{2.22}) H is an increasing function (\ref{4.34})
implies:
\begin{equation}\label{4.36}
      H \leq -(3C_0)^{\frac{1}{2}}.
\end{equation}
Now (\ref{4.32}): $-4\pi m^2 \Phi^2\geq -4\pi m^2 L$; (\ref{2.19})
then implies:
\begin{equation}\label{4.37}
    \frac{dH}{dt}\geq \frac{H^2}{3}-C_{0}
\end{equation}
write:
$$\frac{H^{2}}{3}-C_0=\frac{1}{3}(H^2-3C_0)=\frac{1}{3}[-H+(3C_0)^{\frac{1}{2}}][-H-(3C_0)^{\frac{1}{2}}]$$
in which using (\ref{4.36}) we have: $$-H+(3C_0)^{\frac{1}{2}}\geq
(3C_{0})^{\frac{1}{2}}+(3C_{0})^{\frac{1}{2}}=2(3C_{0})^{\frac{1}{2}};
$$
(\ref{4.36}) also implies: $-H-(3C_0)^{\frac{1}{2}}\geq 0$. We then
deduce from (\ref{4.37})
\begin{equation}\label{4.38}
    \frac{d H}{dt}\geq 2 \delta[-H-(3C_0)^{\frac{1}{2}}]
\end{equation}
where:
\begin{equation}\label{4.39}
    \delta=\frac{1}{3}(3C_0)^{\frac{1}{2}}.
\end{equation}
Write (\ref{4.38}) in the form:
\begin{equation}\label{4.40}
    \frac{d}{dt}(H+(3C_0)^{\frac{1}{2}})+ 2
\delta[H+(3C_0)^{\frac{1}{2}}] \geq 0.
\end{equation}
Multiply (\ref{4.40}) by $e^{2\delta t}>0$ and integrate over $[0,
t]$ to obtain:
\begin{equation}\label{4.41}
    e^{2\delta t}[H+(3C_0)^{\frac{1}{2}}] \geq
    H(0)+(3C_0)^{\frac{1}{2}}.
\end{equation}
Now multiply (\ref{4.41}) by $-e^{-2\delta t}<0$, use once more
(\ref{4.36}) which gives $H+(3C_0)^{\frac{1}{2}}< 0$ to obtain:
\begin{equation*}
    |H+(3C_0)^{\frac{1}{2}}|\leq |H(0)+(3C_0)^{\frac{1}{2}}|e^{-2\delta
    t}
\end{equation*}
and (\ref{4.13}) follows with , see (\ref{4.35}) and (\ref{4.39}):
$\delta=[\frac{1}{3}(\Lambda+4\pi m^2 L)]^{\frac{1}{2}}$.
    \item [$\bullet$]To prove (\ref{4.14}), set $h_{ij}=e^{-2\delta
    t}g_{ij}$; then we have, using equation (\ref{1.46}) in $g_{ij}$:
\begin{equation}\label{4.42}
    \frac{d h_{ij}}{dt} = -2\delta h_{ij}-2k_{ij}e^{-2\delta t}.
\end{equation}
Now, using (\ref{2.16}) to express $k_{ij}$, the result
(\ref{4.13}), the expression of $C_{0}$ and $\delta$ in (\ref{4.23})
we deduce from (\ref{4.42}) that:
\begin{equation}\label{4.43}
\frac{d h_{ij}}{dt}  = \mathcal{O}(e^{-2\delta
t})h_{ij}-2e^{-2\delta t}\sigma_{ij}.
\end{equation}
Integrating (\ref{4.43}) over $[0, t]$, $t>0$, and taking the norm
yields:
\begin{equation}\label{4.44}
    \|h(t)\| \leq\|h(0)\|+C\int_{0}^{t}(e^{-2\delta s}\|h(s)\|+e^{-2\delta
  s}\|\sigma(s)\|)ds.
\end{equation}
where $C>0$ is a constant. Notice that $h^{ij}=e^{2\delta t}g^{ij}$.
Now setting in formula (\ref{2.28}): $A_{1}=(h^{ij})$,
$A_{2}=(\sigma_{ij})$ yields:
\begin{equation*}
    \|\sigma(s)\| \leq \|h(s)\|(\sigma_{h}^{ij}\sigma_{ij})^{\frac{1}{2}}
\end{equation*}
where $\sigma_{h}^{ij}=h^{il}h^{jk}\sigma_{lk}=e^{4\delta
t}\sigma^{ij}$. Hence, $ \|\sigma(s)\| \leq \|h(s)\|e^{2\delta
t}(\sigma^{ij}\sigma_{ij})^{\frac{1}{2}}$; (\ref{4.3}) then gives
$$ \|\sigma(s)\| \leq \|h(s)\|e^{2\delta s}e^{-\gamma s}.$$
We then deduce from (\ref{4.44}):
\begin{equation*}
    \|h(t)\| \leq\|h(0)\|+C \int_{0}^{t}(e^{-2\delta s}\|h(s)\|+e^{-\gamma s}\|h(s)\|)ds
\end{equation*}
then, since $\gamma < \delta$ [see (\ref{4.23})]:
\begin{equation*}
    \|h(t)\| \leq\|h(0)\|+C\int_{0}^{t}e^{-\gamma s}\|h(s)\|ds.
\end{equation*}
By Gronwall Lemma, this gives:
\begin{equation*}
     \|h(t)\| \leq \|h(0)\|\exp[\int_{0}^{t}Ce^{-\gamma s}]\leq C,
\end{equation*}
where $C>0$ is a constant, and (\ref{4.14}) follows.
    \item [$\bullet$]To obtain (\ref{4.15}) set this time $L^{ij}=e^{2\delta
    t}g^{ij}$ and proceed as for (\ref{4.13}) and obtain
    (\ref{4.15}).
\item [$\bullet$] To prove (\ref{4.16}) which is one of the main
results, first use
\begin{equation}\label{4.45}
    e^{-2\delta t}\sigma_{ij}=\mathcal{O}(e^{-\gamma t})
\end{equation}
which is a direct consequence of (\ref{4.28}), (\ref{4.3}) and
(\ref{4.14}). Recall that setting $h_{ij}= e^{-2\delta t}g_{ij}$ led
to (\ref{4.43}). We deduce from (\ref{4.43}), using (\ref{4.45}),
$\gamma < \delta$, and since $h$ is bounded:
\begin{equation}\label{4.46}
     \frac{d h_{ij}}{dt}=\mathcal{O}(e^{-\gamma t}).
\end{equation}
(\ref{4.46}) shows that $\dot{h}_{ij}$ has an exponential fall of at
late times and by the mean value theorem, $h_{ij}$ has a limit we
denote $G_{ij}$ as $t\longrightarrow +\infty$. Then we can write:
\begin{equation}\label{4.47}
    h_{ij}(t)=G_{ij}+\mathcal{O}(e^{-\gamma t})
\end{equation}
where, given the properties of $h_{ij}$, $G_{ij}$ is a symmetric,
positive definite constant $3\times3$ matrix. (\ref{4.16}) follows
from (\ref{4.47}) since $g_{ij}=e^{2\delta t}h_{ij}$.
    \item [$\bullet$] (\ref{4.17}) is a direct consequence of
    (\ref{4.16}) since $(g^{ij})=(g_{ij})^{-1}$.
    \item [$\bullet$]To prove (\ref{4.18}), use (\ref{2.16}) which
    implies, since H is bounded
    \begin{equation}\label{4.48}
        \|k\| \leq C ( \|g\|+\|\sigma\|).
    \end{equation}
(\ref{4.18}) then follows from (\ref{4.48}), using (\ref{4.16}) and
(\ref{4.45}).
    \item [$\bullet$]To obtain (\ref{4.19}), use (\ref{2.29}) which
    gives $|E^{i}|\leq (E^{j}E_{j})^{\frac{1}{2}}|g|^{\frac{3}{2}}$
    and conclude by applying (\ref{4.9}) and (\ref{4.16}).
    \item [$\bullet$]To obtain (\ref{4.20}), integrate equation
    (\ref{1.49}) in $F_{ij}$ and use (\ref{4.16}) and (\ref{4.19}).
    \item [$\bullet$]To obtain (\ref{4.21}), use
    $F^{ij}=g^{il}g^{jk}F_{kl}$, (\ref{4.17}) and (\ref{4.20}).
    \item [$\bullet$]Finally to obtain (\ref{4.22}), use the
    constraint equation (\ref{1.59}), (\ref{4.19}) and\\
     $u^{0}\geq
    1$. This completes the proof of Theorem~\ref{ttheo4.1}.
\end{enumerate}
\end{proof}
\begin{remark}\label{rq4.2}
\hspace*{2cm}\\
The result (\ref{4.16}) shows an exponential growth of the metric
tensor $g$ at late times. this result, but also (\ref{4.13}) confirm
mathematically the accelerated expansion of the universe as observed
in Astrophysics.
\end{remark}
\section{Geodesic Completeness}
\label{para5} We prove:
\begin{theorem}\label{ttheo5.1}
\hspace*{2cm}\\
 The space-time which exists globally is future
geodesically complete.
\end{theorem}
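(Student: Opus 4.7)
The plan is to prove future geodesic completeness by parametrizing any future-directed causal geodesic by the cosmological time $t$, showing via a Gronwall estimate that the ``energy'' $v^{0}=dt/ds$ stays bounded on $[0,+\infty)$, and concluding that the affine parameter $s$ attains $+\infty$. In the frame $(\widetilde{e}_{\alpha})=(\partial_{t},e_{i})$, writing the tangent vector as $v=v^{\alpha}\widetilde{e}_{\alpha}$ and invoking the Ricci rotation coefficients (\ref{1.19}), the geodesic equations read
\begin{equation*}
\frac{dv^{0}}{ds}=k_{ij}v^{i}v^{j},\qquad\frac{dv^{l}}{ds}=2k_{i}^{\;l}v^{0}v^{i}-\gamma^{l}_{ij}v^{i}v^{j},
\end{equation*}
while the normalization $\widetilde{g}(v,v)=\epsilon$ with $\epsilon\in\{-1,0\}$ gives $g_{ij}v^{i}v^{j}=(v^{0})^{2}+\epsilon$; in particular $v^{0}\geq 1$ for timelike and $v^{0}>0$ for null geodesics.

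Since $dt/ds=v^{0}>0$ I would reparametrize by $t$ and use the decomposition $k_{ij}=(H/3)g_{ij}+\sigma_{ij}$ of (\ref{2.16}) together with the operator-norm inequality
\begin{equation*}
|\sigma_{ij}v^{i}v^{j}|\leq(\sigma_{pq}\sigma^{pq})^{1/2}\,g_{ij}v^{i}v^{j},
\end{equation*}
obtained by diagonalizing $g^{ik}\sigma_{kj}$. Since $H<0$ by Theorem~\ref{ttheo1}, the $H$-contribution in
\begin{equation*}
\frac{dv^{0}}{dt}=\frac{k_{ij}v^{i}v^{j}}{v^{0}}=\frac{H}{3}\,\frac{g_{ij}v^{i}v^{j}}{v^{0}}+\frac{\sigma_{ij}v^{i}v^{j}}{v^{0}}
\end{equation*}
is $\leq 0$ and can be dropped for an upper bound, and using $g_{ij}v^{i}v^{j}\leq(v^{0})^{2}$ on the shear term yields
\begin{equation*}
\frac{dv^{0}}{dt}\leq(\sigma_{pq}\sigma^{pq})^{1/2}\,v^{0}.
\end{equation*}
The right-hand side coefficient is $\mathcal{O}(e^{-\gamma t})$ by (\ref{4.3}), integrable over $[0,+\infty)$, and Gronwall yields a uniform bound $v^{0}\leq E_{\max}<+\infty$.

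To conclude, the global existence Theorem~\ref{ttheo33} ensures that the metric is smooth on all of $M=\mathbb{R}\times G$, so the geodesic can fail to extend past its maximal affine parameter $s_{\max}$ only if $v^{\alpha}$ or $t$ blows up in finite $s$. The bound $v^{0}\leq E_{\max}$, combined with $g_{ij}v^{i}v^{j}=(v^{0})^{2}+\epsilon$ and the control of $g^{ij}$ in (\ref{4.17}), bounds the $v^{i}$ as well (indeed $v^{i}=\mathcal{O}(e^{-\delta t})$), and
\begin{equation*}
s(t)-s(0)=\int_{0}^{t}\frac{d\tau}{v^{0}(\tau)}\geq\frac{t}{E_{\max}}
\end{equation*}
shows that $t\to+\infty$ forces $s\to+\infty$. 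Hence $s_{\max}=+\infty$ in every case. The main technical point is combining the uniform negativity of $H$ from Theorem~\ref{ttheo1} with the exponential decay of $\sigma_{ij}\sigma^{ij}$ from (\ref{4.3}) into a single differential inequality that holds along an arbitrary causal direction $v^{\alpha}$; once the operator-norm step is justified, the remainder is standard ODE extension.
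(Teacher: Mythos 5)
Your proposal is correct, and it reaches the conclusion by a genuinely different (and in fact cleaner) route than the paper. The paper does not work with the geodesic equation at all: it tracks the matter four-velocity $u^{i}$ governed by the Lorentz-force equation (\ref{1.50}), derives an ODE for $g^{ij}u_{i}u_{j}$ in which the connection and magnetic terms cancel by antisymmetry but the electric term $-8\pi\frac{e}{\rho}u_{i}E^{i}$ survives, controls that term with (\ref{4.9}) and the conserved ratio $e/\rho=e^{0}/\rho^{0}$ from (\ref{5.12}), and closes the estimate through a Bernoulli-equation comparison for $W=e^{2\delta t}g^{ij}u_{i}u_{j}$, finally using $ds/dt=(1+g_{ij}u^{i}u^{j})^{-1/2}\geq C>0$. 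You instead bound $v^{0}$ directly from $\frac{dv^{0}}{ds}=k_{ij}v^{i}v^{j}$, using only the shear decomposition (\ref{2.16}), the sign $H<0$ from Theorem~\ref{ttheo1}, the normalization $g_{ij}v^{i}v^{j}=(v^{0})^{2}+\epsilon$, and the integrable decay (\ref{4.3}) of $(\sigma_{ij}\sigma^{ij})^{1/2}$; a single Gronwall step replaces the Bernoulli machinery and no electromagnetic estimate is needed, since genuine geodesics do not feel the Lorentz force. Since $g_{ij}u^{i}u^{j}=(u^{0})^{2}-1$, the two arguments bound essentially the same quantity. What each buys: your version is shorter, covers null as well as timelike geodesics, and is the one that literally proves the stated theorem; the paper's computation, read on its own terms, establishes the (also physically relevant) completeness of the charged-particle world lines, which are not geodesics here. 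Two small points you should make explicit: $v^{0}>0$ persists along the curve because a future causal vector has strictly negative inner product with the future timelike normal $\partial_{t}$ (this justifies the reparametrization by $t$), and the operator-norm inequality $|\sigma_{ij}v^{i}v^{j}|\leq(\sigma_{pq}\sigma^{pq})^{1/2}g_{ij}v^{i}v^{j}$ follows from diagonalizing the $g$-self-adjoint map $g^{ik}\sigma_{kj}$ and $\max_{a}|\lambda_{a}|\leq(\sum_{a}\lambda_{a}^{2})^{1/2}$, exactly as you indicate.
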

\begin{proof}
\hspace*{2cm}\\
We use the fact that, the geodesics equations for the metric
(\ref{1.1}) imply that, along the geodesics whose affine parameter
is denoted by $s$, the variables $t$, $u^0$, $u^{i}$ satisfy a first
order differential system containing between others, the equation:
\begin{equation}\label{5.1}
\frac{dt}{ds}=u^{0}.
\end{equation}
The space-time will be future geodesically  complete if we prove
that the affine parameter $s$ also goes to infinity. Then using
(\ref{1.11}), the notations (\ref{1.45}) and (\ref{5.1}), it will
enough if we could prove that:
\begin{equation}\label{5.2}
\frac{d s}{dt}=(1+g_{ij}u^{i}u^{j})^{-\frac{1}{2}}\geq C > 0
\end{equation}
where $C>0$ is a constant, since one could then deduce at once from
(\ref{5.2}), integrating, that $s\geq Ct+D$ where D is a constant;
hence $s\longrightarrow +\infty$ as $t\longrightarrow +\infty$. Our
goal will then be to prove, that $(1+g_{ij}u^{i}u^{j})$ or finally
$g_{ij}u^{i}u^{j}$, is bounded by a strictly positive constant. For
this purpose we use equation (\ref{1.50}) in $u^{i}$. It shows to be
useful considering $u_{i}=g_{ij}u^{i}$, rather than $u^{i}$.
Differentiating this relation, we have:
\begin{equation}\label{5.3}
\frac{d u_{i}}{dt}=g_{ij}\frac{d u^{j}}{dt}+u^{j}\frac{d g_{ij}}{dt}
\end{equation}
then, using equation (\ref{1.50}) in $u^{i}$ and equation
(\ref{1.46}) in $g_{ij}$, we deduce from (\ref{5.3}), the equation:
\begin{equation}\label{5.4}
\frac{du^{i}}{dt}
=-g_{ij}\gamma^{j}_{lk}\frac{u^{l}u^{k}}{u^{0}}+g_{ij}(-4\pi\frac{e}{\rho}E^{j}
+4\pi\frac{e}{\rho}F^{\;\;j}_{l}\frac{u^{l}}{u^{0}}).
\end{equation}
In order to bound $(g_{ij}u^{i}u^{j})$ we set up a differential
equation for this quantity. First notice that equation (\ref{1.46})
gives, using $g^{ij}g_{il}=\delta^{j}_{l}:$ $$\frac{d
g^{ij}}{dt}=2k^{ij};$$ from where we deduce:
\begin{equation}\label{5.5}
\frac{d}{dt}(g^{ij}u_{i}u_{j})=2k^{ij}u_{i}u_{j}+2g^{ij}u_{j}\frac{d
u_{i}}{dt}.
\end{equation}
A direct calculation using (\ref{5.4}) shows that, in (\ref{5.5}) we
have:
\begin{equation}\label{5.6}
2g^{ij}u_{j}\frac{du_{i}}{dt}=-8\pi\frac{e}{\rho}u_{i}E^{i}+8\pi\frac{e}{\rho}F_{lk}\frac{u^{l}u^{k}}{u^{0}}
 -2\gamma_{jk}^{i}\frac{u^{j}u^{k}u_{i}}{u^{0}}.
\end{equation}
But given the antisymmetry of $F_{lk}$, we have
$F_{lk}u^{l}u^{k}=0$. Next using the expression (\ref{1.19}) of
$\gamma_{ij}^{l}$ we obtain for the last term in (\ref{5.6}):
\begin{equation}\label{5.7}
\gamma_{jk}^{i}u^{j}u^{k}u_{i} =
-\frac{1}{2}g_{jl}C^{l}_{kp}u^{j}u^{k}u^{p}
  +\frac{1}{2}g_{kl}C^{l}_{pj}u^{j}u^{k}u^{p}
   +\frac{1}{2}C^{i}_{jk}u^{j}u^{k}u_{i}.
\end{equation}
But since $C^{l}_{jk}=-C^{l}_{kj}$ the last term in (\ref{5.7})
vanishes. For the same reason:
\begin{equation*}
-\frac{1}{2}g_{jl}C^{l}_{kp}u^{j}u^{k}u^{p}+\frac{1}{2}g_{kl}C^{l}_{pj}u^{j}u^{k}u^{p}
=\frac{1}{2}(C^{l}_{pk}+C^{l}_{kp})g_{jl}u^{j}u^{k}u^{p}=0.
\end{equation*}
Consequently, the r.h.s of (\ref{5.6}) reduces to its first term and
(\ref{5.5}) gives:
\begin{equation}\label{5.8}
    \frac{d}{dt}(g^{ij}u_{i}u_{j})=2k^{ij}u_{i}u_{j}-8\pi\frac{e}{\rho}u_{i}E^{i}.
\end{equation}
But by (\ref{2.16}), $k^{ij}=\frac{H}{3}g^{ij}+\sigma^{ij}$;
(\ref{5.8}) then gives, using (\ref{4.13}) and (\ref{4.23}):
\begin{equation}\label{5.9}
\frac{d}{dt}(g^{ij}u_{i}u_{j})=\Big[-2\delta+\mathcal{O}(e^{-2\delta
t})\Big]g^{ij}u_{i}u_{j}+2\sigma^{ij}u_{i}u_{j}-8\pi\frac{e}{\rho}u_{i}E^{i}.
\end{equation}
Now we have $\sigma^{ij}=g^{ik}g^{jl}\sigma_{kl}$, so by
(\ref{4.45}) and (\ref{4.17}): $e^{(2\delta+\gamma)t}\sigma^{ij}$ is
bounded. This implies, since the matrix $G^{ij}$ is positive
definite and constant, that there exits a constant $C>0$ such that:
\begin{equation}\label{5.10}
\sigma^{ij}u_{i}u_{j}\leq Ce^{-(2\delta + \gamma)t}
G^{ij}u_{i}u_{j}.
\end{equation}
Now by (\ref{4.17}), there exists a constant $C>0$ such that:
\begin{equation}\label{5.11}
G^{ij}u_{i}u_{j}\leq Ce^{2\delta t}g^{ij}u_{i}u_{j}.
\end{equation}
Now deduce from (\ref{1.32}) and (\ref{1.36bis}) that:
\begin{equation}\label{5.12}
    \frac{e}{\rho}=\frac{e^{0}}{\rho^{0}},
\end{equation}
we then obtain from (\ref{5.9}), using (\ref{5.10}), (\ref{5.11})
and (\ref{5.12}):
\begin{equation}\label{5.13}
\frac{d}{dt}(g^{ij}u_{i}u_{j})\leq(-2\delta+Ce^{-2\delta t})g^{ij}u_{i}u_{j}
+Ce^{-\gamma t}g^{ij}u_{i}u_{j}+C|u_{i}E^{i}|. \\
\end{equation}
Now set:
\begin{equation}\label{5.14}
    W=e^{2\delta t}g^{ij}u_{i}u_{j}
\end{equation}
then we have, using (\ref{5.13}):
\begin{eqnarray}
\nonumber\frac{d W}{dt}&=&2\delta e^{2\delta t}g^{ij}u_{i}u_{j}+e^{2\delta t}\frac{d}{dt}(g^{ij}u_{i}u_{j}) \\
           &\leq &Cg^{ij}u_{i}u_{j}+Ce^{(2\delta-\gamma) t}g^{ij}u_{i}u_{j}+Ce^{2\delta t} |u_{i}E^{i}|.\label{5.15}
\end{eqnarray}
\noindent Now since $g$ is a scalar product: $|u_{i}E^{i}|\leq
(u^{i}u_{i})^{\frac{1}{2}}(E^{i}E_{i})^{\frac{1}{2}}$. so, by
(\ref{4.9}) we have $|u_{i}E^{i}|\leq
C(g^{ij}u_{i}u_{i})^{\frac{1}{2}}e^{-\gamma t} $; (\ref{5.15}) then
gives:
\begin{equation}\label{5.16}
\frac{d W}{dt}\leq C e^{-2 \delta t}W +C e^{-\gamma t}W+ C
e^{(\delta-\gamma) t}W^{\frac{1}{2}}
\end{equation}
(\ref{5.16}) gives, since $0< \gamma < \delta$:
\begin{equation}\label{5.17}
\frac{d W}{dt}\leq C e^{- \gamma t}W + C e^{\delta
    t}W^{\frac{1}{2}}.
\end{equation}
But it is well known that by (\ref{5.17}) we have:
\begin{equation}\label{5.18}
    W(t)\leq z(t)
\end{equation}
where
\begin{equation}\label{5.19}
   \large \left\{
      \begin{array}{ll}
\frac{d z}{dt}= C e^{- \gamma t}z + C e^{\delta t}z^{\frac{1}{2}} & \hbox{} \\
z(0)=W(0).  & \hbox{}
      \end{array}
    \right.
\end{equation}
But (\ref{5.19}) is a Bernoulli equation whose solution is:
\begin{equation}\label{5.20}
z(t)=\exp
\Big(-2\int_{0}^{t}a(s)ds\Big)\Big[(W(0))^{\frac{1}{2}}+\int_{0}^{t}b(s)(\exp\int_{0}^{s}a(\tau)d\tau)
ds\Big]^{2}
\end{equation}
where:
\begin{equation}\label{5.21}
    a(t)=-\frac{C}{2}e^{-\gamma t}; \;\;  b(t)=\frac{C}{2}e^{\delta
    t}.
\end{equation}
One deduces easily from (\ref{5.20}), (\ref{5.21}) that:
$$ z(t)\leq C e^{2\delta t},$$
where $C>0$ is a constant. Then using expression (\ref{5.14}) of W
and (\ref{5.18}), we obtain:
$$ g_{ij}u^{i}u^{j} \leq C .$$
This completes the proof of Theorem~\ref{ttheo5.1}.
\end{proof}
\section{Energy conditions}
\label{para6} In this section we prove that the global solution
satisfies the weak and the dominant energy conditions and, under
some hypothesis, the strong energy condition. Recall that a viable
physical theory is supposed to fulfill at least one of the energy
conditions (Hawking\cite{9}). In fact notice that considering the
stress-energy-matter tensor of the Einstein equations (\ref{1.2}),
and keeping the $(\widetilde{\;\;})$ to avo\"{\i}d any confusion,
the quantity
$$(\widetilde{T}_{\alpha\beta}+\widetilde{\tau}_{\alpha\beta}+
\rho
\widetilde{u}_{\alpha}\widetilde{u}_{\beta})\widetilde{V}^{\alpha}\widetilde{V}^{\beta}$$
represents physically, the energy density of the charged particle,
measured by an observer whose velocity is $\widetilde{V}^{\alpha}$
and so must be non-negative, $\widetilde{V}^{\alpha}$ being a future
pointing time-like vector, see \cite{27}.\\
We recall below the three types of energy conditions: let
$(\widetilde{V}^{\alpha})$ and $(\widetilde{W}^{\alpha})$ be any two
\textit{future pointing time-like vectors}. The solution is said to
satisfy:
\begin{enumerate}
    \item[1)]the weak energy condition if:
    \begin{equation}\label{6.1}
(\widetilde{T}_{\alpha\beta}+\widetilde{\tau}_{\alpha\beta}+ \rho
\widetilde{u}_{\alpha}\widetilde{u}_{\beta})\widetilde{V}^{\alpha}\widetilde{V}^{\beta}\geq
0.
    \end{equation}
    \item[2)]the strong energy condition if:
    \begin{equation}\label{6.2}
\widetilde{R}_{\alpha\beta}\widetilde{V}^{\alpha}\widetilde{V}^{\beta}\geq
0.
    \end{equation}
    \item[3)]the dominant energy condition if:
    \begin{equation}\label{6.3}
(\widetilde{T}_{\alpha\beta}+\widetilde{\tau}_{\alpha\beta}+ \rho
\widetilde{u}_{\alpha}\widetilde{u}_{\beta})\widetilde{V}^{\alpha}\widetilde{W}^{\beta}\geq
0.
    \end{equation}
\end{enumerate}
Obviously, (\ref{6.3}) implies (\ref{6.1}), just setting: $\widetilde{W}^{\alpha}=\widetilde{V}^{\alpha}$.\\
We begin by proving:

\begin{proposition}\label{prop6.1}
\hspace*{2cm}\\
Let $\widetilde{V}^{\alpha}$ and $\widetilde{W}^{\alpha}$ be two
future pointing time-like or null vectors. Then
\begin{equation}\label{6.4}
\widetilde{V}^{\alpha}\widetilde{W}_{\alpha}\leq 0
\end{equation}
\end{proposition}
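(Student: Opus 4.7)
The plan is to use the special form of the metric $\widetilde{g} = -dt^2 + g_{ij}(t) e^i \otimes e^j$ from (\ref{1.1}) and reduce the claim to the Cauchy--Schwarz inequality on the positive definite spatial metric $(g_{ij})$. Since the metric has no $\widetilde{g}_{0i}$ cross terms, the pairing splits cleanly:
\begin{equation*}
\widetilde{V}^\alpha \widetilde{W}_\alpha \;=\; \widetilde{g}_{\alpha\beta}\widetilde{V}^\alpha \widetilde{W}^\beta \;=\; -\widetilde{V}^0 \widetilde{W}^0 \;+\; g_{ij}\widetilde{V}^i \widetilde{W}^j.
\end{equation*}
The objective is therefore to show that the spatial term is dominated by the temporal term.

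First, I would extract what ``future pointing time-like or null'' means componentwise. The conditions $\widetilde{g}(\widetilde{V},\widetilde{V}) \leq 0$ and analogously for $\widetilde{W}$, together with the form (\ref{1.1}), give
\begin{equation*}
(\widetilde{V}^0)^2 \;\geq\; g_{ij}\widetilde{V}^i \widetilde{V}^j, \qquad (\widetilde{W}^0)^2 \;\geq\; g_{ij}\widetilde{W}^i \widetilde{W}^j,
\end{equation*}
while the future pointing hypothesis ensures $\widetilde{V}^0 > 0$ and $\widetilde{W}^0 > 0$ (analogous to the derivation of (\ref{1.11})). Note the right-hand sides are non-negative because $(g_{ij})$ is positive definite.

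Next I would apply the Cauchy--Schwarz inequality to the Euclidean inner product induced on $\mathbb{R}^3$ by the positive definite matrix $(g_{ij})$, obtaining
\begin{equation*}
g_{ij}\widetilde{V}^i \widetilde{W}^j \;\leq\; \bigl(g_{ij}\widetilde{V}^i \widetilde{V}^j\bigr)^{1/2}\bigl(g_{ij}\widetilde{W}^i \widetilde{W}^j\bigr)^{1/2} \;\leq\; \widetilde{V}^0\,\widetilde{W}^0,
\end{equation*}
where the second inequality uses the two displayed bounds above. Substituting back yields $\widetilde{V}^\alpha \widetilde{W}_\alpha \leq -\widetilde{V}^0 \widetilde{W}^0 + \widetilde{V}^0 \widetilde{W}^0 = 0$, which is (\ref{6.4}).

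There is no serious obstacle here; the only thing to be careful about is the sign conventions (that ``future pointing'' means $\widetilde{V}^0 > 0$ in our signature $(-,+,+,+)$), and that the Cauchy--Schwarz step is legitimate because $(g_{ij})$ is a genuine positive definite inner product on the spatial fibers. The null case is handled uniformly because the inequalities $(\widetilde{V}^0)^2 \geq g_{ij}\widetilde{V}^i\widetilde{V}^j$ allow equality as well as strict inequality.
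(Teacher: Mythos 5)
Your proof is correct and follows essentially the same route as the paper: reduce $\widetilde{V}^{\alpha}\widetilde{W}_{\alpha}\leq 0$ to $-\widetilde{V}^{0}\widetilde{W}^{0}+g_{ij}\widetilde{V}^{i}\widetilde{W}^{j}\leq 0$ using the block-diagonal form (\ref{1.1}), translate the causal and orientation hypotheses into $0\leq g_{ij}\widetilde{V}^{i}\widetilde{V}^{j}\leq(\widetilde{V}^{0})^{2}$ with $\widetilde{V}^{0}\geq 0$ (and likewise for $\widetilde{W}$), and conclude by Cauchy--Schwarz for the positive definite spatial metric. No differences worth noting.
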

\begin{proof}
\hspace*{2cm}\\
Since $\widetilde{V}^{\alpha}\widetilde{W}_{\alpha}=
\widetilde{g}_{\alpha\beta}\widetilde{V}^{\alpha}\widetilde{W}^{\beta}$
and given the definition (\ref{1.1}) of $\widetilde{g}$, (\ref{6.4})
is equivalent to:

\begin{equation}\label{6.5}
-\widetilde{V}^{0}\widetilde{W}^{0}+g_{ij}\widetilde{V}^{i}\widetilde{W}^{j}\leq0.
\end{equation}
Now, $\widetilde{V}^{\alpha}$, $\widetilde{W}_{\alpha}$ being future
pointing time-like or null we have:
\begin{equation*}
\widetilde{V}^{\alpha}\widetilde{V}_{\alpha}\leq0; \;\;
\widetilde{V}^{0}\geq0; \;\;
\widetilde{W}^{\alpha}\widetilde{W}_{\alpha}\leq0; \;\;
\widetilde{W}^{0}\geq0
\end{equation*}
,or, equivalently:
\begin{equation*}
0\leq g_{ij}\widetilde{V}^{i}\widetilde{V}^{j}\leq
(\widetilde{V}^{0})^{2};\; \widetilde{V}^{0}\geq0;\; 0\leq
g_{ij}\widetilde{W}^{i}\widetilde{W}^{j}\leq
(\widetilde{W}^{0})^{2};\;\widetilde{W}^{0}\geq0;
\end{equation*}
 hence:
\begin{numcases}
\strut
 \nonumber0\leq
(g_{ij}\widetilde{V}^{i}\widetilde{V}^{j})^{\frac{1}{2}}\leq
\widetilde{V}^{0}  \\
\nonumber0\leq
(g_{ij}\widetilde{W}^{i}\widetilde{W}^{j})^{\frac{1}{2}}\leq
\widetilde{W}^{0}
\end{numcases}
which gives:
\begin{equation}\label{6.6}
0\leq(g_{ij}\widetilde{V}^{i}\widetilde{V}^{j})^{\frac{1}{2}}(g_{ij}
\widetilde{W}^{i}\widetilde{W}^{j})^{\frac{1}{2}} \leq
\widetilde{V}^{0}\widetilde{W}^{0}.
\end{equation}
But since $g$ is a scalar product, we have:
\begin{equation}\label{6.7}
|g_{ij}\widetilde{V}^{i}\widetilde{W}^{j}|\leq
(g_{ij}\widetilde{V}^{i}\widetilde{V}^{j})^{\frac{1}{2}}(g_{ij}
\widetilde{W}^{i}\widetilde{W}^{j})^{\frac{1}{2}}
\end{equation}
(\ref{6.5}) then follows from (\ref{6.6}) and (\ref{6.7}).
\end{proof}
Next we prove this important result for the Maxwell tensor
$\widetilde{\tau}_{\alpha\beta}$ defined by (\ref{1.6}):
\begin{proposition}\label{prop6.2}
\hspace*{2cm}\\
For any two future pointing time-like vectors
$(\widetilde{V}^{\alpha})$, $(\widetilde{W}^{\alpha})$, we have:
\begin{equation}\label{6.8}
\widetilde{\tau}_{\alpha\beta}\widetilde{V}^{\alpha}\widetilde{W}^{\alpha}\geq
0.
\end{equation}
\end{proposition}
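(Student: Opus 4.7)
The plan is to establish (\ref{6.8}) by direct computation using the explicit components (\ref{1.25}) of $\widetilde{\tau}_{\alpha\beta}$ in the 3+1 decomposition. Write $\widetilde{V}^{\alpha}=(V^{0},V^{i})$ and $\widetilde{W}^{\alpha}=(W^{0},W^{i})$; as already noted in the proof of Proposition~\ref{prop6.1}, the future-pointing timelike hypothesis gives $V^{0},W^{0}>0$ together with $|v|:=(g_{ij}V^{i}V^{j})^{1/2}\leq V^{0}$ and $|w|:=(g_{ij}W^{i}W^{j})^{1/2}\leq W^{0}$. Expanding
$$\widetilde{\tau}_{\alpha\beta}\widetilde{V}^{\alpha}\widetilde{W}^{\beta}=\tau_{00}V^{0}W^{0}+\tau_{0i}\bigl(V^{0}W^{i}+V^{i}W^{0}\bigr)+\tau_{ij}V^{i}W^{j}$$
and substituting (\ref{1.25}) splits the result into three blocks: a pure-$E$ block quadratic in $E^{i}=\widetilde{F}^{0i}$, a pure-$F$ block quadratic in $F_{ij}$, and a cross block that comes only from $\tau_{0i}$.

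I would treat the pure-$E$ block first. It regroups as
$$\tfrac{1}{2}\bigl(g_{kl}E^{k}E^{l}\bigr)\bigl(V^{0}W^{0}+g_{ij}V^{i}W^{j}\bigr)-\bigl(g_{ik}E^{k}V^{i}\bigr)\bigl(g_{jl}E^{l}W^{j}\bigr),$$
and its non-negativity follows from Cauchy--Schwarz in the Riemannian metric $g$: $|g_{ij}V^{i}W^{j}|\leq |v||w|\leq V^{0}W^{0}$ and $|g_{ik}E^{k}V^{i}|\leq \mathcal{E}|v|$ with $\mathcal{E}:=(g_{kl}E^{k}E^{l})^{1/2}$, after a short case analysis on the sign of $(g_{ik}E^{k}V^{i})(g_{jl}E^{l}W^{j})$; the worst case reduces to the trivial inequality $V^{0}W^{0}\geq \alpha\beta$ for $|\alpha|\leq V^{0}$, $|\beta|\leq W^{0}$. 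The pure-$F$ block is then treated by an entirely analogous argument, relying on the antisymmetry of $F_{ij}$ and Cauchy--Schwarz applied to the cometric $g^{-1}$.

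The main obstacle is the cross block $-E^{k}F_{ik}(V^{0}W^{i}+V^{i}W^{0})$, which has indefinite sign and is not controlled by the pure blocks separately. The crucial step is therefore to complete the square: introducing the auxiliary vector $Y^{i}(\widetilde{V}):=V^{0}E^{i}+g^{il}F_{lk}V^{k}$ and its analogue $Y^{i}(\widetilde{W})$, one checks --- relying on the antisymmetry of $F_{ij}$ --- that the full bilinear form admits a presentation of the shape
$$\widetilde{\tau}_{\alpha\beta}\widetilde{V}^{\alpha}\widetilde{W}^{\beta}=\tfrac{1}{2}g_{ij}Y^{i}(\widetilde{V})Y^{j}(\widetilde{W})+\Theta(\widetilde{V},\widetilde{W}),$$
where the remainder $\Theta$ consists of the leftover pure-$F$ piece already handled. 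Non-negativity of the Riemannian inner product $g_{ij}Y^{i}(\widetilde{V})Y^{j}(\widetilde{W})$ then follows from one last application of Cauchy--Schwarz on $g$ combined with $|v|\leq V^{0}$, $|w|\leq W^{0}$. A conceptually cleaner alternative would be to prove first that $-\widetilde{\tau}^{\alpha}_{\;\;\beta}\widetilde{V}^{\beta}$ is itself a future-pointing causal vector --- using the weak energy condition $\widetilde{\tau}_{\alpha\beta}\widetilde{V}^{\alpha}\widetilde{V}^{\beta}\geq 0$ and the $4$-dimensional Rainich identity $\widetilde{\tau}_{\alpha\mu}\widetilde{\tau}^{\mu}_{\;\;\beta}=\tfrac{1}{4}\widetilde{g}_{\alpha\beta}\widetilde{\tau}^{\mu\nu}\widetilde{\tau}_{\mu\nu}$ --- and then to invoke Proposition~\ref{prop6.1} directly; but the underlying algebra is of comparable length, so I expect the direct route above to be the most natural in the framework of this paper.
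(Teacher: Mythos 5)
Your route is genuinely different from the paper's: the paper does not work with the 3+1 components (\ref{1.25}) at all, but instead puts the 2-form $\widetilde{F}_{\alpha\beta}$ into one of the two canonical forms (\ref{6.10}), (\ref{6.11}) relative to a null tetrad $(l,n,x,y)$, computes $\widetilde{\tau}_{\alpha\beta}$ explicitly in each case (formulae (\ref{6.19})--(\ref{6.20}) and (\ref{6.26})), and then checks positivity on the future cone by expanding $\widetilde{V},\widetilde{W}$ in that tetrad, the non-null case reducing to a discriminant computation. A component proof along your lines does exist in the literature, and your treatment of the two pure blocks is essentially sound once the Cauchy--Schwarz estimates are written out via the orthogonal decomposition of $V^{i},W^{i}$ along and against $E^{i}$ (resp.\ the magnetic vector). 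But the step you yourself identify as crucial does not close.

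Concretely: expanding $\tfrac12 g_{ij}Y^{i}(\widetilde{V})Y^{j}(\widetilde{W})$ with $Y^{i}(\widetilde{V})=V^{0}E^{i}+g^{il}F_{lk}V^{k}$ reproduces only the term $\tfrac12 g_{kl}E^{k}E^{l}V^{0}W^{0}$ of the pure-$E$ block and only \emph{half} of the cross block $-E^{k}F_{ik}(V^{0}W^{i}+V^{i}W^{0})$, so the remainder $\Theta$ necessarily retains the indefinite piece $\tfrac12 g_{kl}E^{k}E^{l}\,g_{ij}V^{i}W^{j}-(g_{ik}E^{k}V^{i})(g_{jl}E^{l}W^{j})$ together with half of the Poynting term; it is not ``the leftover pure-$F$ piece already handled.'' More seriously, the final claim that $g_{ij}Y^{i}(\widetilde{V})Y^{j}(\widetilde{W})\geq 0$ is false: this is the Riemannian inner product of two vectors that need not make an acute angle even when $\widetilde{V},\widetilde{W}$ are future timelike. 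Take $E^{i}=0$ and $W^{k}=-\lambda V^{k}$ with $\lambda>0$ and $V^{0},W^{0}$ large; then $Y^{i}(\widetilde{W})=-\lambda Y^{i}(\widetilde{V})$ and the inner product is strictly negative whenever $F_{lk}V^{k}\neq 0$. Cauchy--Schwarz only bounds it below by $-\|Y(\widetilde{V})\|_{g}\|Y(\widetilde{W})\|_{g}$, which proves nothing. The indefinite cross term genuinely couples the $E$ and $F$ blocks in the polarized (bilinear) form; the standard ways out are either to prove the inequality on the null cone first and use that future timelike vectors are positive combinations of future null ones, or to do exactly what you sketch in your last sentences --- show that $-\widetilde{\tau}^{\alpha}_{\;\;\beta}\widetilde{V}^{\beta}$ is future causal via the weak energy condition plus the Rainich identity and then invoke Proposition~\ref{prop6.1} --- or to use the paper's canonical-form argument. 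The Rainich alternative is correct in outline, but you have not carried it out (in particular the weak energy condition and the identity $\widetilde{\tau}_{\alpha\mu}\widetilde{\tau}^{\mu}_{\;\;\beta}=\tfrac14\widetilde{g}_{\alpha\beta}\widetilde{\tau}^{\mu\nu}\widetilde{\tau}_{\mu\nu}$ both still require proof), so as it stands the proposal has a genuine gap.
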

\begin{proof}
\hspace*{2cm}\\
It will be enough if we could prove (\ref{6.8}) by choosing any
suitable frame. Let us consider the frame of the four vectors:\\
$l=(l^{\alpha})$; $n=(n^{\alpha})$; $x=(x^{\alpha})$;
$y=(y^{\alpha})$, satisfying the following properties:
\begin{equation}\label{6.9}
    l_{\alpha}l^{\alpha}= n_{\alpha}n^{\alpha}= l_{\alpha}x^{\alpha}
    =l_{\alpha}y^{\alpha}=n_{\alpha}x^{\alpha}=
    n_{\alpha}y^{\alpha}=0.
\end{equation}
Now inspired for  instance by the case where the electromagnetic
fields $\widetilde{F}_{\alpha\beta}$ derives from a potential
vector, the antisymmetric 2-form $\widetilde{F}_{\alpha\beta}$ can
be written in one of the two following general forms:
\begin{equation}\label{6.10}
\widetilde{F}_{\alpha\beta}=\frac{A}{2}(l_{\alpha}n_{\beta}-l_{\beta}n_{\alpha})
+\frac{B}{2}(x_{\alpha}y_{\beta}-x_{\beta}y_{\alpha})
\end{equation}
or:
\begin{equation}\label{6.11}
\widetilde{F}_{\alpha\beta}=\frac{C}{2}(l_{\alpha}x_{\beta}-l_{\beta}x_{\alpha})
\end{equation}
where A, B, C are constants. In fact, since $M=\mathbb{R}\times G$,
with G is a \textit{simply connected} Lie group, by Poincare Lemma,
$\widetilde{F}_{\alpha\beta}$ is an exact form. Then, there exits a
potential vector $\widetilde{A}_{\alpha}$ over M such that:
$\widetilde{F}_{\alpha\beta}=\widetilde{\nabla}_{\alpha}\widetilde{A}_{\beta}-
\widetilde{\nabla}_{\beta}\widetilde{A}_{\alpha}$ then formula
(\ref{6.10}) generalizes the form given by the development of the
above expression of $\widetilde{F}_{\alpha\beta}$ by applying
(\ref{1.21}) in the case of the frame $(e_{\alpha})$ whereas
(\ref{6.11}) corresponds to the case of the natural frame
$(\partial_{\alpha})$.
\begin{enumerate}
    \item[$\bullet$]Next, it shows useful to choose the constants
    A, B, C by assuming that we have in addition:
    \begin{equation}\label{6.12}
        l_{\alpha}n^{\alpha}=-1; \;\;x_{\alpha}x^{\alpha}=y_{\alpha}y^{\alpha}=1;
        \;\;x_{\alpha}y^{\alpha}=0
    \end{equation}
    \item[$\bullet$]Now consider the Maxwell tensor (\ref{1.6}) i.e
\begin{equation}\label{6.13}
\widetilde{\tau}_{\alpha
\beta}=-\frac{1}{4}\widetilde{g}_{\alpha\beta}\widetilde{F}^{\lambda\mu}\widetilde{F}_{\lambda\mu}+
\widetilde{F}_{\alpha\lambda}\widetilde{F}^{\;\;\;\lambda}_{\beta}
\end{equation}
\underline{$1^{\circ})$Consider the form (\ref{6.10})}.\\
A direct calculation using (\ref{6.9}),  (\ref{6.12}) gives:
\begin{equation}\label{6.14}
\widetilde{F}^{\lambda\mu}\widetilde{F}_{\lambda\mu}=\frac{B^2-A^2}{2}
;\;\;\widetilde{F}_{\alpha\lambda}\widetilde{F}^{\;\;\;\lambda}_{\beta}=
\frac{A^2}{4}(l_{\alpha}n_{\beta}+n_{\alpha}l_{\beta})
+\frac{B^2}{4}(x_{\alpha}x_{\beta}+y_{\alpha}y_{\beta})
\end{equation}
so that, in this case, (\ref{6.13}) and (\ref{6.14}) give:
\begin{equation}\label{6.15}
\widetilde{\tau}_{\alpha
\beta}=\frac{1}{4}\Big[A^2(l_{\alpha}n_{\beta}+n_{\alpha}l_{\beta})
+B^2(x_{\alpha}x_{\beta}+y_{\alpha}y_{\beta})+\widetilde{g}_{\alpha\beta}\Big(\frac{A^2-B^2}{2}\Big)\Big].
\end{equation}
Now if we express the vectors
$\widetilde{V}=(\widetilde{V}^{\alpha})$,
$\widetilde{W}=(\widetilde{W}^{\alpha})$ in the frame $(l, n, x, y)$
by:
\begin{numcases}
\strut
\widetilde{V}^{\alpha}=Ml^{\alpha}+Nn^{\alpha}+Px^{\alpha}+Q y^{\alpha}\label{6.16} \\
\widetilde{W}^{\alpha}=M'l^{\alpha}+N'n^{\alpha}+P'x^{\alpha}+Q'y^{\alpha}
\label{6.17}
\end{numcases}
and if we set:
\begin{equation}\label{6.18}
    \widetilde{h}_{\alpha\beta}=-l_{\alpha}n_{\beta}-n_{\alpha}l_{\beta}+x_{\alpha}x_{\beta}+y_{\alpha}y_{\beta}
\end{equation}
then a direct calculation, using (\ref{6.9}) and (\ref{6.12}) gives:
\begin{equation*}
\widetilde{h}_{\alpha\beta}\widetilde{V}^{\alpha}\widetilde{W}^{\beta}
=\widetilde{g}_{\alpha\beta}\widetilde{V}^{\alpha}\widetilde{W}^{\beta}=(-NM'-MN'+PP'+QQ').
\end{equation*}
Hence $\widetilde{g}_{\alpha\beta}=\widetilde{h}_{\alpha\beta}$. So,
we can express $\widetilde{g}_{\alpha\beta}$ in (\ref{6.15}) by
(\ref{6.18}) and this gives:
\begin{equation}\label{6.19}
    \widetilde{\tau}_{\alpha\beta}=\frac{A^2+B^2}{8}\widetilde{\overline{\tau}}_{\alpha\beta}
\end{equation}
where
\begin{equation}\label{6.20}
\widetilde{\overline{\tau}}_{\alpha\beta}=l_{\alpha}n_{\beta}+n_{\alpha}l_{\beta}+
x_{\alpha}x_{\beta}+y_{\alpha}y_{\beta}
\end{equation}
    \item[$\bullet$]Now a direct calculation using (\ref{6.16}),
    (\ref{6.17}), (\ref{6.20}), (\ref{6.9}) and (\ref{6.12}) gives:
\end{enumerate}
\begin{equation}\label{6.21}
\widetilde{\overline{\tau}}_{\alpha\beta}\widetilde{V}^{\alpha}\widetilde{W}^{\beta}
=NM'+MN'+PP'+QQ.
\end{equation}
But if $\widetilde{V}=(\widetilde{V}^{\alpha})$ and
$\widetilde{W}=(\widetilde{W}^{\alpha})$ are future pointing
time-like vectors, in (\ref{6.16}) and (\ref{6.17}), we add:
\begin{equation}\label{6.22}
\widetilde{V}^{\alpha}\widetilde{V}_{\alpha}\leq0; \;\;
\widetilde{W}^{\alpha}\widetilde{W}_{\alpha}\leq0; \;\; M>0; \;\;
M'>0.
\end{equation}
Now (\ref{6.22}) writes, using (\ref{6.16}), (\ref{6.17}),
(\ref{6.9}) and  (\ref{6.12}):
\begin{equation*}
 -2MN+P^2+Q^2\leq 0;\;-2M'N'+P'^2+Q'^2\leq 0    ;\; M>0; \; M'>0;
\end{equation*}
so we have: $P^2+Q^2 \leq 2MN$; $P'^2+Q'^2\leq2M'N'$; $M>0$; $M'>0$.
But this implies since $M>0$; $M'>0$:
\begin{equation}\label{6.23}
N\geq 0;\;  N'\geq 0; \;MN\geq \frac{P^{2}+Q^{2}}{2} ;\;M'N'\geq
\frac{P'^{2}+Q'^{2}}{2}.
\end{equation}
Then, since $M>0$; $M'>0$, (\ref{6.23}) gives:
\begin{equation}\label{6.24}
NM'\geq \frac{M'}{2M}(P^{2}+Q^{2}) ;\;N'M\geq
\frac{M}{2M'}(P'^{2}+Q'^{2});
\end{equation}
So if we consider (\ref{6.21}) in which
$\widetilde{V}=(\widetilde{V}^{\alpha})$ and
$\widetilde{W}=(\widetilde{V}^{\alpha})$ are future pointing, we
deduce from (\ref{6.24}) that:
\begin{equation}\label{6.25}
\widetilde{\overline{\tau}}_{\alpha\beta}\widetilde{V}^{\alpha}\widetilde{W}^{\beta}\geq
\frac{1}{2MM'}\Big[M'^2(P^{2}+Q^{2})+M^{2}(P'^{2}+Q'^{2})+2MM'(PP'+QQ')\Big].
\end{equation}
Considering the term in the square bracket in the r.h.s of
(\ref{6.25}) as a quadratic polynomial in M, its discriminant is:
\begin{equation*}
    \Delta=M'^{2}[(PP'+QQ')^{2}-(P^{2}+Q^{2})(P'^{2}+Q'^{2})].
\end{equation*}
But by the properties of the usual scalar product in
$\mathbb{R}^{2}$:
\begin{equation*}
(PP'+QQ')^{2}\leq(P^{2}+Q^{2})(P'^{2}+Q'^{2})
\end{equation*}
then $\Delta \leq 0$ and the r.h.s of (\ref{6.25}) remains positive,
and so is the l.h.s. Then in this case, (\ref{6.8}) follows from
(\ref{6.19}).

\underline{$2^{\circ})$ Consider the form (\ref{6.11})}.\\
A direct calculation using (\ref{6.9}) and (\ref{6.12}) gives this
time:
\begin{equation*}
\widetilde{F}^{\lambda\mu}\widetilde{F}_{\lambda\mu}=0
;\;\;\widetilde{F}_{\alpha\lambda}\widetilde{F}^{\;\;\;\lambda}_{\beta}=\frac{C^{2}}{4}l_{\alpha}l_{\beta}.
\end{equation*}
Then (\ref{6.13}) gives
\begin{equation}\label{6.26}
   \widetilde{\tau}_{\alpha\beta}=\frac{C^{2}}{4}l_{\alpha}l_{\beta}
\end{equation}
so if $\widetilde{V}=(\widetilde{V}^{\alpha})$ and
$\widetilde{W}=(\widetilde{V}^{\alpha})$ are two future pointing
time-like vectors, using the decomposition (\ref{6.16}) and
(\ref{6.17}), we obtain from (\ref{6.9}) and (\ref{6.12}):
\begin{equation}\label{6.27}
\widetilde{\tau}_{\alpha\beta}\widetilde{V}^{\alpha}\widetilde{W}_{\beta}
=\frac{C^2}{4}(l_{\alpha}\widetilde{V}^{\alpha})(l_{\beta}\widetilde{W}^{\beta})
=\frac{C^2}{4}(-N)(-N')=\frac{C^2}{4}NN'.
\end{equation}
But (\ref{6.23}) which holds since $\widetilde{V}$ and
$\widetilde{W}$ are future pointing, gives $N\geq 0$ and  $N'\geq
0$. Hence, by (\ref{6.27})
$\widetilde{\tau}_{\alpha\beta}\widetilde{V}^{\alpha}\widetilde{W}^{\beta}
\geq0$. This completes the proof of Proposition~\ref{prop6.2}.
\end{proof}
\begin{theorem}\label{ttheo6.3}
\hspace*{2cm}\\ The global solution of the coupled
Einstein-Maxwell-Scalar Field satisfies:\begin{enumerate}
    \item[$1^{\circ})$]the weak and the dominant energy conditions.
    \item[$2^{\circ})$]the strong energy condition if
    $ \Lambda \geq \frac{(H(0))^{2}}{2}.$
\end{enumerate}
\end{theorem}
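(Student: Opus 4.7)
For part $1^\circ$ I would split the matter tensor $\widetilde T_{\alpha\beta}+\widetilde\tau_{\alpha\beta}+\rho\widetilde u_\alpha\widetilde u_\beta$ into its three pieces and check the dominant energy condition for each separately; summing gives DEC, and specializing $\widetilde W=\widetilde V$ gives WEC. The Maxwell contribution is exactly Proposition~\ref{prop6.2}. For the dust piece, $\rho\geq 0$ (by \eqref{1.32bis} and the positivity preserved by \eqref{1.32}) while $\widetilde u_\alpha\widetilde V^\alpha\leq 0$ and $\widetilde u_\alpha\widetilde W^\alpha\leq 0$ by Proposition~\ref{prop6.1}, so $\rho(\widetilde u\cdot\widetilde V)(\widetilde u\cdot\widetilde W)\geq 0$.

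The delicate piece is the scalar field. Because $\Phi=\Phi(t)$, we have $\widetilde V^\alpha\widetilde\nabla_\alpha\Phi=V^0\dot\Phi$, and both factors are non-negative since $\widetilde V$ is future-pointing and $\dot\Phi\geq 0$ by \eqref{1.51}. Writing $\widetilde V\cdot\widetilde W=-\mu$ with $\mu\geq 0$ (Proposition~\ref{prop6.1}), a direct expansion of \eqref{1.5} yields
\[
\widetilde T_{\alpha\beta}\widetilde V^\alpha\widetilde W^\beta
= V^0W^0\dot\Phi^2+\tfrac{\mu}{2}\bigl(m^2\Phi^2-\dot\Phi^2\bigr)
= \tfrac{\mu}{2}m^2\Phi^2+\tfrac{\dot\Phi^2}{2}\bigl(V^0W^0+g_{ij}V^iW^j\bigr),
\]
and the parenthesized factor in the second summand is non-negative by the Cauchy--Schwarz inequality already embedded in the proof of Proposition~\ref{prop6.1} (see \eqref{6.6}--\eqref{6.7}). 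Hence every term is $\geq 0$, proving DEC for $\widetilde T$ and completing part $1^\circ$.

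For part $2^\circ$ I would rewrite the Einstein equations in trace-reversed form, $\widetilde R_{\alpha\beta}=8\pi\bigl(\widetilde T^{\mathrm{tot}}_{\alpha\beta}-\tfrac12\widetilde T^{\mathrm{tot}}\widetilde g_{\alpha\beta}\bigr)+\Lambda\widetilde g_{\alpha\beta}$, and contract with $\widetilde V^\alpha\widetilde V^\beta$. The three matter traces are computed from \eqref{1.23}, the antisymmetry of $\widetilde F$, and $\widetilde u^\alpha\widetilde u_\alpha=-1$: they are $\dot\Phi^2-2m^2\Phi^2$, $0$, and $-\rho$ respectively. Together with Proposition~\ref{prop6.2}, the scalar-field computation from part~$1^\circ$, the reverse Cauchy--Schwarz bound $(\widetilde u\cdot\widetilde V)^2\geq|\widetilde V|^2_{t.l.}$, and $(V^0)^2\geq|\widetilde V|^2_{t.l.}$, a routine algebraic manipulation gives the clean lower bound
\[
\widetilde R_{\alpha\beta}\widetilde V^\alpha\widetilde V^\beta\;\geq\;\bigl[\,16\pi U+4\pi\rho-4\pi m^2\Phi^2-\Lambda\,\bigr]|\widetilde V|^2_{t.l.}.
\]
The main obstacle, and where the hypothesis on $\Lambda$ must be used, is to show that the bracket on the right is $\geq 0$. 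For this I would return to identity \eqref{2.20}, rewritten as a formula expressing $16\pi U$ in terms of $H^2$, $\Lambda$, $m^2\Phi^2$, and non-negative quantities, and then use the monotonicity and the two-sided bound \eqref{2.15} on $H$. The condition $\Lambda\geq (H(0))^2/2$ is designed precisely so that, combined with $H^2\geq 3\Lambda+12\pi m^2(\Phi^0)^2$ and the fact that $\Phi$ is non-decreasing, $16\pi U+4\pi\rho$ dominates $\Lambda+4\pi m^2\Phi^2$; verifying this algebraic inequality cleanly is the real work of the proof.
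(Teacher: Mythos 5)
Your part $1^{\circ}$ is correct and is essentially the paper's own argument: the same three-way split of the source tensor, Proposition~\ref{prop6.2} for the Maxwell piece, Proposition~\ref{prop6.1} for the dust piece, and for the scalar field exactly the regrouping
$\widetilde{T}_{\alpha\beta}\widetilde{V}^{\alpha}\widetilde{W}^{\beta}
=\tfrac{1}{2}\dot{\Phi}^{2}(\widetilde{V}^{0}\widetilde{W}^{0}+g_{ij}\widetilde{V}^{i}\widetilde{W}^{j})
+\tfrac{1}{2}m^{2}\Phi^{2}(\widetilde{V}^{0}\widetilde{W}^{0}-g_{ij}\widetilde{V}^{i}\widetilde{W}^{j})$
that the paper uses in \eqref{6.30}--\eqref{6.32}, with both parentheses controlled by the Cauchy--Schwarz estimates \eqref{6.6}--\eqref{6.7}. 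Nothing to change there.

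Part $2^{\circ}$, however, has a fatal gap, and it is precisely at the step you flag as ``the real work.'' Your chain of estimates is itself correct and does yield
$\widetilde{R}_{\alpha\beta}\widetilde{V}^{\alpha}\widetilde{V}^{\beta}\geq
\bigl[16\pi U+4\pi\rho-4\pi m^{2}\Phi^{2}-\Lambda\bigr]\,\abs{\widetilde{V}}^{2}$,
but the bracket is \emph{not} non-negative, so no amount of algebra will close the argument. Indeed, by Theorem~\ref{ttheo4.1} one has $U=\mathcal{O}(e^{-2\gamma t})$ and $\rho=\mathcal{O}(e^{-2\gamma t})$ (see \eqref{4.6}, \eqref{4.7}), while $\Phi^{2}\rightarrow L\geq(\Phi^{0})^{2}$ by \eqref{4.11}; hence your bracket tends to $-(\Lambda+4\pi m^{2}L)=-C_{0}$, which is strictly negative by \eqref{4.35} (and all the more so under your hypothesis $\Lambda\geq (H(0))^{2}/2>0$). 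So the inequality ``$16\pi U+4\pi\rho$ dominates $\Lambda+4\pi m^{2}\Phi^{2}$'' is false for large $t$, and \eqref{2.20} combined with \eqref{2.15} cannot rescue it: eliminating $m^{2}\Phi^{2}$ via the Hamiltonian constraint only produces $\Lambda-\tfrac{2}{3}H^{2}$ plus non-negative terms, and $\Lambda\geq(H(0))^{2}/2$ gives $\Lambda-\tfrac{2}{3}H^{2}\geq -H^{2}/6$, which is the wrong sign. The loss occurs because you bound the two halves of the trace-reversed source separately; the potential term $m^{2}\Phi^{2}$ then enters with net coefficient $-4\pi$, which is exactly the well-known fact that a scalar field with potential violates the strong energy condition pointwise in the matter variables. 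The paper's proof never estimates $4\pi\widetilde{T}^{\mathrm{tot}}-\Lambda$ through the explicit matter expressions: starting from \eqref{6.33} it identifies the coefficient of $\widetilde{V}^{\alpha}\widetilde{V}_{\alpha}$ as $\tfrac{1}{2}\widetilde{R}-\Lambda$ and computes, via \eqref{6.36} and the $00$-Einstein equation \eqref{6.37}, that $\tfrac{1}{2}\widetilde{R}-\Lambda=-2\Lambda-8\pi(T_{00}+\tau_{00}+\rho u_{0}^{2})+R-\partial_{t}H+H^{2}\leq -2\Lambda+H^{2}\leq -2\Lambda+(H(0))^{2}\leq 0$, using $R\leq0$, the monotonicity $\partial_{t}H\geq0$ from \eqref{2.22}, and $H^{2}\leq (H(0))^{2}$ from \eqref{2.15}. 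It is this dynamical input (the evolution equation for $H$), not the Hamiltonian constraint alone, that you must bring in; you should redo part $2^{\circ}$ along those lines.
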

\begin{proof}
\hspace*{2cm}\\
\begin{enumerate}
    \item[$1^{\circ})$]we first prove that the solution satisfies
    the dominant energy condition (\ref{6.3}). Let $\widetilde{V}=(\widetilde{V}^{\alpha})$ and
$\widetilde{W}=(\widetilde{W}^{\alpha})$ be two future pointing
time-like vectors. By (\ref{6.8}) we have
\begin{equation}\label{6.28}
\widetilde{\tau}_{\alpha\beta}\widetilde{V}^{\alpha}\widetilde{W}^{\beta}\geq
0.
\end{equation}
Next we have, since $\widetilde{u}=(\widetilde{u}^{\alpha})$ is a
time-like future pointing vector and using (\ref{6.4}):
$$(\rho\widetilde{u}_{\alpha}\widetilde{u}_{\beta})(\widetilde{V}^{\alpha}\widetilde{W}^{\beta})=
\rho(\widetilde{u}_{\alpha}\widetilde{V}^{\alpha})(\widetilde{u}_{\beta}\widetilde{W}^{\beta})\geq
0 ;$$ so:
\begin{equation}\label{6.29}
(\rho\widetilde{u}_{\alpha}\widetilde{u}_{\beta})(\widetilde{V}^{\alpha}\widetilde{W}^{\beta})\geq
0.
\end{equation}
Now the expression (\ref{1.23}) of $\widetilde{T}_{\alpha\beta}$
gives:
\begin{eqnarray*}
\widetilde{T}_{\alpha\beta}\widetilde{V}^{\alpha}\widetilde{W}^{\beta}
   &=&\widetilde{T}_{00}\widetilde{V}^{0}\widetilde{W}^{0}+
   \widetilde{T}_{ij}\widetilde{V}^{i}\widetilde{W}^{j}  \\
   &=&\frac{1}{2}(\dot{\Phi}^2+m^2\Phi^2)\widetilde{V}^{0}\widetilde{W}^{0}+
\frac{1}{2}(\dot{\Phi}^2-m^2\Phi^2)g_{ij}\widetilde{V}^{i}\widetilde{W}^{j}
\end{eqnarray*}
then:
\begin{equation}\label{6.30}
\widetilde{T}_{\alpha\beta}\widetilde{V}^{\alpha}\widetilde{W}^{\beta}
=\frac{1}{2}\dot{\Phi}^2(\widetilde{V}^{0}\widetilde{W}^{0}+g_{ij}\widetilde{V}^{i}\widetilde{W}^{j})
+\frac{1}{2}m^2\Phi^2(\widetilde{V}^{0}\widetilde{W}^{0}-g_{ij}\widetilde{V}^{i}\widetilde{W}^{j}).
\end{equation}
But by (\ref{6.4}) which is equivalent to (\ref{6.5}), the last term
in the r.h.s of (\ref{6.30}) is positive.\\
Next, since $g$ is a scalar product and $\widetilde{V}$,
$\widetilde{W}$ are future pointing vectors, we deduce from
(\ref{6.6}) and (\ref{6.7}), that
$$|g_{ij}\widetilde{V}^{i}\widetilde{W}^{j}|\leq\widetilde{V}^{0}\widetilde{W}^{0}
;$$ then
\begin{equation}\label{6.31}
g_{ij}\widetilde{V}^{i}\widetilde{W}^{j}\geq
-\widetilde{V}^{0}\widetilde{W}^{0}.
\end{equation}
(\ref{6.31}) then shows that the first term in the r.h.s of
(\ref{6.30}) is also positive. Consequently,
\begin{equation}\label{6.32}
\widetilde{T}_{\alpha\beta}\widetilde{V}^{\alpha}\widetilde{W}^{\beta}\geq
0
\end{equation}
(\ref{6.3}) then follows from (\ref{6.28}), (\ref{6.29}) and
(\ref{6.32}). Hence the dominant energy condition (\ref{6.3}) is
satisfied.
\begin{enumerate}
    \item [$\bullet$]Setting in (\ref{6.3}),
    $\widetilde{W}=\widetilde{V}$, we have (\ref{6.31}). Hence the
    weak energy condition (\ref{6.1}) is satisfied.
\end{enumerate}
    \item[$2^{\circ})$]We now prove the strong energy condition
    (\ref{6.2}).
\begin{enumerate}
    \item [$\bullet$]Let $\widetilde{V}=(\widetilde{V}^{\alpha})$  a  future pointing
time-like vector. We deduce from the Einstein equations (\ref{1.2})
that:
\end{enumerate}
\begin{equation}\label{6.33}
\widetilde{R}_{\alpha\beta}\widetilde{V}^{\alpha}\widetilde{V}^{\beta}
=(\frac{1}{2}\widetilde{R}-\Lambda)\widetilde{V}^{\alpha}\widetilde{V}_{\alpha}
+8\pi(\widetilde{T}_{\alpha\beta}+\widetilde{\tau}_{\alpha\beta}+
\rho
\widetilde{u}_{\alpha}\widetilde{u}_{\beta})\widetilde{V}^{\alpha}\widetilde{V}^{\beta}.
\end{equation}
Since (\ref{6.1}) is satisfied, the second term in the r.h.s of
(\ref{6.33}) is positive. In the first we have :
\begin{equation}\label{6.34}
\widetilde{R}=\widetilde{g}^{\alpha\beta}\widetilde{R}_{\alpha\beta}=
\widetilde{g}^{00}\widetilde{R}_{00}+
g^{ij}\widetilde{R}_{ij}=-\widetilde{R}_{00}+g^{ij}\widetilde{R}_{ij}.
\end{equation}
Recall the classical formula linking  $\widetilde{R}_{ij}$ and
$R_{ij}$:
\begin{equation*}
\widetilde{R}_{ij}=R_{ij}-\partial_{t}k_{ij}+Hk_{ij}-2k_{il}k^{l}_{j}.
\end{equation*}
Contracting by $g^{ij}$ yields:
\begin{equation}\label{6.35}
    g^{ij}\widetilde{R}_{ij}=R-g^{ij}\partial_{t}k_{ij}+H^2-2k_{ij}k^{ij}
\end{equation}
write:
$g^{ij}\partial_{t}k_{ij}=\partial_{t}(g^{ij}k_{ij})-k_{ij}\partial_{t}g^{ij}=\partial_{t}H-2k_{ij}k^{ij}$,
(\ref{6.35}) then gives:
\begin{equation}\label{6.36}
 g^{ij}\widetilde{R}_{ij}=R-\partial_{t}H+H^{2}.
\end{equation}
Now setting in the Einstein equations $\alpha=\beta=0$ yields:
\begin{equation}\label{6.37}
  \widetilde{R}_{00}=-\frac{\widetilde{R}}{2}+\Lambda+8\pi (\widetilde{T}_{00}+\widetilde{\tau}_{00}+
\rho \widetilde{u}^{2}_{0}).
\end{equation}
(\ref{6.34}) gives, using (\ref{6.35}) using (\ref{6.37}):
\begin{equation*}
\widetilde{R}=\frac{\widetilde{R}}{2}-\Lambda-8\pi(\widetilde{T}_{00}+\widetilde{\tau}_{00}+
\rho \widetilde{u}^{2}_{0})+R-\partial_{t}H+H^{2}.
\end{equation*}
From where we deduce, using
$\widetilde{T}_{00}+\widetilde{\tau}_{00}+ \rho
\widetilde{u}^{2}_{0}\geq 0$, $R<0$, $-\partial_{t}H\leq 0$, (given
by (\ref{2.22})):
\begin{equation}\label{6.38}
\frac{\widetilde{R}}{2}-\Lambda \leq -2\Lambda+H^2.
\end{equation}
But by (\ref{2.15}) we have: $H^{2}\leq (H(0))^2$. Hence
(\ref{6.38}) gives:
\begin{equation}\label{6.39}
  \frac{\widetilde{R}}{2}-\Lambda\leq -2\Lambda+(H(0))^2
\end{equation}
but by hypothesis: $-2\Lambda +(H(0))^{2}\leq 0$; hence:
\begin{equation}\label{6.40}
\frac{\widetilde{R}}{2}-\Lambda \leq 0;
\end{equation}
since $\widetilde{V}^{\alpha}\widetilde{V}_{\alpha}< 0$,
(\ref{6.40}) implies that the first term in the r.h.s of
(\ref{6.33}) is positive; we conclude that; we have:
$\widetilde{R}_{\alpha\beta}\widetilde{V}^{\alpha}\widetilde{V}^{\beta}\geq
0$. This completes the proof of Theorem~\ref{ttheo6.3}.
\end{enumerate}
\end{proof}
\section*{Concluding Remarks}
In our future investigations, we will take into account the aspect
"distribution", of the charged particles. For this purpose we will
couple the Vlasov (resp. Boltzmann) equation in the
collisionless(resp.collisional) case.
\newpage

\section{Appendices}\label{para7}
\subsection*{$A_{1}$. \underline{Proof of formula (\ref{1.22})}}
We use the Codazzi equations which write:
\begin{equation}\label{7.1}
    \widetilde{R}_{\lambda i,
    jl}\widetilde{n}^{\lambda}=-\nabla_{l}k_{ij}+\nabla_{j}k_{il}.
\end{equation}
But $\widetilde{n}=(1, 0, 0, 0)$ so (\ref{7.1}) gives:
\begin{equation}\label{7.2}
    \widetilde{R}_{0 i,
    jl}=-\nabla_{l}k_{ij}+\nabla_{j}k_{il}.
\end{equation}
Now the curvature tensor, on $(M, \widetilde{g})$ writes, see
\cite{2}, p.240:
\begin{equation}\label{7.3}
\widetilde{R}^{\lambda}_{\;\;\alpha, \beta\mu}=
\widetilde{e}_{\beta}(\gamma_{\alpha\mu}^{\lambda})-\widetilde{e}_{\mu}(\gamma_{\beta\alpha}^{\lambda})
+(\gamma_{\beta\nu}^{l}\gamma_{\mu\alpha}^{\nu})-(\gamma_{\mu\nu}^{\lambda}\gamma_{\beta\alpha}^{\nu})
-\widetilde{C}^{\nu}_{\beta\mu}\gamma_{\nu\alpha}^{\lambda}.
\end{equation}
In particular, taking in (\ref{7.3}): $\lambda=l$; $\alpha=j$;
$\beta=i$; $\mu=0$, we obtain, using (\ref{1.21}), (\ref{1.12}),
(\ref{1.13}), (\ref{1.18}) and (\ref{1.19}):
\begin{equation}\label{7.4}
\widetilde{R}^{l}_{ \;j,
    i0}=-\frac{d}{dt}(\gamma^{l}_{ij})-\nabla_{i}k_{j}^{l}.
\end{equation}
Now (\ref{7.2}) gives, using the symmetry properties of
$\widetilde{R}^{\lambda}_{\;\alpha, \beta\mu}$:
\begin{equation}\label{7.5}
\widetilde{R}^{l}_{ \;j, i0}=-\nabla^{l}k_{ij}+\nabla_{j}k_{i}^{l}.
\end{equation}
Equalize the two values of $\widetilde{R}^{l}_{ \;j, i0}$ provided
by (\ref{7.4}), (\ref{7.5}) to obtain (\ref{1.21}).
\subsection*{$A_{2}$. \underline{Proof of Lemma~\ref{llem1}}}
\begin{enumerate}
    \item[$1^{\circ})$]\underline{Proof of (\ref{2.1})}.\\
First deduce from the evolution (\ref{1.46}) in $g_{ij}$; using
$g^{il}g_{jl}=\delta^{i}_{j}$ that:
\begin{equation}\label{7.6}
    \frac{d g^{ij}}{dt}=2 k^{ij}.
\end{equation}
Now since $H=g^{ij}k_{ij}$, (\ref{7.6}) gives:
\begin{equation}\label{7.7'}
    \frac{d H}{dt} =g^{ij}\frac{d k_{ij}}{dt}+2k_{ij}k^{ij}.
\end{equation}
A direct calculation using the evolution equation (\ref{1.47}) in
$k_{ij}$ and the relation
$\widetilde{g}^{\alpha\beta}\widetilde{\tau}_{\alpha\beta}=0$ which
implies $g^{ij}\tau_{ij}=\tau_{00}$, gives
\begin{equation}\label{7.7}
g^{ij}\frac{d k_{ij}}{dt}=R+H^2-2k_{ij}k^{ij}+4\pi
g^{ij}(T_{ij}+\rho u_{i}u_{j})-12\pi(T_{00}+\rho u_0^2)-8\pi
\tau_{00}-3\Lambda.
\end{equation}
then (\ref{2.1}) follows from (\ref{7.6}) and (\ref{7.7})
    \item[$2^{\circ})$]\underline{Proof of (\ref{2.2})}\\
We have $\frac{d H^{2}}{dt}=2H \frac{dH}{dt}$; then use (\ref{1.22})
to obtain (\ref{2.2}).

\item[$3^{\circ})$]\underline{Proof of (\ref{2.3})}\\
We have
\begin{equation}\label{7.8}
    \frac{d(k_{ij}k^{ij})}{dt}=k_{ij}\frac{d k^{ij}}{dt}+k^{ij}\frac{d
    k_{ij}}{dt}.
\end{equation}
We also have: $k^{ij}=g^{il}g^{jm}k_{lm}$; so, we have, using
(\ref{7.6})
\begin{equation}\label{7.9}
    \frac{d k^{ij}}{dt}=4k^{il}k_{l}^{j}+g^{il}g^{jm}\frac{d k_{lm}}{dt}
\end{equation}
(\ref{7.9}) gives by a direct calculation, using the evolution
equation (\ref{1.47}) for $k_{lm}$:
\begin{eqnarray}
 \nonumber \frac{d k^{ij}}{dt} &=& R^{ij}+Hk^{ij}+2k_{l}^{j}k^{il}-8\pi(T^{ij}+
\tau^{ij}+\rho  u^{i} u^{j}) \\
   &+&4\pi g^{ij}\Big[-T_{00}-\rho u^{2}_{0}+g^{lm}(T_{lm}+\rho u_{l}u_{m})\Big]
   -\Lambda g^{ij}.    \label{7.10}
\end{eqnarray}
(\ref{2.3}) then follows from (\ref{7.8}), (\ref{7.10}), using once
more the evolution equation (\ref{1.47}) for $k_{ij}$ to express the
last term in (\ref{7.8}).

    \item[$4^{\circ})$]\underline{Proof of (\ref{2.4}) and (\ref{2.5})}\\
We use the conservation laws:
\begin{equation}\label{7.11}
    \widetilde{\nabla}_{\alpha}(\widetilde{T}^{\alpha\beta}+\widetilde{\tau}^{\alpha\beta}+\rho
\widetilde{u}^{\alpha}\widetilde{u}^{\beta} )=0
\end{equation}
(\ref{7.11}) writes, using the formulae (\ref{1.21})
\begin{equation}\label{7.12}
    \widetilde{e}_{\alpha}(\widetilde{T}^{\alpha\beta}+\widetilde{\tau}^{\alpha\beta}+\rho
\widetilde{u}^{\alpha}\widetilde{u}^{\beta})+\widetilde{\gamma}_{\alpha\lambda}^{\alpha}
(\widetilde{T}^{\lambda\beta}+\widetilde{\tau}^{\lambda\beta}+\rho
\widetilde{u}^{\lambda}\widetilde{u}^{\beta})
+\widetilde{\gamma}_{\alpha \lambda}^{\beta}(\widetilde{T}^{\alpha
\lambda}+\widetilde{\tau}^{\alpha \lambda}+\rho
\widetilde{u}^{\alpha}\widetilde{u}^{\lambda})=0.
\end{equation}
It shows useful to write (\ref{7.10}) in the form:
\begin{equation*}
\widetilde{e}_{0}(\widetilde{T}^{0\beta}+\widetilde{\tau}^{0\beta}+\rho
\widetilde{u}^{0}\widetilde{u}^{\beta})+e_{l}
(\widetilde{T}^{l\beta}+\widetilde{\tau}^{l\beta}+\rho
u^{l}\widetilde{u}^{\beta}) +\widetilde{\gamma}_{\alpha
0}^{\alpha}(\widetilde{T}^{0 \beta}+\widetilde{\tau}^{0\beta}+\rho
\widetilde{u}^{0}\widetilde{u}^{\beta})
\end{equation*}
\begin{equation}\label{7.13}
+\widetilde{\gamma}_{\alpha
i}^{\alpha}(\widetilde{T}^{i\beta}+\widetilde{\tau}^{i\beta}+\rho
\widetilde{u}^{i}\widetilde{u}^{\beta})+\widetilde{\gamma}_{\alpha
0}^{\beta} (\widetilde{T}^{\alpha 0}+\widetilde{\tau}^{\alpha
0}+\rho \widetilde{u}^{\alpha}\widetilde{u}^{0})
+\widetilde{\gamma}_{\alpha j}^{\beta}(\widetilde{T}^{\alpha
j}+\widetilde{\tau}^{\alpha j}+\rho
\widetilde{u}^{\alpha}\widetilde{u}^{j})=0.
\end{equation}
Then: set in (\ref{7.13}) $\beta=0$ and use (\ref{1.19}) to obtain
(\ref{2.4}).\\
\hspace*{-2cm}set in (\ref{7.13}) $\beta=j$ and use (\ref{1.19}) to
obtain (\ref{2.5}).

    \item[$5^{\circ})$]\underline{Proof of (\ref{2.6})}\\
\end{enumerate}
We have $R=g^{ij}R_{ij}$, where $R_{ij}$ given by (\ref{1.55}),
then: $$\frac{d R}{dt}=R_{ij}\frac{d g^{ij}}{dt}+g^{ij}\frac{d
R_{ij}}{dt}.
$$ So we have, using (\ref{7.6})
\begin{equation}\label{7.14}
\frac{d R}{dt}=2 R_{ij}k^{ij}+g^{ij}\frac{d R_{ij}}{dt}.
\end{equation}
Now we verifies by a direct calculation, using the formula
(\ref{1.21}) and the expression (\ref{1.55}) of $R_{ij}$ that:
\begin{equation}\label{7.15}
    \frac{d R_{ij}}{dt}=\nabla_{l}\Big(\frac{d \gamma^{l}_{ij}}{dt}\Big)
\end{equation}
(\ref{7.15})gives, using formula (\ref{1.22}) and since
$H=g^{ij}k_{ij}$ depends only on $t$:
\begin{equation}\label{7.16}
    g^{ij}\frac{d R_{ij}}{dt}=-2\nabla_{l}\nabla_{i}k^{il}
\end{equation}
(\ref{2.6}) then follows from (\ref{7.14}) and (\ref{7.16}).
\subsection*{$A_{3}$. \underline{Proof of (\ref{2.8})}}
It is easily seen, using the definition of $A_{j}$ in
Lemma~\ref{llem2}, that
\begin{equation}\label{7.17}
    A_{l}=g_{jl}B^{j}
\end{equation}
where:
\begin{equation}\label{7.18}
    B^{j}=\nabla_{i}k^{ij}-8\pi(T^{0j}+\tau^{0j}+\rho
u^{0}u^{j}).
\end{equation}
We then have, differentiating (\ref{7.17}) and using equation
(\ref{1.46}) in $g_{ij}$:
\begin{equation}\label{7.19}
    \frac{d A_l}{dt} = -2k_{jl}B^j+g_{jl}\frac{d B^j}{dt}.
\end{equation}
Now we have by (\ref{7.18}):
\begin{equation}\label{7.20}
 \frac{d B^{j}}{dt}=\frac{d}{dt}(\nabla_{i}k^{ij})-8\pi\frac{d}{dt}(T^{0j}+\tau^{0j}+\rho u^{0}u^{j})
\end{equation}
But
$\nabla_{i}k^{ij}=\partial_{i}k^{ij}+\gamma^{i}_{il}k^{lj}+\gamma_{il}^{j}k^{il}$;
then we have:
\begin{equation}\label{7.21}
 \frac{d(\nabla_{i}k^{ij})}{dt}=\partial_{i}(\frac{d k^{ij}}{dt})+ \gamma_{il}^{i}\frac{dk^{lj}}{dt}
 +\gamma_{il}^{j}(\frac{d k^{il}}{dt})
   + (\frac{d \gamma_{il}^{i}}{dt}) k^{lj}+(\frac{d
   \gamma^{j}_{il}}{dt})k^{il}.
\end{equation}
Now use (\ref{7.10}) to express $\frac{d k^{ij}}{dt}$, (\ref{1.22})
to express $\frac{d \gamma_{il}^{j}}{dt}$ and obtain:
\begin{eqnarray}
\nonumber\frac{d(\nabla_{i}k^{ij})}{dt}
&=&\nabla_{i}R^{ij}+H\nabla_{i}k^{ij}+2(\nabla_{i}k^{il})k_{l}^{j}+2k^{il}\nabla_{i}k_{l}^{j}-8\pi\nabla_{i}(T^{ij}+\tau^{ij}+\rho  u^{i}u^{j})  \\
            &+&(\nabla^{j}k_{il}-\nabla_{i}k^{j}_{l}+\nabla_{l}k_{i}^{j})k^{il}   \label{7.22}
\end{eqnarray}
(\ref{7.20}) then gives using (\ref{7.22}) to express the first
term, (\ref{2.5}) to express the second term, and taking into
account the expression (\ref{7.18}) of $B^{j}$.
\begin{equation}\label{7.23}
    \frac{dB^{j}}{dt}=HB^{j}+2k_{i}^{j}B^{i}+\nabla_{i}R^{ij}
\end{equation}
But by the Bianchi identities:
\begin{equation*}
    \nabla_{i}R^{ij}=\frac{1}{2}\nabla^{i}R=g^{ij}\nabla_{j} R=0
\end{equation*}
since R depends only on $t$.\\
Finally (\ref{2.8}) follows from (\ref{7.19}), (\ref{7.23}) and
(\ref{7.17})

\end{document}